\documentclass[twocolumn,nospthms]{svjour3}          
\smartqed  
%

\usepackage[pdftex]{graphicx}
\usepackage{amsmath}
\usepackage{amssymb}
\usepackage{amsthm}
\sloppy
\usepackage{fullpage}
\frenchspacing

\usepackage[utf8]{inputenc}
\usepackage[T1]{fontenc}
\usepackage{hyperref}
\usepackage{todonotes}

\newtheorem{lemma}{Lemma}
\newtheorem{theorem}{Theorem}
\newtheorem{definition}{Definition}
\newtheorem{conjecture}{Conjecture}

\newcommand{\mK}{\mathcal{K}}
\newcommand{\mB}{\mathcal{B}}
\newcommand{\mP}{\mathcal{P}}
\newcommand{\mR}{\mathbb{R}}

\newcount\colveccount
\newcommand*\colvec[1]{
        \global\colveccount#1
        \begin{pmatrix}
        \colvecnext
}
\def\colvecnext#1{
        #1
        \global\advance\colveccount-1
        \ifnum\colveccount>0
                \\
                \expandafter\colvecnext
        \else
                \end{pmatrix}
        \fi
}

\usepackage{epstopdf}
\usepackage{subcaption}

\begin{document}

\title{Zeno chattering of rigid bodies with multiple point contacts}

\author{Tam\'as Baranyai \and P\'eter L. V\'arkonyi}
\institute{T. Baranyai \at
              Department of Mechanics, Materials and Structures, Budapest University of Technology and Economics, Budapest, Hungary \\
              \email{baranyai@szt.bme.hu}           
           \and
           P. L. Várkonyi \at
             Department of Mechanics, Materials and Structures, Budapest University of Technology and Economics, Budapest, Hungary \\
              \email{vpeter@mit.bme.hu}
}
\date{2017.04.10}

\maketitle
\begin{abstract}Ideally rigid objects establish sustained contact with one another via complete chatter (a.k.a. Zeno behavior), i.e. an infinite sequence of collisions accumulating in finite time. Alternatively, such systems may also exhibit a finite sequence of collisions followed by separation (sometimes called incomplete chatter). Earlier works concerning the chattering of slender rods in two dimensions determined the exact range of model parameters, where complete chatter is possible. We revisit and slightly extend these results. Then the bulk of the paper examines the chattering of three-dimensional objects with multiple points hitting an immobile plane almost simultaneously. In contrast to rods, the motion of these systems is complex, nonlinear, and sensitive to initial conditions and model parameters due to the possibility of various impact sequences. These difficulties explain why we model this phenomenon as a nondeterministic discrete dynamical system. We simplify the analysis by assuming linearized kinematics, frictionless interaction, by neglecting the effect of external forces, and by investigating objects with rotational symmetry. Application and extension of the theory of common invariant cones of multiple linear operators enable us to find sufficient conditions of the existence of initial conditions, which give rise to complete chatter. Additional analytical and numerical investigations predict that our sufficient conditions are indeed exact, moreover solving a simple eigenvalue problem appears to be enough to judge the possibility of complete chatter. 
\keywords{contact dynamics \and  chattering \and  Zeno behavior \and  common invariant cone}
\end{abstract}

%

\section{Introduction}
Ideally rigid objects establish sustained contact with one another via an infinite sequence of collisions accumulating in finite time. This phenomenon is commonly referred to as complete chatter \cite{nordmark2009simulation} or Zeno behaviour \cite{ames2006there}, we will adopt the first one of these two names. Complete chatter can be observed in various situations like that of a ball bouncing on a horizontal surface \cite{luck1993bouncing}, in the extensively studied problem of rocking blocks \cite{Housner,zhang2001rocking}, the motion of Newton's craddle \cite{donahue2008newton,ceanga2001new} and of Euler's Disk immediately before reaching its final singularity \cite{moffatt2000euler,le2005dynamics}. This phenomenon has important implications to the Lyapunov stability analysis of rigid bodies with unilateral contacts \cite{varkonyi2016lyapunov}. Complete chatter also plays a significant role in the theory of hybrid dynamical systems \cite{shen2005linear,goebel2008zeno,ames2007sufficient} and in optimal control problems \cite{borisov2000fuller}.

The simplest example of the bouncing ball reveals that an impact model, and a model of continuous dynamics are both necessary to analyze chattering. The prototypical example of a falling rod additionally shows that the chattering of objects with multiple potential impact locations (e.g. the two endpoints of the rod) is a complex phenomenon because the locations of subsequent impacts on the object may follow many different periodic or possibly chaotic patterns.

The falling rod problem was first studied in detail by Goyal et. al. \cite{goyal1,goyal2}. Their model was simplified via linearisation of the rotational kinematics during impact-free motion, and by neglection of the effect of external forces (including gravity). The second assumption is too crude for the analysis of the bouncing ball, which would leave the surface immediately after the first impact without the effect of gravity. Nevertheless it is plausible in the case of the rod problem, if the two endpoints of the rod hit the floor nearly simultaneously, i.e. if the time intervals between subsequent impacts are very short. Or \cite{or2007frictional}  found that this assumption yields  correct results for most combinations of physical parameters.

 Goyal et. al. studied the number of collisions before the rod leaves the surface as a function of system parameters and initial conditions. Their main result was to identify two distinct regimes in parameter space. In one of these regimes, the rod leaves the surface after a finite number of collisions, whereas in the other, it may undergo complete chatter, after which the rod stops in contact with the floor. By varying the parameters of the sytem, a sudden transition between these two, qualitatively different behaviours can be observed.

In this paper, we extend the analysis of \cite{goyal1,goyal2} to objects with more than two potential impact locations. We will study in detail the motion of three-dimensional objects hitting a flat surface, under the assumption that the  contact points form a regular $n$-gon or the affine image of a regular $n$-gon. We develop several sufficient conditions of the possibilty of complete chatter. By using extensive numerical simulation, and semi-analytic investigation of the impact maps, we form several conjectures with respect to the exact conditions of the existence of complete chatter. Some of these are proven for $n=4$ (squares, rectangles, parallelograms) semi-analytically.

We use more complex mathematical tools than \cite{goyal1,goyal2} whose assumptions ensure that the two endpoints of the rod may not hit the ground but in alternating order. In contrast, the objects considered here have more than two impact points, which may hit the ground in many possible orders.

Our investigations are tightly related to an extensive body of work related to Zeno behavior in hybrid systems. These works establish sufficient conditions of the existence \cite{goebel2008zeno,ames2007sufficient} or the non-existence\cite{shen2005linear,goebel2008zeno} of complete chatter in various systems as well as conditions of local attractivity of so-called "Zeno equilibrium states" \cite{goebel2008lyapunov,lamperski2013lyapunov,or2011stability,murti2014using}  (i.e. those states where complete chattering sequences terminate). Our main results are also sufficient conditions of the existence of Zeno behavior. Nevertheless previous works are typically limited to hybrid systems displaying regular, periodic sequences of discrete state transitions and thus they are not applicable to the problem of falling objects, where complex sequences of impacts are possible.

The rest of the paper is organized as follows.
In Sec. \ref{sec:mech}, we introduce the system to be investigated, an impact model, and our basic notation. 
In Sec. \ref{sec:rod}, the rod problem is revisited and the main results of \cite{goyal1,goyal2} are reproduced with the help of the classical theory of invariant cones of linear operators (or Perron-Frobenius theory) \cite{vandergraft}. Our approach to three-dimensional objects is based on an extended version of the theory dedicated to shared invariant cones of multiple operators. The general theory has been developed recently \cite{prot} and has found several other applications in the stability analysis of switching dynamical systems \cite{shen2015stability,ogura2013stability} and in the control theory of linear systems \cite{do2006invariant,angeli2003monotone}. Nevertheless our application requires further generalization of the theory, which is introduced in Sec. \ref{sec:4}. 
The main results of the paper are presented in Sec. \ref{sec:effinvcone}, which includes 
semi-analytic sufficient conditions of complete chatter in the case of squares (Sec. \ref{sec:square}) and rectangles (Sec. \ref{sec:affine}) as well as systematic numerical simulations (Sec. \ref{sec:num}), which strongly suggest that our sufficient condition is exact, and it also applies to any regular polygon. The paper is closed by a Discussion section.

\section{The mechanical problem}\label{sec:mech}

\subsection{Problem statement}
We are interested in the chattering motion of a rigid body $\mathcal{B}$ i.e. the rapid sequence of collisions occuring when it hits a flat plane $\mathcal{P}$. The object is assumed to have a finite number of coplanar vertices forming a convex polygon, which may potentially contact the surface. 

Chattering is a complex, hybrid non-smooth and non-linear motion composed of periods of smooth dynamics and sudden impacts. To simplify the problem, we focus on the case when the vertices of the object reach the surface almost simultaneously. This assumption will allow us to use linearised kinematics (Sec. \ref{sec:linearized}). Furthermore, we assume that gravity and other external forces (except for impulsive contact forces) have negligible time to act, thus their effect will be ignored. 

Our goal is to predict - for given initial conditions, shape and mechanical properties of the body - which one of the following two qualitatively  different behaviours occurs :
\begin{itemize}
\item \textbf{incomplete chatter (ICC)}: the body leaves the surface with finite velocity after finite number of collisions
\item \textbf{complete chatter (CC)}: the object undergoes an infinite sequence of collisions in finite time, after which its velocity relative to the surface surface becomes zero. 
\end{itemize}
As we will see, it is necessary to consider a special scenario as well:
\begin{itemize}
\item \textbf{partial complete chatter (PCC)}: two vertices of an object with $n>2$ vertices undergo an infinite sequence of collisions in finite time, after which the velocity of these two points relative to the surface becomes zero but the object stays in motion. The PCC sequence may start at the beginning of the motion or after an initial transient.
\end{itemize}
After a PCC sequence, if other vertices of the object move away from the surface, then no more impacts will occur, thus this scenario is indeed similar to an ICC. It is also possible that other vertices of the object move towards the surface after the PCC sequence, which will eventually lead to a simultaneous collision of all vertices with the surface. Because of the notorious difficulty of modelling simultaneous impacts, we will categorize this case as undecidable.

\subsection{Notation and kinematics}

Let the mass of the object $\mathcal{B}$ and its mass moment of inertia tensor be $m$ and $\vec{\theta}$. We define a local orthogonal coordinate system fixed to it's centre of mass $\vec{r}$.   The unit vectors spanning the local frame are denoted by $\vec{u}_x$, $\vec{u}_y$ and $\vec{u}_z$. The x-y plane of the reference frame is parallel to the plane spanned by $n$ potential contact points of the object. Physical quantities expressed in local frame will be denoted by upper indices $l$. 

The coordinates of the contact points  are
\begin{align}
\vec{\vec{r}}^l_i=\colvec{3}{x_i}{y_i}{z_*}, i\in \{0,1...n-1\} \label{eq:koord}
\end{align} 
where the third coordinate $z_*$ is identical for all vertices. The enumeration of the vertices ($i=0,1,...,n-1$) is according to positive orientation in the local frame.

We will assume that the axes of the local frame correspond to the eigenvectors of $\vec{\theta}$, i.e.
$$
\vec{\theta}^l=m
\begin{bmatrix}
\rho_x^2 & 0 & 0\\
0 & \rho_y^2 &0\\
0 & 0 & \rho_z^2
\end{bmatrix}
$$
where $\rho_x,\rho_y,\rho_z$ denote the principal radii of gyration of $\mathcal{B}$.

We also consider a global orthogonal frame of reference, whose $X$ and $Y$ axes are parallel to the flat plane $\mathcal{P}$ and whose origin is at distance $z_*$ from $\mP$. We also define the unit vectors $\vec{u}_X$, $\vec{u}_Y$, $\vec{u}_Z$ spanning the global frame and use upper index $g$ for quantities expressed in global frame.

The position of the $i$-th vertex in global frame is given by 
\begin{align}
\vec{r}_i^g=\vec{r}^g+\vec{H}_{l,g} \vec{r}_i^l. \label{eq:pontoshelyzet} 
\end{align}   
where $\vec{H}_{l,g}$ is a rotation matrix. The velocity $\vec{v}_i$ of point $i$ is
\begin{align}
\vec{v}_i^g=\vec{v}^g+\vec{\omega}^g \times \vec{H}_{l,g} \vec{r}_i^l. \label{eq:pontossebesseg} 
\end{align}  
where $\vec{\omega}$ is the angular velocity of $\mathcal{B}$ and $\vec{v}$ is the velocity of the center of mass

\subsection{Continuous dynamics}
Between two impacts, the smooth dynamics of the object is given by the Newton-Euler equations. The time-derivatives of the velocity of the center of mass $\vec{v}$ and the angular velocity $\vec{\omega}$ are determined by the external forces, which are of size $O(1)$. We are interested in a rapid sequence of collisions, which means that variations of $\vec{v}$ and $\vec{\omega}$ between two collisions are very small. Thus they are approximated by constants. 

\subsection{Impacts}\label{sec:impacts}
The object undergoes an impact if one of the vertices hits the plane, i.e. if 
\begin{align}
\vec{u}_Z^T\vec{r}_i=0
\quad
\vec{u}_Z^T\vec{v}_i<0
\end{align}
for some $i$.

For simplicity, we assume zero friction implying that the impact impulses are parallel to the contact normal $\vec{u}_Z$. We remark without detailed proof that all results of the paper remain valid in the presence of friction, provided that the object is flat, i.e. $z_*=0$ in \eqref{eq:koord}. 

Let $F\cdot \vec{u}_Z$ ($F\in \mathbb{R}$) denote an instantaneous impulse the underlying plane excerts upon $\mathcal{B}$ in a single-point impact. The pre- and post-impact values of the velocity of the centre of mass, the angular velocity and the velocity of vertex $i$ will be distinguished by superscripts $^-$ and $^+$. The conservation of linear and angular momenta yields  
  
\begin{align}
m (\vec{v}^{+}-\vec{v}^{-})=F\vec{u}_Z \label{eq:utk1}\\
\vec{\theta} (\vec{\omega}^+-\vec{\omega}^-) = \vec{r}_i\times F\vec{u}_Z \label{eq:utk2}
\end{align}
We assume partially elastic collisions with a constant Newtonian coefficient of restitution $0<\gamma<1$, implying
 \begin{align}
 \vec{u}_Z^T \vec{v}_i^+ = -\gamma \vec{u}_Z^T \vec{v}_i^- \label{eq:utk3}
 \end{align}
The unknowns $\vec{v}^+$, $\vec{\omega}^+$, and $F$ are uniquely determined by the equations \eqref{eq:utk1}-\eqref{eq:utk3} as follows:
 \begin{align} 
F&=\frac{-\vec{u}_z^T(\gamma+1)(\vec{v}^-+(\vec{\omega}^- \times \vec{r}_i))}{\vec{u}_z^T\left[m^{-1} \vec{I} - \vec{R}_x \vec{\theta}^{-1} \vec{R}_x \right] \vec{u}_z } \label{eq:F} \\
\vec{\omega}^+&=\vec{\theta}^{-1}(\vec{r}_i^g \times F \vec{u}_z)+\vec{\omega}^-  \label{eq:omega+}\\
\vec{v}^+&=m^{-1}F \vec{u}_z +\vec{v}^- \label{eq:v+}
\end{align}
where $\vec{I}$ stands for the identity matrix in $3$ dimensions, $\vec{r}_i^g=\vec{H}_{l,g} \vec{r}_i^l$ and $\vec{R}_x$ is the matrix representation of the cross product $\vec{r}_i^g\times*$ (i.e. $\vec{R}_x \vec{x}=\vec{r}_i^g\times \vec{x}$ for all $\vec{x} \in \mathbb{R}^3$).

We have pointed out that a PCC event with other vertices moving towards the surface $\mathcal{S}$ gives rise to a simultaneous impact at all vertices. There are various simple models of simultaneous impacts, which yield reasonable but not reliable results. The lack of reliability is caused primarily by the extreme sensistivity of simultaneous impacts to the pre-impact state \cite{brogliato1999nonsmooth},\cite{stronge2004impact}. 
One common assumption is that a simultaneous impact can be replaced by a (possibly infinite) sequence of single-point impacts \cite{chatterjee1998new}\cite{varkonyi2012lyapunov},  whereas another popular approach is to assign a coefficient of restitution to all vertices and formulate the impact model as a linear complementarity problem \cite{glocker1995multiple} \cite{leine2008stability}. We have found that in the case of chattering, these two models predict qualitatively different results (the object stops under the first assumption and it topples under the second), which motivates our decision to categorize the final outcome of motion including a PCC event  as undecidable.

\subsection{Linearisation and general coordinates}
\label{sec:linearized}

If the vertices of the object reach the surface nearly simultaneously, then its rotations during the whole chattering process remain very small. Thus we obtain a good approximation of the motion via application of the theory of infinitesimal rotations. Small rotations can be represented by a rotation vector $\vec{\phi}\in\mR ^3$ where the direction of the vector represents the axis of the rotation and  $|\vec{\phi}|<<1$ corresponds to the angle of the rotation. This vector is related to angular velocity as $\vec{\omega}=\frac{d}{d\tau}\vec{\phi}+\vec{O}(|\vec{\phi}|^2)$ ($\tau$ stands for time). Such a rotation is equivalent of the rotation matrix 
\begin{align}
\vec{H}_{l,g}=
\begin{bmatrix}
1 & -\vec{u}_Z^T\vec{\phi} & \vec{u}_Y^T\vec{\phi}\\
\vec{u}_Z^T\vec{\phi} & 1 & -\vec{u}_X^T\vec{\phi}\\
-\vec{u}_Y^T\vec{\phi}  & \vec{u}_X^T\vec{\phi} & 1
\end{bmatrix}
+\vec{O}(|\vec{\phi}|^2)
\label{eq:Hlin}
\end{align} 
Throughout the paper, we neglect $\vec{O}(|\vec{\phi}|^2)$ terms. In addition, we assume that the object reaches the surface without "yaw motion", i.e. $\vec{u}_Z^T\vec{\omega}=0$ implying $\vec{u}_Z^T\vec{\phi}=constant$. Without loss of generality, we will assume $\vec{u}_Z^T\vec{\phi}=0$. Then, \eqref{eq:pontoshelyzet} and \eqref{eq:Hlin} imply
\begin{align}
\vec{r}_i^g=\colvec{3}
{x_i}
{y_i}
{\vec{u}_Z^T\vec{r} + \vec{u}_X^T\vec{\phi} y_i  - \vec{u}_Y^T\vec{\phi} x_i} 
\end{align}
i.e. the distance of a vertex from $\mathcal{P}$ is determined as a linear combination of  $h:= \vec{u}_Z^T\vec{r}$, $\phi_x:= \vec{u}_X^T\vec{\phi}$, and $\phi_y:= \vec{u}_Y^T\vec{\phi}$. This motivates our choice of the generalized coordinates  
\begin{align*}
\vec{q}=\colvec{3}{\phi_x}{\phi_y}{h}
\end{align*}
 spanning the configuration space $\mathbb{C}$. The velocity space $\mathbb{V}$ is spanned by the generalized velocities 
 \begin{align*}
 \vec{p}=\frac{d\vec{q}}{d\tau}=\colvec{3}{\omega_x}{\omega_y}{v}
 \end{align*}

Let us introduce the notation
\begin{align}
\vec{f}_i:=\colvec{3}{y_i}{-x_i}{1}
\label{eq:fi}
\end{align}

Then the distance of a vertex from $\mP$ and its velocity in the global $Z$ direction can be expressed as 
\begin{align}
h_i=\vec{u}_z^T \vec{r}_i=\vec{f}_i^T \vec{q}\;\;\; v_{i}=\vec{u}_z^T \vec{v}_{i}=\vec{f}_i^T\vec{p} \label{eq:hi,vi}
\end{align}
 According to \eqref{eq:omega+}, \eqref{eq:v+}, an impact at vertex $i$ corresponds to a linear mapping of the generalized velocities
\begin{align}
\vec{p}^+=\vec{U}_i\vec{p}^-\label{eq:impactmap}
\end{align} 
where 
\begin{align}
\vec{U}_i:=\vec{I}+\frac{-(1+\gamma)}{\vec{f}_i^T \vec{\Theta}^{-1}\vec{f}_i}\left[\vec{\Theta}^{-1}\vec{f}_i \vec{f}_i^T \right]\label{eq:impactequ}
\end{align} 
furthermore $\vec{\Theta}$ is a generalized inertia matrix: 
\begin{align}
\vec{\Theta}=
  m\begin{bmatrix}
    \rho_x^2 & 0 & 0 \\
    0 & \rho_y^2 & 0 \\
    0 & 0 & 1
  \end{bmatrix}
  \label{eq:Theta}
\end{align}
  and   $\rho_x$ and $\rho_y$ are radii of gyration introduced earlier.

\subsection{Invariant cones and complete chatter}
If the center of mass of $\mathcal{B}$ approaches $\mathcal{P}$, i.e. if 
\begin{align}
\vec{u}_3^T\cdot \vec{p}\leq 0\label{eq:vz<0}\\
\vec{u}_3=[0\;0\;1]^T\label{eq:u3}
\end{align} 
during the entire motion, then $\mathcal{B}$ must eventually undergo CC or PCC. In contrast, because the contact forces are unilateral, the normal velocity of the center of mass increases monotonically during the motion, hence if $[0 \; 0 \; 1]\cdot \vec{p}>0$ at any time, then $\mathcal{B}$ will never become immobile, implying an ICC. This observation is summarized in

\begin{lemma} An object undergoes ICC if and only if \eqref{eq:vz<0} is violated at any time during its motion
\label{lemma:vz<0}
\end{lemma}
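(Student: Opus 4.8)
The plan is to reduce the biconditional to a single scalar quantity, the normal velocity of the center of mass $v:=\vec{u}_3^T\vec{p}$, and to show that $v$ is non-decreasing along the entire motion. Monotonicity is essentially the only tool needed: once it is established, tracking the sign of $v$ separates the finite-collision regime (ICC) from the infinite-chatter regimes (CC and PCC), which is exactly the content of \eqref{eq:vz<0}. So the first thing I would do is prove this monotonicity, and then read off the two implications of the lemma.

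First I would establish the monotonicity. Between impacts the continuous dynamics leaves $\vec{p}$ constant (external forces are neglected), so $v$ is constant there. At an impact on vertex $i$, the impact map \eqref{eq:impactmap}--\eqref{eq:impactequ} gives
\begin{align*}
\vec{u}_3^T\vec{p}^+ = v^- - \frac{1+\gamma}{\vec{f}_i^T\vec{\Theta}^{-1}\vec{f}_i}\,\bigl(\vec{u}_3^T\vec{\Theta}^{-1}\vec{f}_i\bigr)\bigl(\vec{f}_i^T\vec{p}^-\bigr),
\end{align*}
where $v^-=\vec{u}_3^T\vec{p}^-$. Using \eqref{eq:fi} and \eqref{eq:Theta} one checks $\vec{u}_3^T\vec{\Theta}^{-1}\vec{f}_i=m^{-1}>0$ and $\vec{f}_i^T\vec{\Theta}^{-1}\vec{f}_i>0$, while an impact requires the pre-impact normal velocity of vertex $i$ to satisfy $\vec{f}_i^T\vec{p}^-<0$. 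Hence the increment is strictly positive, i.e.\ $v^+>v^-$ (equivalently, \eqref{eq:F} shows the contact impulse $F>0$ and \eqref{eq:v+} gives $v^+=v^-+m^{-1}F$). Therefore $v$ is non-decreasing in time, strictly increasing across each collision. Since $0<\gamma<1$, every impact dissipates kinetic energy, so $\vec{p}$ stays bounded and $v$ converges to a finite limit $v^\ast$.

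With monotonicity in hand I would treat the two directions. The implication ICC $\Rightarrow$ (violation of \eqref{eq:vz<0}) is essentially definitional: at the separation following the last collision the body recedes from $\mathcal{P}$, so $v>0$ there. For the converse I would argue the contrapositive: if the motion is not ICC then infinitely many impacts occur, i.e.\ CC or PCC. In a CC event the whole body comes to rest relative to $\mathcal{P}$, so $\vec{p}\to 0$ and $v^\ast=0$; since $v$ is non-decreasing this forces $v\le 0$ for all time, meaning \eqref{eq:vz<0} is never violated. Thus CC is consistent only with $v\le 0$ throughout, as required.

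The main obstacle is the PCC branch, where only two vertices $i,j$ chatter to rest, so the limit satisfies $\vec{f}_i^T\vec{p}^\ast=\vec{f}_j^T\vec{p}^\ast=0$. This confines $\vec{p}^\ast$ to the line $\mathrm{span}(\vec{f}_i\times\vec{f}_j)$ but does \emph{not} by itself yield $v^\ast=\vec{u}_3^T\vec{p}^\ast\le 0$; geometrically, the momentarily stationary edge through $i,j$ still permits the center of mass to lift. I expect to close this by observing that $v^\ast>0$ forces the body to rotate about the stationary edge $ij$ so that the remaining vertices descend toward $\mathcal{P}$, which precludes genuine separation and drives the system into the simultaneous-contact scenario already set aside as undecidable; the decidable PCC cases (other vertices receding) have $v^\ast\le 0$ and hence $v\le 0$ throughout, again consistent with \eqref{eq:vz<0}. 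Turning this geometric dichotomy into an airtight argument, rather than relying on the informal statement that the body "never becomes immobile," is the step that will need the most care.
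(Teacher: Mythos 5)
Your main line of argument --- the normal velocity $v=\vec{u}_3^T\vec{p}$ of the center of mass is constant between impacts and strictly increases at each impact because the contact impulse is unilateral ($F>0$) --- is exactly the paper's justification of the lemma, and your explicit computation of the increment $-\tfrac{(1+\gamma)m^{-1}}{\vec{f}_i^T\vec{\Theta}^{-1}\vec{f}_i}\vec{f}_i^T\vec{p}^->0$ is a welcome tightening of it. The ICC and CC branches are fine. The genuine problem is in the PCC branch, and unfortunately your proposed fix gets the geometry backwards. Take the square with $\vec{f}_0=[0,-1,1]^T$, $\vec{f}_1=[1,0,1]^T$: the PCC limit velocity lies on $\mathrm{span}(\vec{f}_0\times\vec{f}_1)=\mathrm{span}([-1,1,1]^T)$, so $\vec{p}^\ast=c\,[-1,1,1]^T$ with $v^\ast=c$, and then $\vec{f}_2^T\vec{p}^\ast=\vec{f}_3^T\vec{p}^\ast=2c$. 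Hence $v^\ast>0$ makes the \emph{remaining} vertices recede (the body pivots about the stationary edge, lifting the center of mass and everything on its side of that edge), while $v^\ast<0$ is the case where they descend into the undecidable simultaneous-impact scenario. This is the opposite of what you assert, and it is precisely the bad case for the lemma: a hypothetical PCC with $v^\ast>0$ would violate \eqref{eq:vz<0} yet involve infinitely many collisions, so it is not an ICC in the literal sense of the paper's definition. Your dichotomy as written therefore cannot close the ``only if'' direction.

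To be fair, the paper does not close it either: its argument is the informal ``if $v>0$ at any time then $\mathcal{B}$ never becomes immobile, implying an ICC,'' which equally ignores that a PCC also leaves the body mobile. What is actually needed is either (a) a proof that a PCC sequence cannot sustain itself once $v>0$ (e.g.\ by showing that the alternating impacts at the two chattering vertices force the relevant eigenvector/limit direction to satisfy $v^\ast\le 0$, as the later analysis of $\vec{U}_0\vec{P}_{2\pi/n}\vec{U}_0\vec{P}_{-2\pi/n}$ in Lemma \ref{lem:noPCC} suggests), or (b) an explicit convention that ``PCC followed by separation'' is counted as ICC, which the paper hints at (``this scenario is indeed similar to an ICC'') but does not build into the lemma. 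As it stands, your proof is correct for the ICC and CC directions but the PCC case remains open, and the repair you sketch would fail because the sign of $v^\ast$ selects the opposite alternative from the one you claim.
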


 The only events at which $\vec{p}$ changes are the impacts, which are modelled by the linear maps \eqref{eq:impactmap}. Hence, we will investigate whether or not \eqref{eq:vz<0} is preserved, and the main tool of our investigation is the invariant cone theory of linear maps, which is briefly reviewed below.

Consider a vector space, such as the velocity space $\mathbb{V}$. The set $\mathcal{K} \subset \mathbb{V}$ is a \emph{cone} if $ \vec{x} \in \mathcal{K}$ implies $\alpha \vec{x} \in \mathcal{K}$ for any $\alpha \geq 0$. $\mathcal{K}$ is a \textit{proper cone} if it is convex and non-empty ($\mathcal{K}+\mathcal{K} \subseteq \mathcal{K}$)and also pointed ($\mathcal{K}\cap -\mathcal{K}=0$). Consider now a linear operator $\vec{A}:\mathbb{V}\rightarrow \mathbb{V}$! Then,
\begin{definition}the cone $\mathcal{K}$ is called an invariant cone of $\vec{A}$  if $\vec{A}(\mathcal{K})\subseteq\mathcal{K}$.
\end{definition} 
Whether or not a given operator has an invariant proper cone or not is decidable with the aid of
\begin{theorem} [Elsner-Vandergraft \cite{vandergraft}] 
Let $\{\lambda_i\}$ denote the set of eigenvalues of $\vec{A}$, and let $\{\lambda_i^{dom}\} \subseteq \{\lambda_i\}$ denote the set of dominant eigenvalues, i.e. those eigenvalues for which $|\lambda_i|=\max_i|\lambda_i|$. Then  $\vec{A}$ has an invariant proper cone if and only if there exists a dominant eigenvalue $\lambda_p \in \{\lambda_i^{dom}\}$ such that\\
(i)  $\lambda_p$ is real and positive\\
(ii) the algebraic multiplicity of $\lambda_p$ is not less than the multiplicity of any other $\lambda_i \in \{\lambda_i^{dom}\}$.\\
Furthermore, if (i) and (ii) are satisfied then any invariant cone must contain an eigenvector corresponding to $\lambda_p$. 
\label{thm:invconetheorem}
\end{theorem}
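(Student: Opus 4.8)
The plan is to prove the two implications separately, with the generalized Perron--Frobenius theorem (the finite-dimensional Krein--Rutman theorem) as the central engine: if a linear operator leaves a proper cone invariant, then its spectral radius $\rho=\max_i|\lambda_i|$ is itself an eigenvalue and the cone contains a corresponding eigenvector. I would take this cone-preserving spectral result as known and build both directions on top of it.

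For necessity, suppose $\vec{A}(\mathcal{K})\subseteq\mathcal{K}$ for some proper cone $\mathcal{K}$. Applying the result above immediately makes $\rho$ a \emph{real, nonnegative} dominant eigenvalue; taking $\lambda_p=\rho$ gives condition (i), with strict positivity whenever $\rho>0$ (the nilpotent case $\rho=0$ being degenerate and excluded in our application, where the impact maps are nonsingular). Because the guaranteed eigenvector lies in $\mathcal{K}$, the very same argument applied to an arbitrary invariant cone also establishes the closing ("Furthermore") clause at no extra cost.

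The delicate part is condition (ii), the (weak) maximality of the algebraic multiplicity of $\lambda_p$ among dominant eigenvalues; this is exactly where one must go beyond elementary Perron--Frobenius. Since $\vec{A}^k(\mathcal{K})\subseteq\mathcal{K}$ for all $k$, I would analyse the asymptotics of $\vec{A}^k/\rho^k$ as $k\to\infty$. Each peripheral eigenvalue $\lambda=\rho e^{i\vartheta}$ contributes an oscillatory factor $e^{ik\vartheta}$ whose amplitude is driven by the size of its Jordan blocks. If a complex or negative peripheral eigenvalue dominated in index, then for generic $\vec{x}\in\mathcal{K}$ the leading term of $\vec{A}^k\vec{x}$ would change direction with $k$ and thus escape the pointed convex cone $\mathcal{K}$ for infinitely many $k$ --- a contradiction. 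Forcing the leading growth direction to stay inside $\mathcal{K}$ therefore compels the real positive eigenvalue $\rho$ to dominate. I expect the bookkeeping of Jordan chains on the peripheral spectrum, together with reconciling the natural index statement produced by this argument with the \emph{algebraic multiplicity} statement in (ii) (via the conjugate-pair and rotational structure of a cone map's peripheral spectrum), to be the main technical obstacle.

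For sufficiency I would construct an invariant proper cone explicitly. Let $\vec{v}$ be a right eigenvector for $\lambda_p=\rho$ and $\vec{w}$ the matching left eigenvector, normalised so that $\scal{\vec{w}}{\vec{v}}=1$, and split the space into the $\rho$-direction and the complementary $\vec{A}$-invariant subspace on which $\vec{A}$ grows no faster than $\rho$. A circular ("ice-cream") cone $\mathcal{K}=\{\vec{x}:\scal{\vec{w}}{\vec{x}}\ge c\,\norm{\vec{x}_\perp}\}$, with aperture $c$ chosen from the spectral gap, can then be shown to satisfy $\vec{A}(\mathcal{K})\subseteq\mathcal{K}$ while being convex, pointed and nonempty. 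This is immediate when $\rho$ is a simple dominant eigenvalue; the counterpart obstacle is that nontrivial Jordan structure at $\lambda_p$, or the presence of other peripheral eigenvalues (necessarily of no larger multiplicity by (ii)), forces a Jordan-aware choice of the complementary norm and aperture so that the oscillatory peripheral modes remain enclosed.
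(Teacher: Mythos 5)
First, a point of reference: the paper does not prove Theorem~\ref{thm:invconetheorem} at all --- it is quoted from the literature (Vandergraft, Elsner) --- so there is no in-paper proof to compare against. Your strategy is nonetheless the standard one for this classical result: finite-dimensional Krein--Rutman for the necessity of (i) and for the closing ``furthermore'' clause, asymptotics of $\vec{A}^k/\rho^k$ for the necessity of (ii), and an explicit Jordan-adapted cone for sufficiency. As a plan it is sound and matches the cited source's line of argument.

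There is, however, one point that you flag but underestimate, and it is not mere bookkeeping: the asymptotic argument yields, and can only yield, a condition on the \emph{index} of the peripheral eigenvalues (the size of the largest Jordan block, equivalently the multiplicity as a root of the minimal polynomial), not on the algebraic multiplicity, and the two readings of condition (ii) are genuinely inequivalent. The matrix $\mathrm{diag}(1,-1,-1)$ leaves the circular cone $\left\{\vec{x}: x_1\geq (x_2^2+x_3^2)^{1/2}\right\}$ invariant, yet the algebraic multiplicity of the dominant eigenvalue $1$ is $1$ while that of the dominant eigenvalue $-1$ is $2$; so the ``only if'' direction is false under the algebraic-multiplicity reading, and the reconciliation you defer cannot be carried out. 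The theorem must be stated and proved with the index (this is harmless for the paper's applications, where the matrices are $2\times 2$ or $3\times 3$ with simple dominant eigenvalues, but your proof should target the correct statement). Two smaller remarks: your caveat about $\rho=0$ is right --- a nilpotent map preserves any proper cone, so ``positive'' in (i) should be ``nonnegative'' in full generality --- and your ice-cream-cone construction for sufficiency genuinely requires the Jordan-aware modification you mention, since when another peripheral eigenvalue has the same index as $\lambda_p$ the complementary component of $\vec{A}^k\vec{x}$ grows with the same polynomial order as the $\lambda_p$-component, and a fixed-aperture circular cone about the dominant eigenvector is not invariant without first rescaling the complementary Jordan coordinates.
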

We will see in Sec. 3 that in the case of a rod-shaped object with only two potential contact points, Theorem 1 leads to an exact condition of CC. Nevertheless, the object $\mathcal{B}$ has in general more than two possible contact points, i.e. several impact operators. It is a sufficient condition of CC  that the property \eqref{eq:vz<0} is preserved by an \textit{arbitrary sequence} of impacts, for which we need the more general concept of common invariant cones:
\begin{definition}
A cone $\mathcal{K}$ is called a common invariant cone for a set of linear operators  $\mathcal{A}=\{\vec{A}_1...\vec{A}_n \}$ if $\vec{A}_i(\mathcal{K}) \subseteq\mathcal{K} \ \forall \vec{A}_i \in \mathcal{A}$.
\end{definition}
The theory of common invariant cones has not been investigated until very recently. There is provably no efficient general algorithm to decide the existence of a common invariant cone for an arbitrary set of operators \cite{prot,jointspectral}. Nevertheless, it is possible in certain cases to prove its existence by construction. 

In our problem, stronger results can be achieved by taking into account that for a given value of $\vec{p}$, certain impacts may be impossible. More specifically, we will identify additional cones $\mathcal{C}_i$ in velocity space such that an impact at vertex $i$ is impossible unless $\vec{p}\in \mathcal{C}_i$. We also define a new concept:
\begin{definition}
A cone  $\mathcal{K}$ is called an effectively invariant cone for a set of operators $\mathcal{A}=\{\vec{A}_1...\vec{A}_n \}$ and a set of cones $\mathcal{C}=\{C_1...C_n \}$ if $\vec{A}_i(\mathcal{K} \cap \mathcal{C}_{i}) \subseteq \mathcal{K} $ for all $i\in \{1...n\}.$
\end{definition}
There is no available method for testing the existence of an effectively invariant cone, nevertheless we will be able to prove their existence by construction in certain cases.

\section{The falling rod revisited}\label{sec:rod}

\begin{figure}[h]
\begin{center}
\includegraphics[width=70mm]{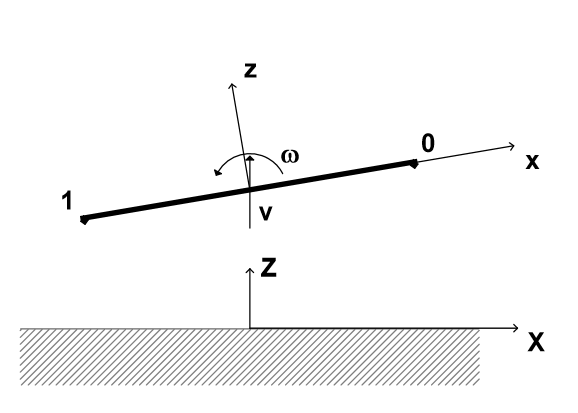}
\caption{Notation of the rod problem}
\label{fig:rod}
\end{center}
\end{figure}
As we have seen, a rigid body needs at least two potential points of collision to display chattering, as long as the effect of external forces is neglected. Having exactly two points considerably simplifies the problem, because the order of colliding nodes becomes trivial. The point that has undergone a collision moves upwards, and only the other one may hit the surface, thus collisions occur alternating at the two vertices. Goyal et al. \cite{goyal1,goyal2} investigated this problem, and gave the exact conditions of complete chatter if the rod initially performs pure translational motion.

In contrast to the general problem investigated in the paper, we now consider planar motion   (Fig. \ref{fig:rod}). The position vectors of the endpoints are two-dimensional: $\vec{r}_i^l=[x_i \ 0]^T$ ($i=0,1$). The rotation angle $\phi$ and the angular velocity $\omega$ are scalars and we have a scalar radius of inertia $\rho$.  Accordingly, we will use the two dimensional generalized coordinate and velocity vectors $\vec{q}=[\phi\quad h]^T$, and $\vec{p}=[\omega \quad v]^T$. The relations \eqref{eq:hi,vi}-\eqref{eq:impactequ} remain true but \eqref{eq:fi}  and \eqref{eq:Theta} are replaced by
\begin{align}
\vec{f}_i=\begin{bmatrix}
x_i\\ 1\end{bmatrix}, \quad
\vec{\Theta}=m
\begin{bmatrix}
\rho & 0\\0&1
\end{bmatrix}
\end{align}
 In this situation, an immediate analogue of Lemma \ref{lemma:vz<0} holds:
\begin{lemma} A rod in planar motion undergoes ICC if and only if 
\begin{equation}
[0\,\,\,1]\vec{p}\leq 0
\label{eq:felfele}
\end{equation}
 is violated at any time during its motion
\label{lemma:vz<0planar}
\end{lemma}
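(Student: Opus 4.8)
The plan is to prove the equivalence by exploiting two structural facts special to the two-point rod. First, only ICC or CC can occur: PCC requires $n>2$ vertices and is therefore vacuous here, so the two behaviours are mutually exclusive and exhaustive (this specializes the reasoning preceding Lemma \ref{lemma:vz<0}). Second, the normal velocity of the centre of mass $v=[0\,\,1]\vec{p}$ is monotone nondecreasing along the whole motion, and this monotonicity is the engine of the argument, so I would establish it first. Between impacts $v$ is constant because external forces are neglected, and at an impact at vertex $i$ the normal component of \eqref{eq:utk1} gives $v^{+}=v^{-}+F/m$. Here $F\ge 0$, since the plane can only push: the planar form of \eqref{eq:F} has denominator $\vec{f}_i^{T}\vec{\Theta}^{-1}\vec{f}_i>0$ by positive-definiteness of $\vec{\Theta}$, and numerator $-(1+\gamma)v_i^{-}>0$ using the collision condition $v_i^{-}<0$ together with \eqref{eq:hi,vi}. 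Hence $v$ never decreases.

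For the ``if'' direction I would argue the sharp contrapositive: if \eqref{eq:felfele} is violated at some instant $\tau^{*}$, i.e. $v(\tau^{*})>0$, then by monotonicity $v\ge c:=v(\tau^{*})>0$ for all later times, and I claim this forces ICC. The key is to rule out a Zeno accumulation of impacts. Consider the kinetic energy $E=\tfrac12\vec{p}^{T}\vec{\Theta}\vec{p}$, which is nonincreasing because each partially elastic impact with $0<\gamma<1$ dissipates energy while no external work is done between impacts, and which satisfies $E\ge\tfrac12 m c^{2}>0$. If infinitely many impacts accumulated at a finite time $\tau_{\infty}$, the energy dropped per impact would tend to zero, so the impulses $F$ and hence the pre-impact vertex velocities $v_i^{-}$ would tend to zero; since impacts alternate between the two endpoints, both $v_0^{-}=x_0\omega+v$ and $v_1^{-}=x_1\omega+v$ tend to zero, and with $x_0\ne x_1$ this forces $\omega\to 0$ and $v\to 0$ at $\tau_{\infty}$, contradicting $v\ge c>0$. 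Thus only finitely many impacts occur and the rod separates with finite velocity, which is ICC.

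For the ``only if'' direction I would use the geometry of separation. Under ICC the rod leaves $\mathcal{P}$ after its last impact and thereafter moves uniformly, so by \eqref{eq:hi,vi} each vertex height evolves as $h_i(\tau)=h_i+v_i\tau$ with $v_i=x_i\omega+v$. Permanent separation requires $v_i\ge 0$ for both endpoints; since the endpoints straddle the centre of mass, $x_0<0<x_1$, the two inequalities together force $v\ge 0$, while $v=0$ would force $\omega=0$ and hence a stationary rod rather than genuine separation with finite velocity. Therefore $v>0$ at that instant, i.e. \eqref{eq:felfele} is violated, as required.

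I expect the main obstacle to be the exclusion of a Zeno accumulation with $v_{\infty}>0$ in the ``if'' direction. Monotonicity of $v$ and the separation geometry are essentially routine, but turning ``$v$ stays positive'' into ``only finitely many impacts occur'' genuinely needs the energy-dissipation estimate combined with the alternation of contacts, which is precisely what lets me conclude that both endpoint velocities, and hence $v$ itself, would have to vanish at any accumulation point.
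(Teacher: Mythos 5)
Your proof is correct, and it is worth noting that it is substantially more complete than what the paper offers: the paper states this lemma as an ``immediate analogue'' of Lemma \ref{lemma:vz<0}, whose only justification is the two-sentence observation that unilateral contact impulses make the normal velocity $v=[0\,\,1]\vec{p}$ of the centre of mass monotone nondecreasing, so that $v>0$ at any instant precludes the rod ever becoming immobile. You start from that same monotonicity observation (and verify it properly from the sign of the impulse in \eqref{eq:utk1} and \eqref{eq:F}), but you then supply the two steps the paper skips entirely: the energy-dissipation argument showing that $v\geq c>0$ forces the impact sequence to terminate (``never immobile'' does not by itself yield ICC, since one must exclude an infinite impact sequence), and the separation geometry showing that ICC forces $v>0$ at the moment of final separation. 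Both added arguments are sound; two small remarks. First, your Zeno-exclusion step does not actually need the hypothesis that the impacts accumulate at a finite time: since the kinetic energy is nonincreasing and bounded below by $\tfrac12 mc^2$, the per-impact losses $(1-\gamma^2)(v_i^-)^2/(2\vec{f}_i^T\vec{\Theta}^{-1}\vec{f}_i)$ form a convergent series for \emph{any} infinite impact sequence, so the same contradiction rules out non-accumulating infinite sequences as well (which would otherwise be neither CC nor ICC and would leave a gap). One should also note that $v_0^-$ and $v_1^-$ are sampled at different impact times, but since the impulses tend to zero the jump in $\vec{p}$ between consecutive impacts also tends to zero, so the conclusion $\omega,v\to 0$ survives. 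Second, the paper's convention is $x_0=-x_1=1$, so your inequality $x_0<0<x_1$ has the labels swapped; this is immaterial since only the fact that the endpoints straddle the centre of mass is used.
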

We will assume that the rod has a symmetric mass distribution, and $x_0=-x_1=1$. Temporarily, we will also assume that the first collision occurs at $\vec{r}_1$. This assumption means that the initial velocity of point 1 must point downwards, yielding the constraint
\begin{align}
[-1\;1]\cdot \vec{p}^{(0)}<0
\label{eq:r1down}
\end{align}
where $\vec{p}^{(0)}$ is the iniitial value of $p$.

In order to simplify our analysis, we will change the reference frames and also swap the labelling of the two endpoints before every impact (including the first one). As a result, every impact will occur at the endpoint labelled by 0 at the time of the impact. Specifically, the directions of the $X$ and the $x$ coordinate axes are both reversed every time, which corresponds to the transformations
\begin{align}
\vec{q}&\rightarrow \vec{P} \vec{q}\label{eq:Pq}\\
\vec{p}&\rightarrow \vec{P} \vec{p}\label{eq:Pp}\\
\vec{r}_i^l&\rightarrow \vec{P} \vec{r}_i^l \label{eq:Pri}
\end{align}
with
\begin{align}
\vec{P}=
\left[
\begin{matrix}
-1&0\\0&1
\end{matrix}
\right] 
\end{align}
whereas all other system parameters and equations governing the dynamics of the rod remain unchanged. Second, swapping the labels 0 and 1 corresponds to a second transformation of the local coordinates $\vec{r}_i^l$ identical to \eqref{eq:Pri}. The two steps leave $\vec{r}_i^l$ unchanged (due to $\vec{PP}=identity$), whereas the combined effect of the technical steps and the subsequent impact to the generalized velocity is a linear transformation
\begin{align}
\vec{p}\rightarrow \vec{U}_0\vec{P}\vec{p}
\end{align} 
If the technical steps are repeated before every impact, then the generalized velocity $\vec{p}^{(k)}$ of the rod after $k$ impacts will be 
\begin{align}
\vec{p}^{(k)}&=(\vec{U}_0\vec{P})^k \vec{p}^{(0)} \label{eq:pk1rod}
\end{align}

If - in contrast to our initial assumption - point 0 hits the ground first, then the coordinate transformation and relabelling step before the first impact are omitted, and \eqref{eq:pk1rod} is replaced by 
\begin{align}
\vec{p}^{k}=(\vec{U}_0\vec{P})^{k-1}\vec{U}_0 \vec{p}^{(0)}=(\vec{U}_0\vec{P})^{k} \vec{P}\vec{p}^{(0)}
\label{eq:pk0rod}
\end{align}
From the last two expressions, it is clear that whether or not CC occurs depends mostly on properties of the matrix $\vec{U}_0 \vec{P}$. Indeed, \cite{goyal1,goyal2} showed 
\begin{theorem}
Let the initial motion of the rod be pure translation towards the support surface (i.e.: $\vec{p}^{(0)}=[0\quad v^{(0)}]^T$ with $v^{(0)}<0$). If
\begin{align}
\rho \leq \frac{2}{\sqrt{\gamma} + 1}-1
\label{eq:CCcond}
\end{align}
then CC occurs. In the converse case, ICC occurs.
\label{thm:goyal}
\end{theorem}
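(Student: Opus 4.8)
The plan is to collapse the whole question into the sign behaviour of a single scalar sequence, and then to read off the threshold from the spectrum of the combined impact-and-relabel operator $\vec{M}:=\vec{U}_0\vec{P}$. First I would observe that for the pure-translation datum $\vec{p}^{(0)}=[0\ \ v^{(0)}]^T$ one has $\vec{P}\vec{p}^{(0)}=\vec{p}^{(0)}$, so the two bookkeeping cases \eqref{eq:pk1rod} and \eqref{eq:pk0rod} coincide and the motion is the single orbit $\vec{p}^{(k)}=\vec{M}^k\vec{p}^{(0)}$. By Lemma \ref{lemma:vz<0planar}, CC occurs exactly when $[0\ 1]\vec{p}^{(k)}\le 0$ for every $k$; writing $a_k:=[\vec{M}^k]_{22}$ this is $v^{(k)}=v^{(0)}a_k\le 0$, so (as $v^{(0)}<0$) the problem becomes: \emph{CC holds iff $a_k\ge 0$ for all $k$, and ICC holds iff $a_k<0$ for some $k$.} The equivalence is legitimate because, as long as $v^{(k')}\le 0$ for $k'<k$, the impacts genuinely alternate and the linear iteration reproduces the physical motion up to the first index where $v$ turns positive, after which Lemma \ref{lemma:vz<0planar} forces separation.

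Next I would compute the two invariants of $\vec{M}$. A direct calculation gives $\det\vec{M}=\gamma$ and a positive trace $\operatorname{tr}\vec{M}=\tfrac{(1+\gamma)(1-\rho^2)}{1+\rho^2}$, so the characteristic polynomial is $\lambda^2-(\operatorname{tr}\vec{M})\lambda+\gamma$. Since $\det\vec{M}=\gamma>0$ with positive trace, the eigenvalues are either two positive reals (when $\Delta:=(\operatorname{tr}\vec{M})^2-4\gamma\ge 0$) or a conjugate pair $\sqrt{\gamma}\,e^{\pm i\psi}$ with $\psi\in(0,\pi/2)$ (when $\Delta<0$). A short simplification shows that $\Delta\ge 0$ is exactly the stated inequality $\rho\le\tfrac{2}{\sqrt{\gamma}+1}-1$. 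In the language of Theorem \ref{thm:invconetheorem}, this line separates the case where $\vec{M}$ has a real positive dominant eigenvalue (hence an invariant proper cone) from the case where it has none, so the claimed dichotomy is the cone-theoretic shadow of the real-versus-complex alternative.

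In the real regime I would prove $a_k\ge 0$ by an elementary dominance estimate. Using $a_0=1$ and $a_1=\tfrac{1-\gamma\rho^2}{1+\rho^2}$, one checks that $a_1\ge\sqrt{\gamma}\ge\lambda_-$ throughout the CC range, so in $a_k=A\lambda_+^k+B\lambda_-^k$ the leading coefficient satisfies $A=\tfrac{a_1-\lambda_-}{\lambda_+-\lambda_-}\ge 0$. Then, since $\lambda_+\ge\lambda_->0$ and $A+B=a_0=1$, one obtains $a_k\ge A\lambda_-^k+B\lambda_-^k=(A+B)\lambda_-^k=\lambda_-^k>0$. The repeated-root boundary $\Delta=0$ is handled by the analogous formula $a_k=(1+\beta k)\gamma^{k/2}$ with $\beta\ge 0$. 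Hence $v^{(k)}\le 0$ for all $k$ and CC follows. In the complex regime I would write $a_k=\gamma^{k/2}R\cos(k\psi-\delta)$ with $R>0$ and a phase $\delta\in(0,\pi/2)$ fixed by $a_0,a_1$; since the increment $\psi$ is smaller than the length $\pi$ of the half-period on which the cosine is negative, some integer $k$ satisfies $k\psi-\delta\in(\pi/2,3\pi/2)$, giving $a_k<0$, while $a_{k'}\ge 0$ for all $k'<k$. Thus the motion reaches this impact, $v^{(k)}>0$, and Lemma \ref{lemma:vz<0planar} yields ICC.

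The main obstacle is not the spectral computation but the passage from the formal iteration of $\vec{M}$ to the physical impact sequence. One must be certain that in the ICC regime the damped oscillation $a_k$ changes sign at an \emph{integer} index that is actually realised — that the chatter has not already stopped for an unrelated reason before that step — and, in the CC regime, that the orbit never touches $\{v>0\}$ even though $\vec{p}^{(0)}$ does not lie in the minimal invariant cone of $\vec{M}$ (so a naive cone-containment argument fails and the dominance estimate is genuinely needed). Both points are settled by the monotone-ratio estimate together with the fact, implicit in Lemma \ref{lemma:vz<0planar}, that the normal velocity of the centre of mass can only increase across impacts, so the first sign change of $a_k$ is decisive. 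I expect the careful bookkeeping of the relabelling convention and the treatment of the boundary case $\Delta=0$ to demand the most attention in a fully written proof.
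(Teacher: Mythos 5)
Your reduction to the scalar sequence $a_k=[(\vec{U}_0\vec{P})^k]_{22}$ is a genuinely different and attractive route (the paper instead builds an explicit invariant cone from the eigenvectors), and your treatment of the positive-real and complex regimes is essentially sound. However, there is a concrete error that leaves part of the ``converse case'' unproved. The trace of $\vec{U}_0\vec{P}$ is $\operatorname{tr}=\frac{(1+\gamma)(1-\rho^2)}{1+\rho^2}$, which is \emph{negative} whenever $\rho>1$; it is not always positive as you assert. Consequently $\Delta=(\operatorname{tr})^2-4\gamma\geq 0$ is \emph{not} equivalent to \eqref{eq:CCcond}: it holds both when $\operatorname{tr}\geq 2\sqrt{\gamma}$ (which is \eqref{eq:CCcond}) and when $\operatorname{tr}\leq -2\sqrt{\gamma}$ (which is \eqref{eq:NCcond}, a regime with $\rho>1$ that the paper treats as its Case~2). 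In that second regime both eigenvalues are real and \emph{negative}, your dichotomy ``two positive reals or a conjugate pair of modulus $\sqrt{\gamma}$'' fails, and neither of your two arguments applies: the complex-regime oscillation argument is unavailable, while the dominance estimate $a_k\geq\lambda_-^k>0$ collapses because $\lambda_-^k$ alternates in sign. Worse, if one naively carried your ``real $\Rightarrow$ CC'' conclusion into this regime it would contradict the theorem, which asserts ICC there.

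The fix is not difficult but must be supplied: in the regime \eqref{eq:NCcond} one writes $a_k=A\lambda_1^k+B\lambda_2^k$ with $\lambda_1<\lambda_2<0$, checks that the coefficient $A=\frac{a_1-\lambda_2}{\lambda_1-\lambda_2}$ of the dominant (negative) eigenvalue is nonzero, and concludes that $a_k$ changes sign for some sufficiently large $k$ (this is exactly the paper's Case~2 argument, phrased via \eqref{eq:plim} and the negativity of $\lambda_{max}$). You also need to confirm that the first sign change is physically realized, which your monotonicity remark already covers. With that case added, and with the boundary subcases $\Delta=0$ on both branches handled as you indicate, your scalar argument would be a complete and somewhat more elementary alternative to the paper's cone construction.
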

We can reproduce the original proof of Theorem \ref{thm:goyal} by using invariant cones as follows. 
\begin{proof}
The matrix $\vec{U}_0\vec{P}$ can be expressed as
\begin{align}
\vec{U_0P}=\begin{bmatrix}
\frac{\gamma+1}{\rho^2+1}-1 & -\frac{\gamma+1}{\rho^2+1}\\
\rho^2 \frac{\gamma+1}{\rho^2+1} & \frac{\gamma+1}{\rho^2+1} - \gamma
\end{bmatrix}
\end{align}
Its eigenvalues and eigenvectors are
\begin{align}
\lambda_{max},\lambda_{min}=\frac{(1-\rho^2) (\gamma+1)\pm\sigma^{1/2}}{2(\rho^2 + 1)}\\
\vec{p}_{max},\vec{p}_{min}=\left(\frac{(1-\gamma)(\rho^2+1)\mp\sigma^{1/2}}{2 \rho^2( \gamma+1 )},-1 \right)^T
\label{eq:pmaxpmin}
\end{align}
where
\begin{align}
\sigma=\gamma^2 - 2\rho^4 \gamma - 12\rho^2 \gamma - 2 \rho^2 \gamma^2 + \rho^4 \gamma^2 - \rho^2 + \rho^4 -  \gamma + 1
\end{align}
The eigenvalues have the following properties (Fig. \ref{fig:sertekgorbe}):
\begin{enumerate}
\item If \eqref{eq:CCcond} is true, then $\sigma\geq 0$, and thus the eigenvalues are real. Furthermore $\vec{U}_0\vec{P}$ has positive trace and determinant implying that both eigenvalues are positive.
\item Similarly, if \eqref{eq:CCcond} is not satisfied furthermore
\begin{align}
\rho \geq \frac{-2}{\sqrt{\gamma} - 1}-1 \label{eq:NCcond}
\end{align}
then the eigenvalues are real and negative,
\item If none of the two conditions listed above are satisfied then they are complex.
\end{enumerate}
\begin{figure}[h]
\begin{center}
\includegraphics[width=84mm]{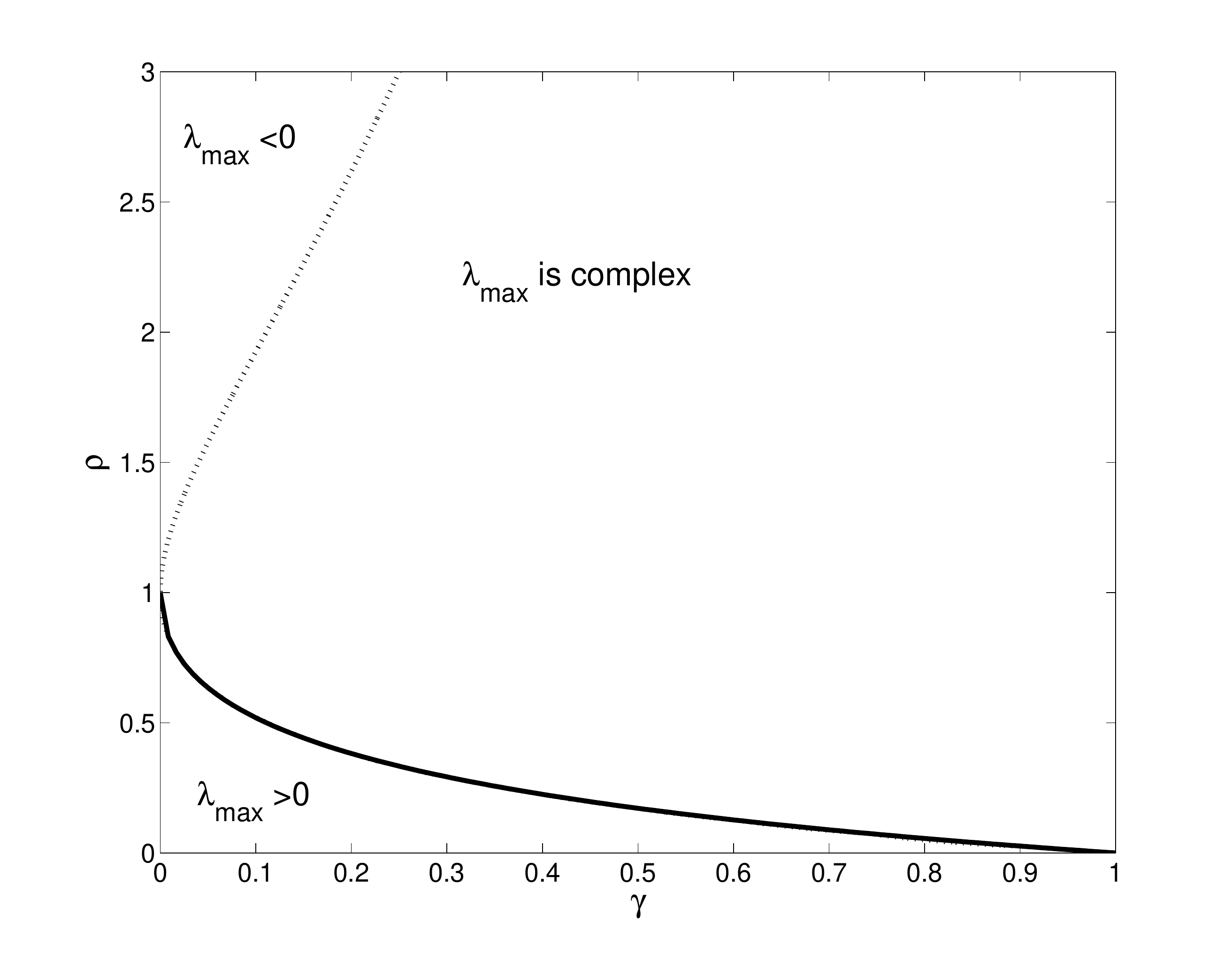}
\caption{Properties of $\lambda_{max}$. The solid and dotted lines are given by \eqref{eq:CCcond} and \eqref{eq:NCcond}}
\label{fig:sertekgorbe}
\end{center}
\end{figure}
\begin{figure}[h]
\begin{center}
\includegraphics[width=84mm]{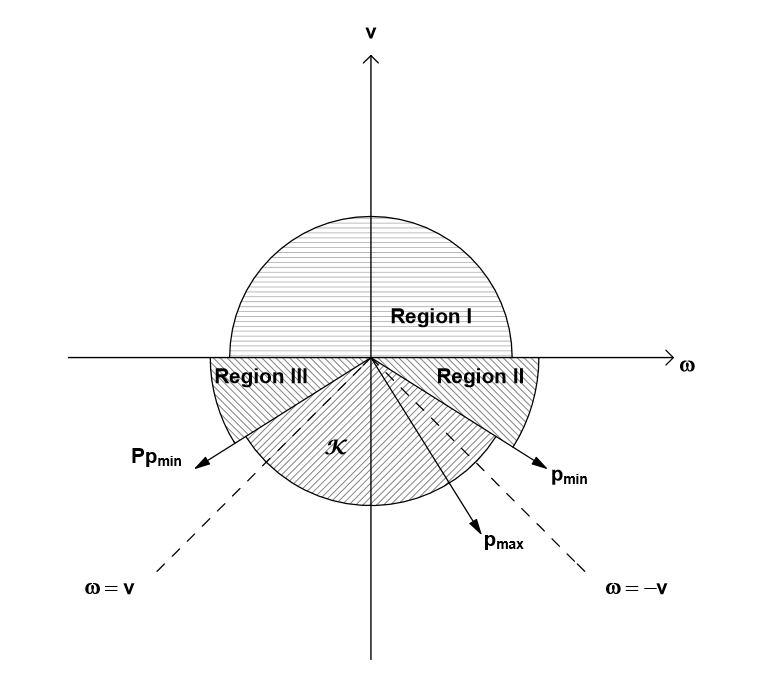}
\caption{Case 1 of the rod problem: the cone $\mathcal{K}$ generated by vectors $\vec{p}_{min}$ and $\vec{P}\vec{p}_{min}$ contains $\vec{p}_{max}$ and $\vec{p}_{trans}$ and is invariant to $\vec{U}_0\vec{P}$.}
\label{fig:Pak}
\end{center}
\end{figure}
In case 1, \eqref{eq:CCcond} and  \eqref{eq:pmaxpmin} yield
\begin{align}
\begin{split}
[1\quad 0] \vec{p}_{min} &=
\frac{(1-\gamma)(\rho^{-2}+1)+\rho^{-2}\sigma^{1/2}}{2(1+\gamma)}\geq\\
&\geq \frac{(1-\gamma)(\rho^{-2}+1)}{2(1+\gamma)}\\
&\geq \frac{1+\gamma^{1/2}}{1-\gamma^{1/2}}\geq 1
\end{split}
\end{align}
Hence, $\vec{p}_{min}$ is in the positive-negative quadrant of velocity space, with the angle between $\vec{p}_{min}$ and $(1, \ 0)^T$ not exceeding $\pi/4$ (Figure \ref{fig:Pak}).
Furthermore $\sigma \geq 0$ in \eqref{eq:pmaxpmin} implies that $\vec{p}_{max}$ is in the cone $\mathcal{K}$ spanned by $\vec{p}_{min}$ and $\vec{P}\vec{p}_{min}$. 

Next, we express  $\vec{P}\vec{p}_{min}$ as
\begin{align}
\vec{P}\vec{p}_{min}=\alpha \vec{p}_{max}-\beta \vec{p}_{min} \quad \alpha,\beta>0
\end{align} 
which means that the generating vectors of $\mK$ are mapped by $\vec{U}_0P$ into
\begin{align}
\begin{split}
\vec{U}_0\vec{P}\cdot \vec{P}\vec{p}_{min}&=\lambda_{max}\cdot \vec{P}\vec{p}_{min}+\\
&\quad +(\lambda_{max}-\lambda_{min})\beta\cdot \vec{p}_{min}
 \in\mathcal{K}
\end{split}\\
\vec{U}_0\vec{P}\vec{p}_{min}&=\lambda_{min}\vec{p}_{min} \in\mathcal{K}
\end{align}
hence $\mK$ is an invariant cone of $\vec{U}_0\vec{P}$.
The initial velocity specified by the theorem satisfies 
$$
p^{(0)}=[0\quad v^{(0)}]^T=\vec{P}[0\quad v^{(0)}]^T\in\mathcal{K}
$$
hence the generalized velocity remains in the invariant cone $\mathcal{K}$ after any number of impacts. Every point in $\mathcal{K}$ satisifes \eqref{eq:felfele}, and thus Lemma \ref{lemma:vz<0planar} implies that CC occurs.

In case 2, the repeated multiplication by $\vec{U}_0\vec{P}$ in \eqref{eq:pk1rod} and \eqref{eq:pk0rod} causes $\vec{p}^{(k)}$ for large values of $k$ to approach the dominant eigenvextor $\vec{p}_{max}$ in the following sense:
\begin{equation}
\lim_{k\rightarrow \infty}\vec{p}^{(k)}\lambda_{max}^{-k}=\alpha \vec{p}_{max}
\label{eq:plim}
\end{equation}
where $\alpha \in \mR$. Since $\lambda_{max}<0$, \eqref{eq:felfele} will be violated either for even or for odd large values of $k$. Thus, Lemma \ref{lemma:vz<0planar} implies ICC.

In case 3, the complex eigenvalues mean that multiplication by $\vec{U}_0\vec{P}$ stretches vectors and rotates them by a constant angle. The rotational component eventually leads to a violation of \eqref{eq:felfele}, thereby Lemma \ref{lemma:vz<0planar}  implies ICC.  
\end{proof}

The works \cite{goyal1,goyal2} did not examine how the completeness of the chatter changes if the initial motion includes a rotational component ($p^{(0)}=[v^{(0)}\;\omega^{(0)}]^T$ with $\omega^{(0)}\neq 0$). We can also use invariant cones to answer this more general question:
\begin{theorem}\label{thm:goy1}
The rod undergoes CC if and only if \eqref{eq:CCcond} is satisfied, and its initial velocity $\vec{p}^{(0)}$ is in the cone $\mathcal{K}$.  
\end{theorem}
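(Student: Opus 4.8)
The plan is to characterize complete chatter entirely through the forward orbit of the generalized velocity under the map $\vec{U}_0\vec{P}$, using Lemma~\ref{lemma:vz<0planar}: a rod has only two vertices and admits no PCC, so CC occurs exactly when \eqref{eq:felfele} is preserved at every impact. I would first record the clean reformulation that CC holds if and only if $[0\ 1]\,(\vec{U}_0\vec{P})^k\vec{p}^{(0)}\le 0$ for all $k\ge 0$ (taking the orbit of $\vec{P}\vec{p}^{(0)}$ when vertex $0$ strikes first, as in \eqref{eq:pk0rod}). The nontrivial direction of this equivalence is that once $[0\ 1]\vec{p}^{(k)}\le 0$, the just-struck vertex moves up while $[0\ 1]\vec{p}^{(k)}\le 0$ forces the opposite vertex to move down, so the next impact is guaranteed and the orbit never stalls. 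The ``if'' part of the theorem is then immediate from Theorem~\ref{thm:goyal}: in Case~1 the cone $\mathcal{K}$ is invariant under $\vec{U}_0\vec{P}$ with $\mathcal{K}\subseteq\{[0\ 1]\vec{p}\le 0\}$, and $\mathcal{K}$ is symmetric under $\vec{P}$ (its generators $\vec{p}_{min}$ and $\vec{P}\vec{p}_{min}$ are exchanged by $\vec{P}$); hence, whichever vertex strikes first, $\vec{p}^{(0)}\in\mathcal{K}$ keeps the entire orbit in $\mathcal{K}$ and thus in the lower half-plane, giving CC.

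For the ``only if'' direction I would first split off condition \eqref{eq:CCcond}. If \eqref{eq:CCcond} fails we are in Case~2 or Case~3 of the proof of Theorem~\ref{thm:goyal}, where the eigenvalues of $\vec{U}_0\vec{P}$ are negative or complex; those arguments never used the pure-translation hypothesis and show that \eqref{eq:felfele} is violated for every nonzero initial velocity. The only gap, initial data aligned with $\vec{p}_{min}$, is covered since in Case~2 both eigenvalues are negative and $[0\ 1]\vec{p}_{min}=-1$, so the sign of $[0\ 1]\vec{p}^{(k)}$ still alternates. Thus CC forces Case~1, and it remains to prove that in Case~1, $\vec{p}^{(0)}\notin\mathcal{K}$ implies ICC.

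This last step is the heart of the argument, and I would attack it through the induced dynamics on the circle of directions. In Case~1 the map has $\lambda_{max}>\lambda_{min}>0$, so on the projective line $\pm\vec{p}_{max}$ are attracting and $\pm\vec{p}_{min}$ are repelling, and every other direction flows monotonically from a repelling to the neighbouring attracting point. By Theorem~\ref{thm:goyal} the ray $\vec{p}_{min}$ lies in the fourth quadrant within $\pi/4$ of the horizontal axis, $\vec{p}_{max}\in\mathcal{K}$ lies in the open lower half-plane, and $\mathcal{K}$ is the wedge bounded by $\vec{p}_{min}$ and $\vec{P}\vec{p}_{min}$ containing the downward vertical. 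Using the $\vec{P}$-symmetry it suffices to treat an admissible $\vec{p}^{(0)}$ for which vertex $1$ strikes first, $[-1\ 1]\vec{p}^{(0)}<0$, lying outside $\mathcal{K}$. I would show this places $\vec{p}^{(0)}$ beyond the repelling ray $\vec{p}_{min}$ on the side away from $\mathcal{K}$, where either $[0\ 1]\vec{p}^{(0)}>0$ already, or the direction sits in the open sector between $\vec{p}_{min}$ and the positive horizontal axis. In the latter case the flow is pushed away from $\vec{p}_{min}$ toward the attracting ray $-\vec{p}_{max}$, which lies in the open upper half-plane; hence after finitely many impacts the direction crosses the horizontal axis, $[0\ 1]\vec{p}^{(k)}>0$, and Lemma~\ref{lemma:vz<0planar} gives ICC. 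The vertex-$0$-first case is the $\vec{P}$-image of this one.

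The main obstacle I anticipate is precisely this transient control on the non-invariant side of $\mathcal{K}$: invariance of $\mathcal{K}$ (already in Theorem~\ref{thm:goyal}) says nothing about points outside it, and inspecting the limiting direction alone is misleading, since an orbit may converge to $\vec{p}_{max}\in\mathcal{K}$ while first dipping into $\{[0\ 1]\vec{p}>0\}$. The correct bookkeeping is that the two boundary rays of $\mathcal{K}$ are the repelling eigenray $\vec{p}_{min}$ and its mirror, and that the vertex-selection rule \eqref{eq:pk1rod}--\eqref{eq:pk0rod} reflects any escaping initial datum to the far side of $\vec{p}_{min}$. Making this geometry rigorous, in particular verifying that the escaping sector is carried by $\vec{U}_0\vec{P}$ monotonically across the horizontal axis rather than back into $\mathcal{K}$, is the step that requires care, and is where I expect the explicit expressions \eqref{eq:pmaxpmin} to be needed.
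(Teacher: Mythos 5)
Your proposal is correct and follows essentially the same route as the paper: the ``if'' direction via invariance of $\mathcal{K}$ (using $\vec{P}\mathcal{K}=\mathcal{K}$ for either first-impact vertex), the failure of \eqref{eq:CCcond} via Cases 2--3 of Theorem~\ref{thm:goyal}, and the remaining case by splitting the complement of $\mathcal{K}$ into the sector already violating \eqref{eq:felfele}, the sector excluded by the vertex-selection constraint \eqref{eq:r1down}, and the sector whose orbits converge in direction to $-\vec{p}_{max}$ (your projective-line picture is exactly the paper's decomposition $\vec{p}^{(0)}=-\alpha\vec{p}_{max}+\beta\vec{p}_{min}$, $\alpha,\beta>0$, in Region~II). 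Your extra care about the orbit aligned with $\vec{p}_{min}$ in Case~2 and about the next impact being guaranteed are minor refinements the paper leaves implicit.
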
 

\begin{proof}
The proof of the if part follows from the invariance property of $\mathcal{K}$ in the same manner as explained in the proof of Theorem \ref{thm:goyal}. 
If \eqref{eq:CCcond} is not satisfied, then the proof of the only if part is also the same as in Theorem \ref{thm:goyal}.

Our only remaining task is to prove that ICC occurs if \eqref{eq:CCcond} is satisfied but $\vec{p}^{(0)}\notin\mathcal{K}$. Assume that the first impact occurs at point 1, i.e. \eqref{eq:pk1rod} applies. Then, $\vec{p}^{(0)}$ must be in one of the following regions of Fig. \ref{fig:Pak}:
\begin{itemize}
\item{Region I}: all points of this region violate \eqref{eq:felfele} thus Lemma \ref{lemma:vz<0planar} implies ICC. 
\item{Region II}: here, $\vec{p}^{(0)}$ can be decomposed as 
$\vec{p}^{(0)}=-\alpha \vec{p}_{max} + \beta \vec{p}_{min}$ with $\alpha,\beta>0$. For large values of $k$, we will again have \eqref{eq:plim} but now with $\alpha<0<\lambda_{max}$. The second coordinate of $-\vec{p}_{max}$ is positive, thus Lemma \ref{lemma:vz<0planar} implies ICC.
\item{Region III}: this region is on the left side of $\vec{P}\vec{p}_{min}$, where every point contradicts \eqref{eq:r1down}. Hence $\vec{p}^{(0)}$ cannot be in Region III. 
\end{itemize}
This completes the proof if the first impact occurs at point 1. In the converse case, the roles of Region I and III are interchanged, otherwise the proof remains identical.
\end{proof}

\section{Complete chatter of regular polygons}
\label{sec:4}

\subsection{ A sufficient condition via common invariant cones}\label{sec:commoncone}

We now consider objects with $n\geq 3$ possible impact locations and thus $n$ impact maps. Even though the same point may not hit the ground in two subsequent impacts, this constraint leaves us with infinitely many possible sequences of impact locations, and the approach outlined in the previous section cannot be used. Nevertheless we can obtain a \textit{sufficient} condition of CC or PCC by using the notion of common invariant cones:

As an example, consider an object whose contact points form a regular $n$-gon (Fig \ref{fig:sokszog}) with vertices, 
\begin{align}
\vec{r}_0^l&=[1\;0\;z_*]^T\\
\vec{r}_i^l&=\vec{P}_{2i\pi/n}\vec{r}_0^l\quad \text{for}\quad i=1,2,...,n-1
\end{align}
where $z_*$ is an arbitary scalar and
\begin{align}
\vec{P}_{\alpha}=
\left[
\begin{matrix}
\cos\alpha & -\sin \alpha & 0\\
\sin\alpha & \cos\alpha & 0\\
0&0&1
\end{matrix}
\right]
\label{eq:Palpha}
\end{align}
\begin{figure}[h]
\begin{center}
\includegraphics[width=84mm]{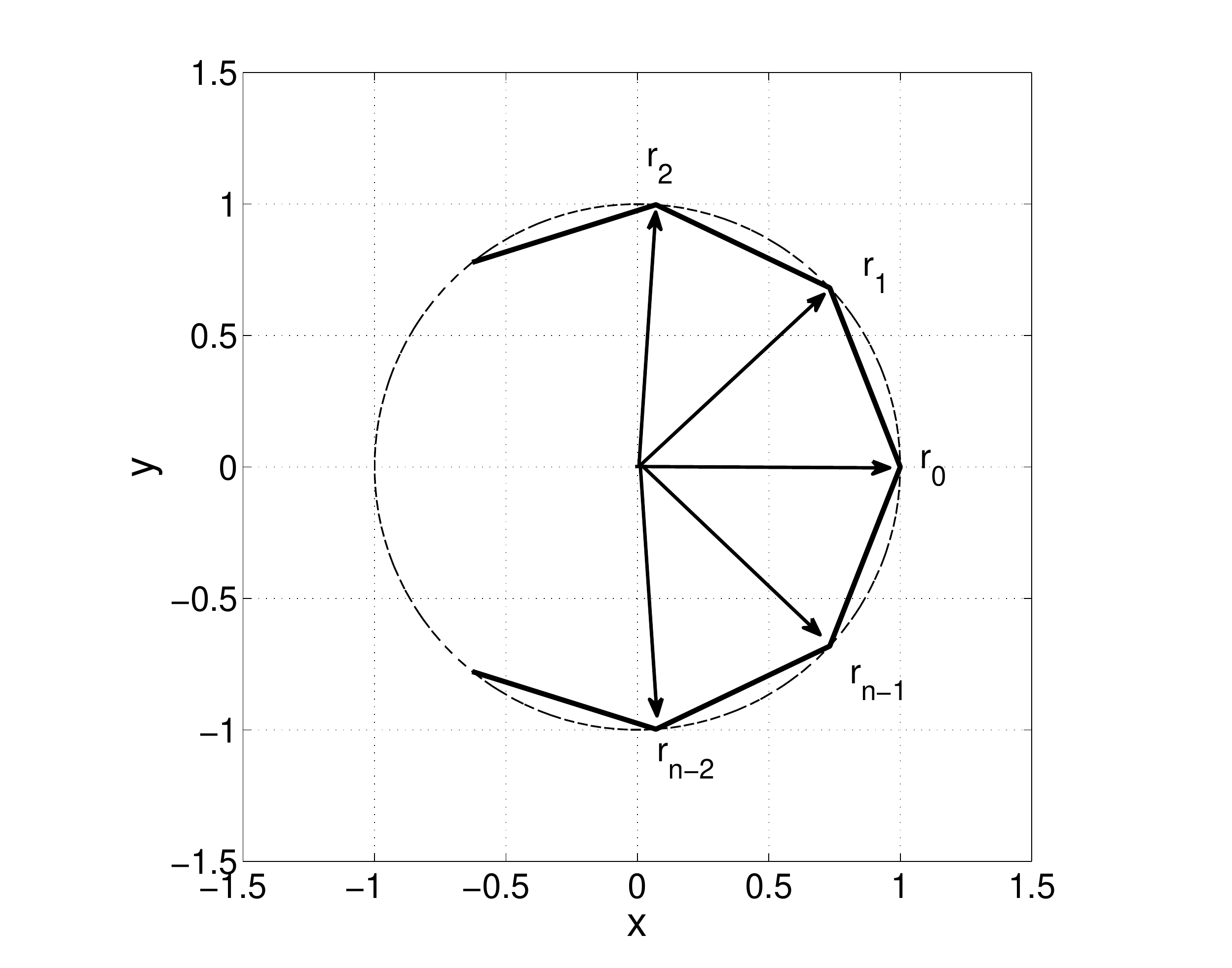}
\caption{Regular n-gon with vertices $\vec{r}_0...\vec{r}_{n-1}$}
\label{fig:sokszog}
\end{center}
\end{figure}
It is assumed that the principal radii of gyration reflect the symmetry of the set of contact points, i.e. $\rho_x=\rho_y$. 
We will exploit the symmetry of the object in order to simplify the analysis. Specifically, before an impact occurs (including the first one), we change labelling and reference frames. The exact transformations depend on the index $i$ of the vertex involved in the upcoming impact. Specifically, we perform the following two steps.

First, the local  coordinate system is rotated by angle $i\cdot 2\pi/n$ around the local $z$ axis and the global frame is rotated by the same angle about the global $Z$ axis. These transformations of the reference frames correspond to the following transformations of the state vectors and of the coordinates of vertices:
\begin{align}
\vec{q}&\rightarrow \vec{P}_{-2i\pi/n}\vec{q}\label{eq:Pq2}\\
\vec{p}&\rightarrow \vec{P}_{-2i\pi/n}\vec{p}\label{eq:Pp2}\\
\vec{r}_i^l&\rightarrow \vec{P}_{-2i\pi/n}\vec{r}_i^l \label{eq:Pri2}
\end{align}
All other parameters and equations governing the dynamics of the system remain unchanged.

Second, the labels of the vertices are shifted cyclically such that the previous vertex $i$ becomes vertex 0. This step corresponds to the inverse transformation of \eqref{eq:Pri2},
hence the two steps leave $\vec{r}_i^l$ unchanged. As a consequence of the technical steps, the next point to hit the ground will always be point 0. Thus, the combined effect of the technical steps and the impact to the generalized velocity is the linear transformation
\begin{align}
\vec{p}\rightarrow \vec{U}_0\vec{P}_{-2i\pi/n}\vec{p}
\label{eq:ptransformation}
\end{align} 
According to \eqref{eq:impactequ}, the impact map $\vec{U}_0$ can be expressed as
\begin{align}
\vec{U}_0=
\left[
\begin{matrix}
1 & 0 & 0\\
0 & 1 - \frac{\gamma + 1}{\rho^2 + 1} & \frac{\gamma + 1}{\rho^2 + 1}\\
0 & \frac{\rho^2(\gamma + 1)}{\rho^2 + 1} & \frac{\gamma + 1}{\rho^2 + 1}-\gamma
\end{matrix}
\right]
\label{eq:U0}
\end{align}

The generalized velocity of the rod after the first impact will be
\begin{align}
\vec{p}^{(1)}=\vec{U}_0 \vec{P}_{-2i_{0}\pi/n}\vec{p}^{(0)}, i_0\in\{0,1,...,n-1\}
\end{align}
whereas after $k$ impacts, we will have
\begin{align}
\begin{split}
\vec{p}^{(k)}&=(\vec{U}_0 \vec{P}_{-2i_{n-1}\pi/n})(\vec{U}_0 \vec{P}_{-2i_{n-2}\pi/n})...\\
&...(\vec{U}_0 \vec{P}_{-2i_1\pi/n})\vec{p}^{(1)} \label{eq:pk1}
\end{split}
\end{align}
where the integers $i_1,i_2,...,i_{n-1}\in\{1,2,...,n-1\}$ depend on the actual collision sequence. We can now formulate a sufficient condition of CC or PCC 

\begin{theorem}
If the contact points of $\mB$ form a regular $n$-gon and there exists a cone $\mathcal{K} \subset \mathbb{V}$ such that
\begin{enumerate}
\item   all $\vec{p}\in\mK$ satisfy \eqref{eq:vz<0}.
\item $\mK$ is a common invariant cone of the set of matrices 
$$
\{\vec{U}_0\vec{P}_{-2\pi/n},\vec{U}_0\vec{P}_{-4\pi/n},...,\vec{U}_0\vec{P}_{(-(2n-2)\pi/n}\}
$$
\item $\vec{p}^{(1)}\in\mK$
\end{enumerate} 
 then the object undergoes CC or PCC.
\label{thm:polygon1}
\end{theorem}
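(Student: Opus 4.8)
The plan is to show that the three hypotheses together force condition \eqref{eq:vz<0} to hold throughout the \emph{entire} motion, and then to invoke Lemma \ref{lemma:vz<0} together with the monotonicity argument preceding it in order to conclude that the object cannot leave the surface and must therefore undergo CC or PCC. So the whole argument reduces to propagating membership in $\mathcal{K}$ along an arbitrary collision sequence.

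First I would set up an induction on the impact count $k$ to prove that $\vec{p}^{(k)} \in \mathcal{K}$ for every $k \geq 1$. The base case $\vec{p}^{(1)} \in \mathcal{K}$ is precisely hypothesis (3), which is why the induction can begin at the first post-impact state and we need not worry about the (possibly $i_0 = 0$) map producing $\vec{p}^{(1)}$ from $\vec{p}^{(0)}$. For the inductive step, recall from \eqref{eq:pk1} that the transition from $\vec{p}^{(k)}$ to $\vec{p}^{(k+1)}$ is the linear map $\vec{U}_0 \vec{P}_{-2i_k\pi/n}$ produced by the relabelling scheme \eqref{eq:ptransformation}, where $i_k$ indexes the vertex involved in the next impact. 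The decisive observation is that $i_k$ is never $0$: by the restitution law \eqref{eq:utk3}, the just-struck vertex acquires a post-impact normal velocity equal to $-\gamma$ times its (negative) pre-impact normal velocity, hence strictly positive, and since velocities are piecewise constant in the gravity-free model it moves away from $\mathcal{P}$ and cannot be the next to collide. Consequently each transition matrix lies in the family listed in hypothesis (2), and common invariance of $\mathcal{K}$ under that family gives $\vec{p}^{(k+1)} = \vec{U}_0 \vec{P}_{-2i_k\pi/n}\,\vec{p}^{(k)} \in \mathcal{K}$.

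Once $\vec{p}^{(k)} \in \mathcal{K}$ for all $k$, hypothesis (1) guarantees that every $\vec{p}^{(k)}$ satisfies \eqref{eq:vz<0}. Because the continuous dynamics keeps $\vec{p}$ constant between successive impacts, the inequality \eqref{eq:vz<0} holds at every instant of the motion, not merely at the impact times. Lemma \ref{lemma:vz<0} then rules out ICC, while the monotonicity of the center-of-mass normal velocity under the unilateral contact forces (noted just before Lemma \ref{lemma:vz<0}) shows that the center of mass keeps approaching $\mathcal{P}$; the object can therefore come to rest only through an infinite accumulation of impacts, i.e.\ by CC or PCC.

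I expect the induction and the final appeal to Lemma \ref{lemma:vz<0} to be routine. The one step that requires genuine care — and the reason hypothesis (2) can legitimately omit the index $i = 0$ — is the claim that no vertex is struck twice in immediate succession, so that the admissible transition maps are confined to the stated $n-1$ matrices rather than all $n$. I would make this rigorous from \eqref{eq:utk3} as sketched above, taking particular care that the cyclic relabelling in \eqref{eq:Pri2}--\eqref{eq:ptransformation} does not reintroduce the previously struck vertex as the next candidate. This is the only place where the mechanical, as opposed to purely linear-algebraic, content of the model enters, and it is where I would concentrate the rigor.
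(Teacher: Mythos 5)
Your proposal is correct and takes essentially the same route as the paper: the paper's proof is the two-line observation that conditions 2 and 3 give $\vec{p}^{(k)}\in\mathcal{K}$ for all $k\geq 1$, whence condition 1 and Lemma \ref{lemma:vz<0} finish the argument, with the exclusion of $i=0$ (the just-struck vertex rebounding and hence not colliding next) established in the surrounding text of Sec.~\ref{sec:commoncone} exactly as you argue from \eqref{eq:utk3}. Your added care about the induction and the relabelling is a faithful expansion of the same argument, not a different one.
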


\begin{proof}
Conditions 2 and 3 imply that $\vec{p}^{(k)}\in\mK$ for all $k\geq 1$. At the same time, condition 1 and Lemma \ref{lemma:vz<0} imply the statement of the theorem.
\end{proof}

Unfortunately, it turns out that this result is very restrictive. Clearly, a set of matrices cannot have a common invariant cone unless all of them have have invariant cones individually. In our case, if $n$ is even, then the set of matrices includes $\vec{U}_0\vec{P}_{\pi}$. The dominant eigenvalue of $\vec{U}_0\vec{P}_{\pi}$ is $-1$, and thus Theorem \ref{thm:invconetheorem} implies that $\vec{U}_0\vec{P}_{\pi}$ does not posses an invariant cone. Hence, Theorem \ref{thm:polygon1} is in this case useless. We are in a similar situation in the case of odd $n$: there are large regimes in parameter space where at least one of the matrices has no invariant cone, nevertheless numerical simulations suggest that the object undergoes CC.  

\subsection{Constraints of collision sequences}\label{sec:order}

To improve the applicability of the common invariant cone approach, we now identify constraints of impact sequences during CC, and use the new concept of effectively invariant cones.

Recall that vertex $i$ touches the ground if and only if $\ \vec{f}_i^T \vec{q}=0$, where $\vec{f}_i$ is given by \eqref{eq:fi}. These points form a plane $\mathcal{F}_i$ in $\mathbb{C}$. The set of penetration-free configurations takes the form of a polyhedral cone $\mathcal{F}$, with $n$ facets.
$$
\vec{q}\in \mathcal{F} \iff \vec{f}_i^T \vec{q} \geq 0 \quad \vert \ \forall i \in \{0,1...n-1\}
$$
Fig. \ref{fig:conditions}(a,c,e) illustrates this cone for $n=4$. If an impact at vertex $0$ is followed by an impact at $j$, then the system moves from an initial configuration $\vec{q}_0 \in \mathcal{F}_0 \cap \mathcal{F}$ to a final configuration $\vec{q}_j \in \mathcal{F}_j\cap \mathcal{F}$ along a straight trajectory, i.e. 
\begin{align}
\vec{q}_j=\vec{q}_0+\tau \vec{p} \quad \vert \ \tau>0, \ \vec{p} \in \mathbb{V}
\label{eq:i>>j}
\end{align}
Here $\vec{p}$ is the (approximately constant) generalized velocity of the body between the two collisions and $\tau$ is the time spent between the collisions. According to Lemma \ref{lemma:vz<0}, the trajectory also satisfies \eqref{eq:vz<0}.

Those values of $\vec{p}$ for which such a trajectory exists, form a cone $\mathcal{C}_{j}$

We now define the  point $\vec{m}_{k,l}$ ($k,l\in{0,1,...,n-1}$) in configuration space, as the solution of the three equations:
\begin{align}
\vec{f}_k^T \vec{m}_{k,l}=0 \label{mij1} \\  
\vec{f}_l^T \vec{m}_{k,l}=0 \label{mij2} \\
\vec{u}_3^T \vec{m}_{k,l}=1\label{mij3}
\end{align}
The point $\vec{m}_{k,l}$ corresponds to a configuration in which the distance of the centre of mass from $\mP$ is 1, while vertex $k$ and vertex $l$ are in contact with the ground
Then,
\begin{lemma}\label{lem:cone}
The cone $\mathcal{C}_{j}$ is generated by four vectors 
$$\{(\vec{m}_{j-1,j}-\vec{m}_{n-1,0}),-\vec{m}_{n-1,0},-\vec{m}_{0,1},(\vec{m}_{j,j+1}-\vec{m}_{0,1})\}$$ 
\end{lemma}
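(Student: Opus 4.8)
The plan is to translate the problem into the geometry of the penetration-free cone $\mathcal{F}$ and then pass to velocity space. First I would establish the facet structure of $\mathcal{F}$. Since the height $h_l=\vec{f}_l^T\vec{q}$ of vertex $l$ is an affine function of its planar coordinates $(x_l,y_l)$, a configuration in which two vertices $k,l$ sit at height zero while all others stay nonnegative belongs to $\mathcal{F}$ if and only if the line through $k$ and $l$ supports the convex contact polygon, i.e.\ if and only if $k$ and $l$ are adjacent. Consequently the facet $\mathcal{F}_i$ (a pointed two-dimensional cone, because $\mathcal{F}$ is pointed) has exactly the two extreme rays $\vec{m}_{i-1,i}$ and $\vec{m}_{i,i+1}$; in particular $\mathcal{F}_0=\mathrm{cone}\{\vec{m}_{n-1,0},\vec{m}_{0,1}\}$ and $\mathcal{F}_j=\mathrm{cone}\{\vec{m}_{j-1,j},\vec{m}_{j,j+1}\}$.

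Next I would characterize $\mathcal{C}_j$. Because the motion \eqref{eq:i>>j} is a straight segment and $\mathcal{F}$ is convex, the segment joining any $\vec{q}_0\in\mathcal{F}_0$ to any $\vec{q}_j\in\mathcal{F}_j$ stays in $\mathcal{F}$, and for endpoints in the relative interiors of the two facets its interior lies in the interior of $\mathcal{F}$, so vertex $j$ is indeed the first vertex to re-contact $\mathcal{P}$. Hence the admissible velocities are precisely the nonnegative multiples of the differences $\vec{q}_j-\vec{q}_0$; together with the constraint $\vec{u}_3^T\vec{p}\le 0$ imposed by Lemma \ref{lemma:vz<0}, this gives
\[
\mathcal{C}_j=(\mathcal{F}_j-\mathcal{F}_0)\cap\{\vec{p}:\vec{u}_3^T\vec{p}\le 0\},
\]
where $\mathcal{F}_j-\mathcal{F}_0=\mathrm{cone}\{\vec{m}_{j-1,j},\vec{m}_{j,j+1},-\vec{m}_{n-1,0},-\vec{m}_{0,1}\}$ is the Minkowski difference of the two facet cones.

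It then remains to compute the generators of this clipped cone. By \eqref{mij3} every $\vec{m}_{k,l}$ has third coordinate equal to $1$, so the two generators inherited from $\mathcal{F}_j$ lie in the plane $\{\vec{u}_3^T\vec{p}=1\}$ and the two inherited from $-\mathcal{F}_0$ lie in $\{\vec{u}_3^T\vec{p}=-1\}$. Intersecting with $\{\vec{u}_3^T\vec{p}\le 0\}$ therefore keeps $-\vec{m}_{n-1,0}$ and $-\vec{m}_{0,1}$ and replaces every edge of the cone crossing $\{\vec{u}_3^T\vec{p}=0\}$ by its crossing point. The resulting face in $\{\vec{u}_3^T\vec{p}=0\}$ is the cone over the Minkowski difference of the two edge-segments $\mathcal{F}_j\cap\{h=1\}$ and $\mathcal{F}_0\cap\{h=1\}$, a parallelogram whose four vertices are $\vec{m}_{j-1,j}-\vec{m}_{n-1,0}$, $\vec{m}_{j-1,j}-\vec{m}_{0,1}$, $\vec{m}_{j,j+1}-\vec{m}_{n-1,0}$ and $\vec{m}_{j,j+1}-\vec{m}_{0,1}$. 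The lemma is thus equivalent to the assertion that the two angular extremes of this planar cone are the opposite pair $\vec{m}_{j-1,j}-\vec{m}_{n-1,0}$ and $\vec{m}_{j,j+1}-\vec{m}_{0,1}$, the remaining two vertices lying in the cone they span.

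I expect this last step to be the main obstacle: it is the combinatorial question of which two facets of the three-dimensional cone $\mathcal{F}_j-\mathcal{F}_0$ straddle the plane $\{\vec{u}_3^T\vec{p}=0\}$, equivalently which pairing of the $\mathcal{F}_j$-edges with the $\mathcal{F}_0$-edges is realized. I would settle it by substituting the explicit coordinates of the regular-$n$-gon configurations $\vec{m}_{k,k+1}$, obtained from \eqref{eq:koord}, \eqref{eq:fi} and \eqref{mij1}--\eqref{mij3}, and verifying that $\vec{m}_{j-1,j}-\vec{m}_{0,1}$ and $\vec{m}_{j,j+1}-\vec{m}_{n-1,0}$ are nonnegative combinations of the claimed generators; the requisite sign pattern is governed by the cyclic order of the edges $(n-1,0),(0,1),\dots,(j-1,j),(j,j+1)$ around the polygon and ultimately by its convexity and rotational symmetry. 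Degenerate sub-cases --- $j$ adjacent to $0$, where $\vec{m}_{j-1,j}$ or $\vec{m}_{j,j+1}$ coincides with a generator of $\mathcal{F}_0$, and highly symmetric configurations where the parallelogram collapses to a segment (as for the square with $j=2$) --- would be verified directly but should cause no essential trouble.
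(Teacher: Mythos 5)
Your route is genuinely different from the paper's. The paper works purely algebraically: it writes $\vec{q}_0=\alpha_1\vec{m}_{0,1}+\alpha_{-1}\vec{m}_{n-1,0}$ and $\vec{q}_j=\beta_1\vec{m}_{j,j+1}+\beta_{-1}\vec{m}_{j-1,j}$ with nonnegative coefficients, notes that \eqref{eq:vz<0} becomes $\alpha_1+\alpha_{-1}\ge\beta_1+\beta_{-1}$ so that at most one of $\alpha_1-\beta_1$, $\alpha_{-1}-\beta_{-1}$ can be negative, and in the bad cases repairs the offending sign by substituting an affine expression of $\vec{m}_{n-1,0}$ (resp.\ $\vec{m}_{0,1}$) in terms of the other three points whose coefficients carry the sign pattern \eqref{eq:gammak}. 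You instead identify $\mathcal{C}_j$ as the half-space clip of the Minkowski-difference cone $\mathrm{cone}\{\vec{m}_{j-1,j},\vec{m}_{j,j+1},-\vec{m}_{n-1,0},-\vec{m}_{0,1}\}$ and read off the generators of the clipped cone from its cross-section at $\vec{u}_3^T\vec{p}=0$. This is conceptually cleaner and makes it transparent why exactly these four generators appear. Both arguments rest on the same unstated input, the identification of the facets $\mathcal{F}_0\cap\mathcal{F}$ and $\mathcal{F}_j\cap\mathcal{F}$ as the two-dimensional cones over the adjacent-pair points $\vec{m}$, which you justify via supporting lines of the convex polygon and the paper simply assumes.

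There is, however, a real gap: the decisive combinatorial fact --- that among the four vertices of the cross-sectional parallelogram the angular extremes are $\vec{m}_{j-1,j}-\vec{m}_{n-1,0}$ and $\vec{m}_{j,j+1}-\vec{m}_{0,1}$, equivalently that $\vec{m}_{j-1,j}-\vec{m}_{0,1}$ and $\vec{m}_{j,j+1}-\vec{m}_{n-1,0}$ are nonnegative combinations of these two --- is only announced (``I would settle it by substituting the explicit coordinates''), never carried out. This is not a routine detail to be deferred: it is where the entire content of the lemma sits, and it is the exact counterpart of the sign pattern \eqref{eq:gammak} on which the paper's own case analysis hinges. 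Until that verification is done for all $j$ and $n$, including the degenerate cases you flag, what you have is a correct reduction rather than a proof. The claim is true (it follows from the cyclic order of the polygon's edges), and your proposed coordinate check for the regular $n$-gon would close it, so the plan is viable but incomplete. A minor further point: your remark that a segment between relative-interior points of the two facets makes vertex $j$ ``the first vertex to re-contact'' claims more than is needed or warranted; the paper only asserts that membership in $\mathcal{C}_j$ is a \emph{necessary} condition for the impact sequence, and the lemma requires nothing stronger.
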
  
The proof of Lemma \ref{lem:cone} is given in the Appendix. 
One can use cross products of adjacent generating vectors to construct the inward pointing normals of the four facets of  $\mathcal{C}_{j}$. Thus, $\vec{p}\in \mathcal{C}_{j}$ if and only if all of the following relations hold:
\begin{align}
(-\vec{m}_{n-1,0} \times -\vec{m}_{0,1})^T \vec{p}= \vec{f}_0^T \vec{p}  &\geq 0 \label{eq:felt1} \\
\begin{split}
((\vec{m}_{j,j+1}-\vec{m}_{0,1})\times (\vec{m}_{j-1,j}-\vec{m}_{n-1,0}))^T \vec{p} &=\\
-\vec{u}_3^T \vec{p}&\geq 0
\end{split} \label{eq:felt2}\\
(\vec{m}_{j,j+1} \times \vec{m}_{0,1})^T \vec{p} &\geq 0 \label{eq:felt3} \\ 
(\vec{m}_{n-1,0} \times \vec{m}_{j-1,j})^T \vec{p}  &\geq 0 \label{eq:felt4}
\end{align}

Condition (\ref{eq:felt1}) means that vertex 0 goes upwards after collision. Condition (\ref{eq:felt2}) is indeed equivalent of \eqref{eq:vz<0}. The remaining two inequalities are nontrivial necessary conditions for the $j$-th vertex to collide before any other vertex does. These four conditions will be crucial for our main results.

As an example consider the case $n=4$ of a square-shaped object. In this case, $\vec{r}_0^l=[1\quad 0\quad 0]^T$; $\vec{r}_1^l=[0\quad 1\quad 0]^T$;$\vec{r}_2^l=[-1\quad 0\quad 0]^T$ and $\vec{r}_3^l=[0\quad -1\quad 0]^T$. Then, $\vec{f}_i$ is given by \eqref{eq:fi}, and \eqref{mij1}-\eqref{mij3} yield
\begin{align}
\vec{m}_{0,1}&=[-1\;1\;1]^T\label{eq:squarefelt1}\\
\vec{m}_{1,2}&=[-1\;-1\;1]^T\\
\vec{m}_{2,3}&=[1\;-1\;1]^T\\
\vec{m}_{3,0}&=[1\;1\;1]^T\label{eq:squarefelt4}
\end{align}

Figure \ref{fig:conditions}(a,c,e) illustrate the planes $\mathcal{F}_i$, the points $\vec{m}_{i,j}$, and the generating vectors of the cones $\mathcal{C}_i$ for $i=1,2,3$ in configuration space. Panels (b,d,f) of the figure depict the cones $\mathcal{C}_i$ in velocity space.

 \begin{figure*}[t!]
    \centering
    \begin{subfigure}[t]{0.5\textwidth}
        \centering
        \includegraphics[width=60mm]{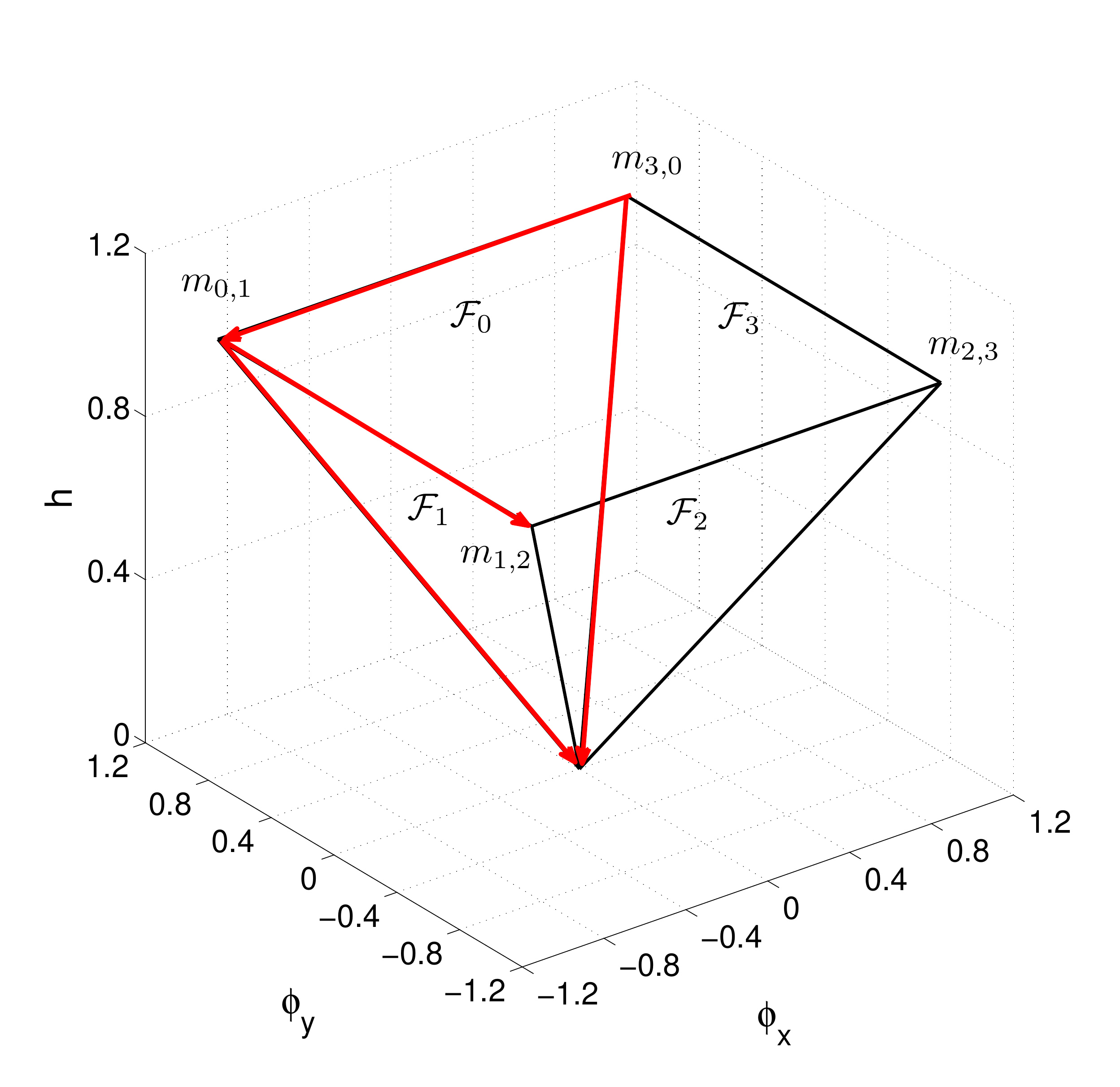}
        \caption{}
    \end{subfigure}%
    ~ 
    \begin{subfigure}[t]{0.5\textwidth}
        \centering
        \includegraphics[width=60mm]{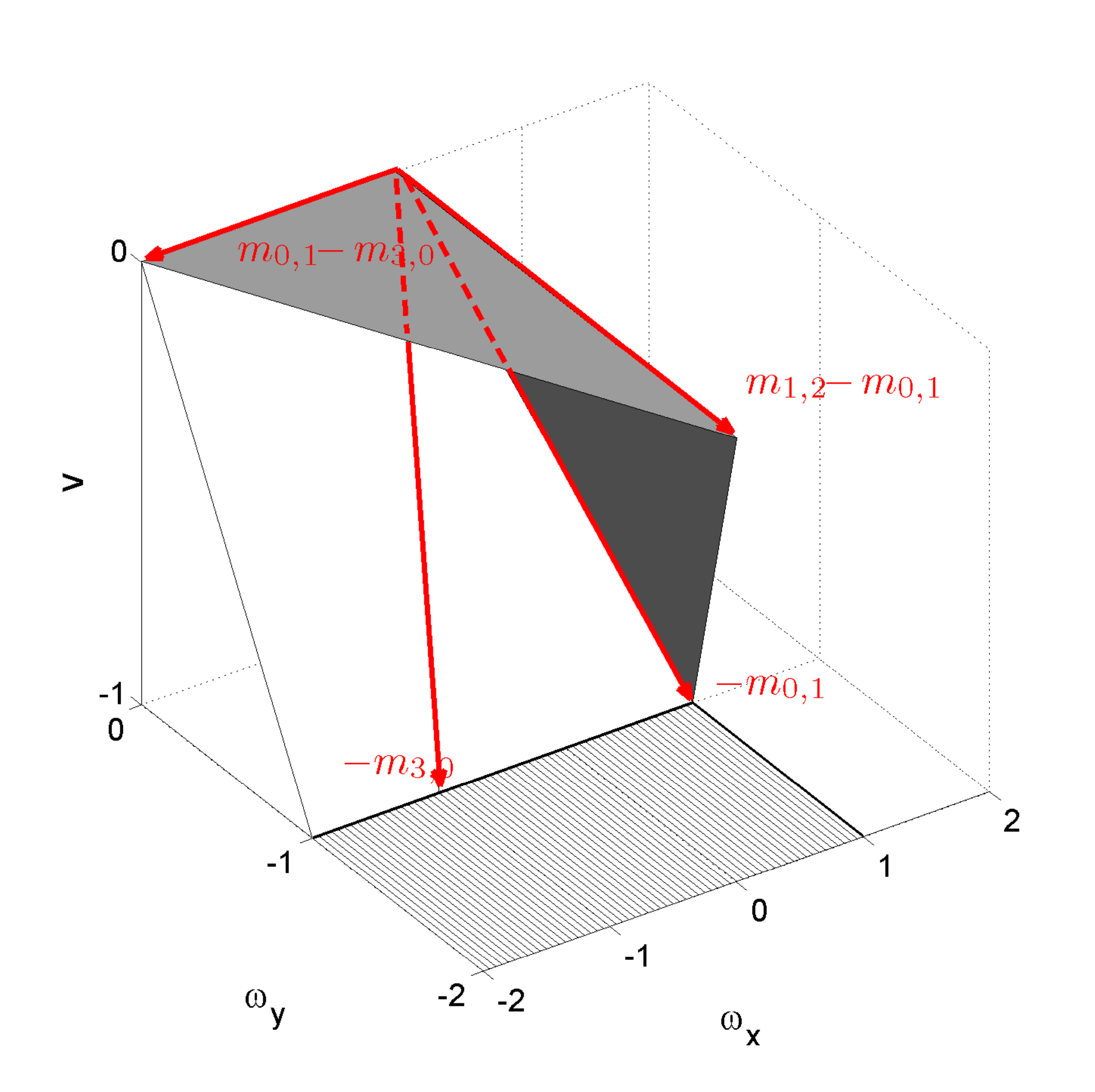}
        \caption{}
    \end{subfigure}
    \\
    \begin{subfigure}[t]{0.5\textwidth}
        \centering
        \includegraphics[width=60mm]{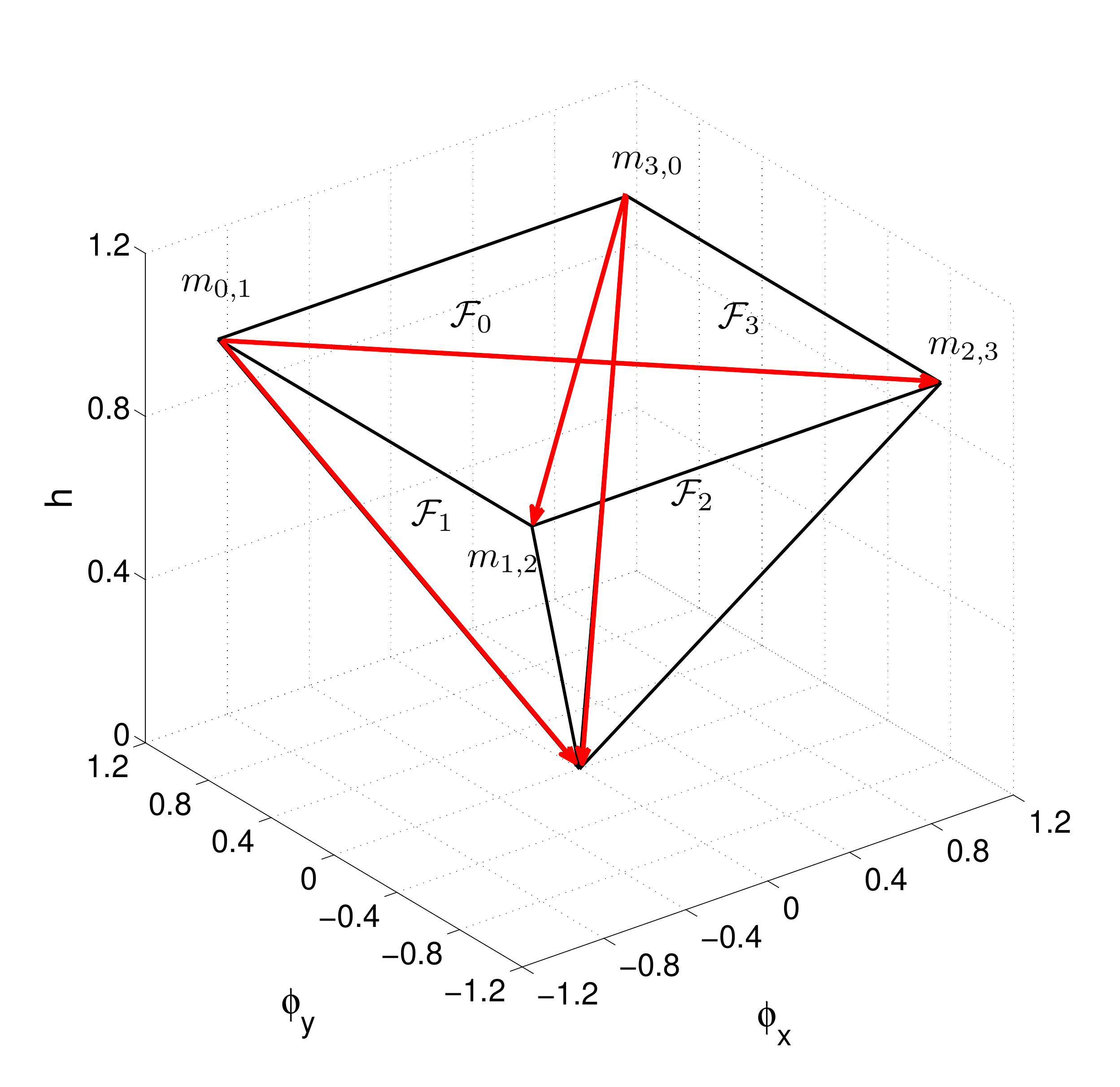}
        \caption{}
    \end{subfigure}%
    ~ 
    \begin{subfigure}[t]{0.5\textwidth}
        \centering
        \includegraphics[width=60mm]{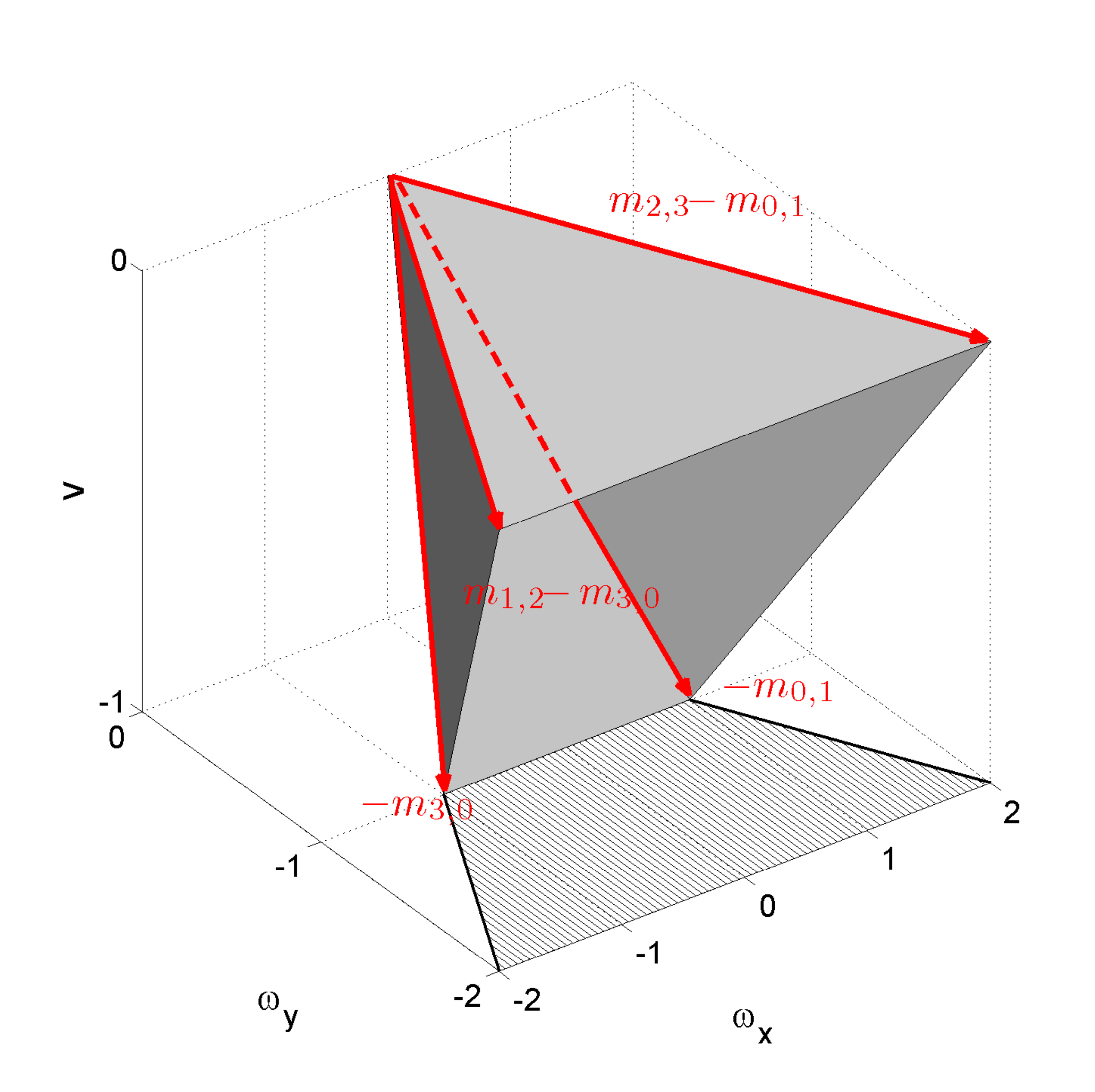}
        \caption{}
    \end{subfigure}
    \\
    \begin{subfigure}[t]{0.5\textwidth}
        \centering
        \includegraphics[width=60mm]{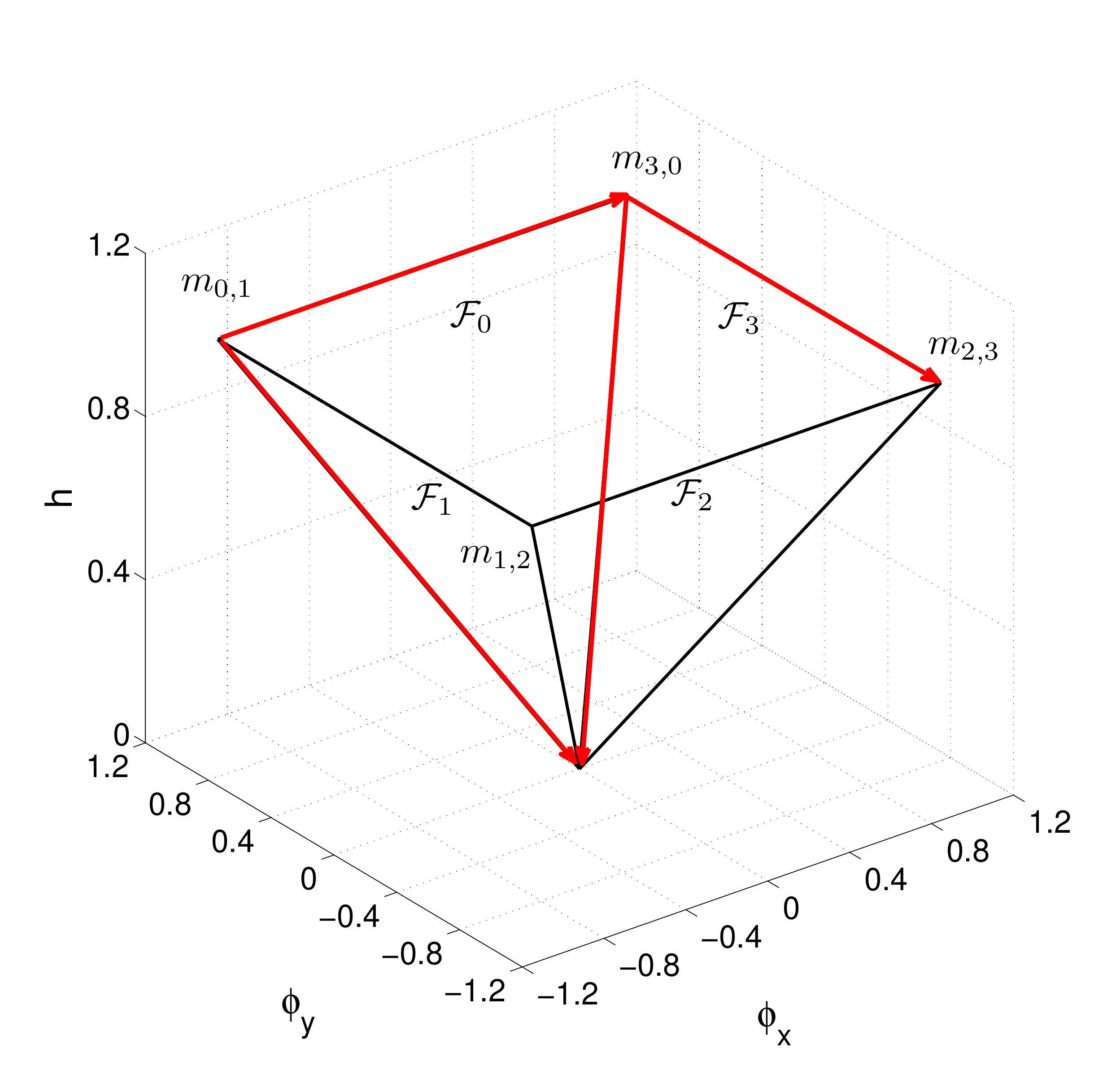}
        \caption{}
    \end{subfigure}%
    ~ 
    \begin{subfigure}[t]{0.5\textwidth}
        \centering
        \includegraphics[width=60mm]{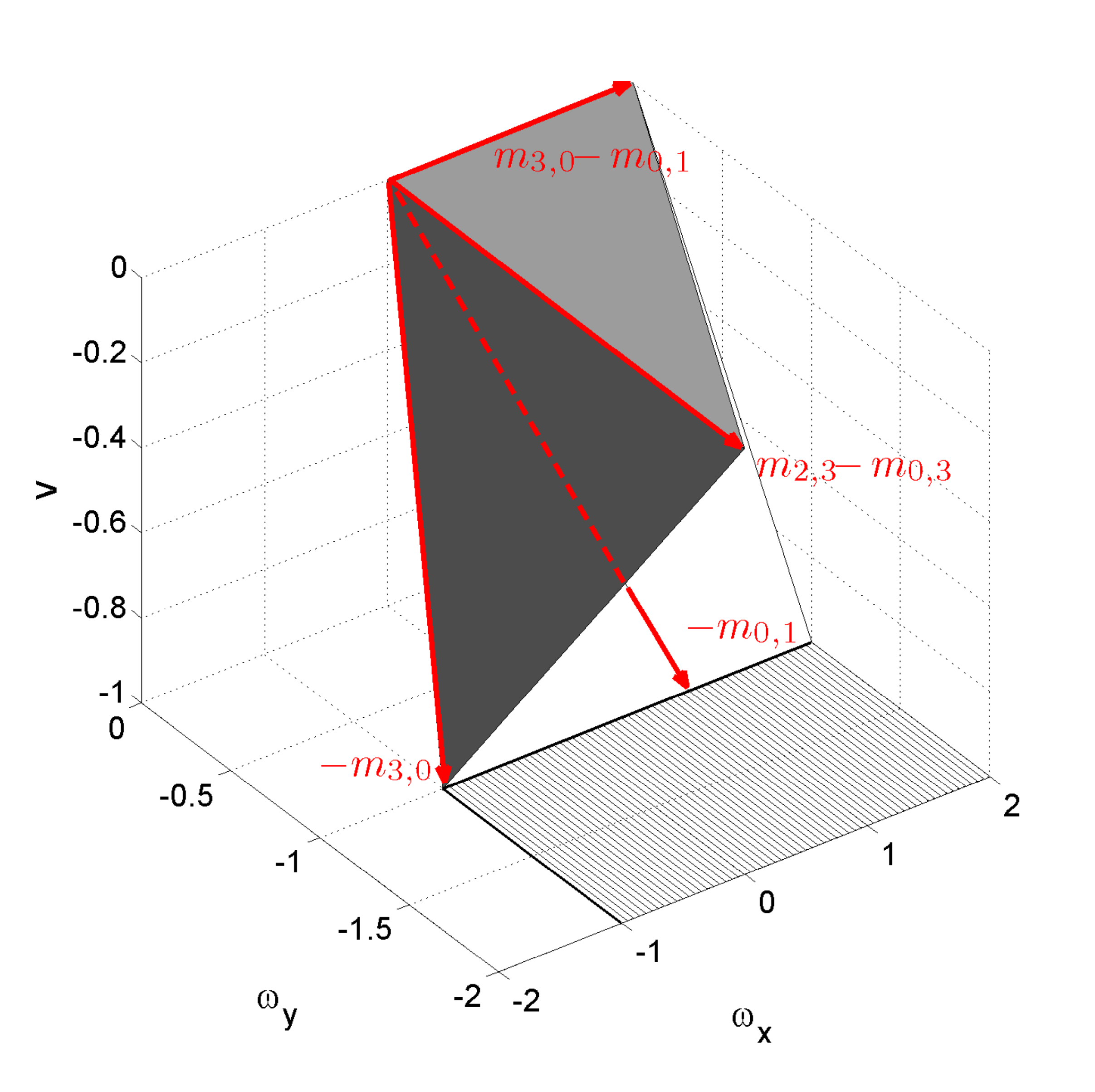}
        \caption{}
    \end{subfigure}
    \caption{Left: the penetration-free cone in configuration space with the generating vectors of $\mathcal{C}_1$ (a), $\mathcal{C}_2$ (c) and $\mathcal{C}_3$ (e). Right: the same three cones in velocity space and their intersections with the $v=-1$ plane (hatching), which is used in Sec. \ref{sec:effinvcone}}
    \label{fig:conditions}
\end{figure*}

\subsection{A stronger sufficient condition}
We have already developed a sufficient condition of CC (Theorem \ref{thm:polygon1}) in Sec. \ref{sec:commoncone}, which is applicable when the contact points form a regular $n$-gon. Nevertheless, we have seen that the sufficient condition is too restrictive and thus useless. To overcome this difficulty, constraints on impact sequences have been developed in Sec. \ref{sec:order}. We can combine these two results into a stronger sufficient condition of CC:

\begin{theorem} \label{thm:polygon2}
If the contact points of $\mB$ form a regular $n$-gon and there exists a cone $\mathcal{K} \in \mathbb{V}$ such that
\begin{enumerate}
\item   all $\vec{p}\in\mK$ satisfy \eqref{eq:vz<0} 
\item $\mK$ is an effectively invariant cone of the set of matrices 
$$
\{\vec{U}_0\vec{P}_{-2\pi/n},\vec{U}_0\vec{P}_{-4\pi/n},...,\vec{U}_0\vec{P}_{(-(2n-2)\pi/n}\}
$$
and the conditions $\{\mathcal{C}_1$, $\mathcal{C}_2$,...,$\mathcal{C}_{n-1}\}$ defined above
\item $\vec{p}^{(1)}\in\mK$
\end{enumerate} 
 then the object undergoes CC or PCC.
\end{theorem}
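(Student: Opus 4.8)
The plan is to mirror the two-line proof of Theorem \ref{thm:polygon1}, but to replace genuine common invariance by effective invariance, feeding in the collision-sequence constraints of Sec. \ref{sec:order}. Everything reduces to a single induction: I want to show that the post-impact generalized velocity stays inside $\mathcal{K}$ for all $k$, after which condition 1 together with Lemma \ref{lemma:vz<0} closes the argument exactly as before. Before the induction I would recall the bookkeeping behind \eqref{eq:ptransformation}: the relabelling/reframing convention guarantees that the vertex struck at step $k$ is always labelled $0$ immediately afterwards, so $\vec{p}^{(k)}$ is the velocity just after the $k$-th impact, expressed in the frame in which that impact occurred at vertex $0$. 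In this frame the next impact can only occur at some $j\in\{1,\dots,n-1\}$ (vertex $0$ has just separated), and the intervening flight, the relabelling $j\mapsto 0$, and the $(k+1)$-th impact combine into $\vec{p}^{(k+1)}=\vec{U}_0\vec{P}_{-2j\pi/n}\,\vec{p}^{(k)}$.

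For the induction, the base case is condition 3, $\vec{p}^{(1)}\in\mathcal{K}$. For the inductive step I assume $\vec{p}^{(k)}\in\mathcal{K}$ and let $j$ be the vertex at which the $(k+1)$-th impact actually occurs. Here I invoke the defining property of the constraint cones from Sec. \ref{sec:order}: a $0\to j$ transition is possible only when $\vec{p}^{(k)}\in\mathcal{C}_j$, which is precisely what the necessary conditions \eqref{eq:felt1}--\eqref{eq:felt4} (in particular the nontrivial inequalities \eqref{eq:felt3}--\eqref{eq:felt4}) encode. Hence $\vec{p}^{(k)}\in\mathcal{K}\cap\mathcal{C}_j$, and effective invariance (condition 2) gives $\vec{p}^{(k+1)}=\vec{U}_0\vec{P}_{-2j\pi/n}\,\vec{p}^{(k)}\in\vec{U}_0\vec{P}_{-2j\pi/n}(\mathcal{K}\cap\mathcal{C}_j)\subseteq\mathcal{K}$. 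The key reason effective invariance suffices here, whereas Theorem \ref{thm:polygon1} needed true common invariance, is that I never require invariance of every map on all of $\mathcal{K}$; I only require it on the sub-cone $\mathcal{K}\cap\mathcal{C}_j$ from which an impact at $j$ can actually be launched, and whichever $j$ is realized automatically carries its matching constraint. Since $j$ was arbitrary among realizable transitions, the induction closes and $\vec{p}^{(k)}\in\mathcal{K}$ for every $k\geq 1$.

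To conclude, condition 1 ensures every $\vec{p}^{(k)}\in\mathcal{K}$ satisfies \eqref{eq:vz<0}, so \eqref{eq:vz<0} is never violated; Lemma \ref{lemma:vz<0} then rules out ICC, leaving only CC or PCC. The step I expect to require the most care is not any calculation but the logical matching inside the inductive step: verifying that the constraint cone $\mathcal{C}_j$ paired with the map $\vec{U}_0\vec{P}_{-2j\pi/n}$ in the definition of effective invariance is exactly the cone governing the realized $0\to j$ transition, and that $\vec{p}^{(k)}$ lives in the frame in which $\mathcal{C}_j$ was defined. This rests on applying the relabelling convention consistently and on the fact, established via Lemma \ref{lem:cone}, that $\mathcal{C}_j$ is a genuine \emph{necessary} condition for the transition rather than merely a sufficient one; if $\mathcal{C}_j$ were only sufficient, the constraint could not be used to confine $\vec{p}^{(k)}$ and the induction would break.
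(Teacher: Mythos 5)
Your proof is correct and is essentially the paper's own argument: the paper proves Theorem \ref{thm:polygon2} by noting it is identical to the two-line proof of Theorem \ref{thm:polygon1} (conditions 2 and 3 give $\vec{p}^{(k)}\in\mathcal{K}$ for all $k\geq1$ by induction, then condition 1 and Lemma \ref{lemma:vz<0} conclude), with the constraint cones $\mathcal{C}_j$ of Sec. \ref{sec:order} supplying exactly the necessary-condition property you highlight. Your explicit unpacking of the inductive step and of why necessity (rather than sufficiency) of $\mathcal{C}_j$ is what makes effective invariance adequate is a faithful, more detailed rendering of the same reasoning.
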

The proof is identical to that of Theorem \ref{thm:polygon1}. 

\subsection{Partial complete chatter}
The most important limitation of Lemma \ref{lemma:vz<0}, Theorem \ref{thm:polygon1}, and Theorem \ref{thm:polygon2} is that they  cannot distinguish between CC and PCC. We will now fill this gap by showing that PCC can be outruled in most cases, which turnes Theorem \ref{thm:polygon2} into a sufficient condition of CC. 

\begin{lemma}
If the conditions of Theorem \ref{thm:polygon2} are satisfied then 
PCC is impossible unless the matrix $\vec{U}_0\vec{P}_{2\pi/n}\vec{U}_0\vec{P}_{-2\pi/n}$ has  real eigenvalues.
\label{lem:noPCC}
\end{lemma}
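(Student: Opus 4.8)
The plan is to prove the contrapositive: if $\vec{M}:=\vec{U}_0\vec{P}_{2\pi/n}\vec{U}_0\vec{P}_{-2\pi/n}$ has a non-real (complex-conjugate) pair of eigenvalues, then no PCC can occur. First I would pin down the structure of a PCC sequence. Since a vertex that has just been struck separates from $\mP$, the same vertex cannot be hit twice in a row, so an infinite sequence confined to two vertices must alternate strictly between them. I would then argue that these two vertices are necessarily adjacent: the surviving motion after they stick is a rotation about the chord connecting them, and for any non-adjacent pair this chord has vertices of the $n$-gon on both sides, so such a rotation drives some other vertex into $\mP$ and the sequence cannot stay confined to two vertices. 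By the rotational symmetry used in Sec.~\ref{sec:commoncone} we may take the two vertices to be $0$ and $1$; carrying out the relabelling/frame steps for the alternating sequence exactly as in \eqref{eq:ptransformation} turns one full period (two impacts) into multiplication of the reduced velocity by $\vec{M}$, since the shifts are $i=1$ and $i=n-1$ and $\vec{P}_{-2(n-1)\pi/n}=\vec{P}_{2\pi/n}$.

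Second, I would identify the distinguished direction. The vector $\vec{p}^*$ defined (up to scale) by $\vec{f}_0^T\vec{p}^*=\vec{f}_1^T\vec{p}^*=0$, i.e.\ the rotation about the edge $0$--$1$ with $\vec{f}_i$ from \eqref{eq:fi}, is an eigenvector of $\vec{M}$ with eigenvalue $1$. This can be checked directly from \eqref{eq:U0} and \eqref{eq:Palpha}: using $\vec{P}_{2\pi/n}\vec{f}_0=\vec{f}_1$ one gets $\vec{f}_0^T\vec{P}_{-2\pi/n}\vec{p}^*=\vec{f}_1^T\vec{p}^*=0$, so the first $\vec{U}_0$ acts trivially because the impact impulse \eqref{eq:F} vanishes when the normal velocity is zero; the second half-period is handled identically, giving $\vec{M}\vec{p}^*=\vec{p}^*$. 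The point of this computation is the relation $\vec{f}_0^T\vec{p}^*=0$: the Zeno direction of a PCC lies on the facet $\{\vec{f}_0^T\vec{p}=0\}$ of the cones $\mathcal{C}_1,\mathcal{C}_{n-1}$ rather than in their interior.

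The core of the argument then mirrors Case~3 of the proof of Theorem~\ref{thm:goyal}. Suppose $\vec{M}$ has complex eigenvalues $\lambda_{2,3}$; their real eigenspace is an $\vec{M}$-invariant plane $\Pi$ complementary to $\mathrm{span}(\vec{p}^*)$, on which $\vec{M}$ acts as a scaling--rotation by the nonzero angle $\arg\lambda_2$. Writing the per-period reduced velocity as $\vec{p}^{(k)}=c\,\vec{p}^*+\vec{w}^{(k)}$ with $\vec{w}^{(k)}\in\Pi$, invariance forces $c$ constant and $\vec{w}^{(k)}$ rotating. Because $\vec{f}_0^T\vec{p}^*=0$, the normal velocity of the just-separated vertex is $\vec{f}_0^T\vec{p}^{(k)}=\vec{f}_0^T\vec{w}^{(k)}\propto|\lambda_2|^k\cos(k\arg\lambda_2+\varphi)$, which is strictly negative for infinitely many $k$. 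This violates \eqref{eq:felt1}, i.e.\ forces $\vec{p}^{(k)}\notin\mathcal{C}_1$, infinitely often, so the alternating two-vertex pattern cannot persist and PCC is impossible. Taking the contrapositive yields the lemma.

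The main obstacle will be the technical edges rather than this rotation argument. The Elsner--Vandergraft theorem (Theorem~\ref{thm:invconetheorem}) cannot be invoked directly, because a genuine Zeno accumulation needs $|\lambda_{2,3}|<1$ (the inter-impact times, proportional to $|\vec{w}^{(k)}|$, must be summable), so the complex pair is \emph{non-dominant} and $\vec{M}$ may well possess an invariant cone; it is precisely the boundary position $\vec{f}_0^T\vec{p}^*=0$ that rules out PCC, not the absence of an invariant cone. I also have to dispose of the degenerate alignment $\vec{f}_0\perp\Pi$, which would make $\vec{f}_0^T\vec{w}^{(k)}\equiv 0$; this is handled by noting that $\vec{f}_0$ and $\vec{f}_1$ are linearly independent and both annihilate $\vec{p}^*$, so they cannot both be orthogonal to the plane $\Pi$, and the surviving functional supplies the required sign oscillation (now tested against the matching facet of $\mathcal{C}_{n-1}$). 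Finally, the adjacency claim and the precise identification of the reduced-frame inequalities with the facets \eqref{eq:felt1}--\eqref{eq:felt4} of $\mathcal{C}_1$ and $\mathcal{C}_{n-1}$ need to be written out carefully.
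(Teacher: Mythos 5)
For the adjacent-pair branch --- the only branch in which the matrix $\vec{U}_0\vec{P}_{2\pi/n}\vec{U}_0\vec{P}_{-2\pi/n}$ actually enters --- your argument is essentially the paper's: reduce the alternating sequence to powers of this product via the relabelling of \eqref{eq:ptransformation}, identify the eigenvalue-$1$ eigenvector as the rotation about the shared edge (so that $\vec{f}_0$ and $\vec{f}_1$ annihilate it), and show that a complex-conjugate pair makes the component in the complementary invariant plane spiral, so that the sign conditions required for the alternating sequence to be realizable (your \eqref{eq:felt1}; the paper's ``vertex 1 moves upwards and cannot hit $\mathcal{P}$'') fail for infinitely many $k$. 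Your extra care about the degenerate alignment of $\vec{f}_0$ with the invariant plane, and your observation that the complex pair is non-dominant so Theorem \ref{thm:invconetheorem} is of no direct use, are details the paper glosses over, but the mechanism is the same.

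Where you genuinely diverge --- and where there is a gap --- is the non-adjacent pair. You claim that the limiting motion, a rotation about the chord joining the two vertices, ``drives some other vertex into $\mathcal{P}$ and the sequence cannot stay confined to two vertices.'' But a PCC is by definition an accumulation after which the body remains in motion and other vertices may subsequently descend toward the surface; showing that a third vertex is eventually driven downward is therefore compatible with a non-adjacent PCC, not a refutation of it. Deciding whether the third vertex interrupts the accumulation or only arrives after it would require controlling positions as well as velocities, which your velocity-space argument does not do. The paper instead invokes the hypothesis of Theorem \ref{thm:polygon2}: all iterates lie in $\mathcal{K}$, and the only vector of $\mathcal{K}$ at which two non-adjacent vertices are simultaneously at rest is the apex $\vec{p}=\vec{0}$, so the limit state is complete rest and the event is a CC rather than a PCC. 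You need this (or some substitute) to dispose of the non-adjacent branch independently of the eigenvalues of the product matrix.
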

\begin{proof}
There are two possible ways for $\mB$ to undergo PCC: 
\begin{enumerate}
\item A pair of non-adjacent vertices collide with $\mathcal{P}$ in alternating order. 
\item A pair of adjacent vertices hit $\mathcal{P}$ in alternating order.
\end{enumerate}

Assume that the first scenario occurs. At the end of the PCC sequence, the two non-adjacent vertices involved in the impact sequence rest in contact with $\mathcal{P}$ with 0 velocity. The only point within the cone $\mathcal{K}$ with this property is its tip, i.e. $\vec{p}=[0,0,0]^T$. Hence $\mathcal{B}$ is indeed immobile, which means that the object has undergone CC instead of PCC.

Consider now the second scenario. Assume that the object enters a PCC sequence after an initial transient with $t$ impacts and the two vertices involved in the PCC sequence are labelled at this point as 0 and 1. Then the generalized velocity after $t+2k$ impacts will be  
\begin{align}
\vec{p}^{(t+2k)}=(\vec{U}_0\vec{P}_{2\pi/n}\vec{U}_0\vec{P}_{-2\pi/n})^{k}\vec{p}^{(t)}
\label{eq:PCCmap}
\end{align}
Note that \eqref{eq:PCCmap} is highly analogous to \eqref{eq:pk1rod} in the rod problem. Indeed the PCC of a polygon is very similar to the CC of a rod.

The dominant eigenvalue of matrix $\vec{U}_0\vec{P}_{-2\pi/n}\vec{U}_0\vec{P}_{2\pi/n}$  is $+1$ and the corresponding eigenvector corresponds to a generalized velocity for which bothvertices involved in the PCC sequence are immobile, i.e. $v_0=v_1=0$ . For large values of $k$, \eqref{eq:PCCmap} implies that the direction of $\vec{p}^{(t+2k)}$ converges to this eigenvector. At the same time, if the other two eigenvalues are complex, then we are in a situation similar to case 3 in the proof of Theorem \ref{thm:goyal}: $\vec{p}^{(t+2k)}$ spirals around the dominant eigenvector, and there will be values of $k$ for which vertex 1 moves upwards and cannot hit $\mathcal{P}$ hence the PCC sequence cannot continue. This contradiction indicates that PCC is not possible unless all eigenvalues are real.     
\end{proof}

For simplicity we omit the detailed investigation of the matrix $\vec{U}_0\vec{P}_{-2\pi/n}\vec{U}_0\vec{P}_{2\pi/n}$. It turns out that the conditions of Lemma \ref{lem:noPCC}  are never satisfied if $n\geq 4$ and they are not satisfied but a small region of the space of physical parameters in the case of $n=3$: this region is bounded by a dashed curve     and is labelled as "PCC possible" in Fig. \ref{fig:simulations}(a). 

\section{Constructing effectively invariant cones}\label{sec:effinvcone}

In this section, we will construct effectively invariant cones which satisfy the conditions of Theorem \ref{thm:polygon2}, thereby we will develop sufficient conditions of CC. The sharpness of these results will be tested by systematic numerical simulations.

\subsection{A numerical construction for arbitrary $n$}
\label{sec:num}

Numerical approximations of effectively invariant cones can be constructed by iterative algorithms. The algorithm outlined below considers an initial candidate, which is gradually increased by taking the union of the candidate cone with its transformed images, until the sequence of candidate cones converges to an effectively invariant cone, or until \eqref{eq:vz<0} is violated. The detailed steps are as follws:

\begin{enumerate}
\item We choose an initial set of vectors in $\mathbb{V}$, all of which satisfy \eqref{eq:vz<0} and \eqref{eq:felt1}. These vectors generate an initial candidate cone $\mK_0$. The requirement \eqref{eq:felt1} is inspired by the fact that any post-impact velocity must satisfy \eqref{eq:felt1}. 

\item Given a polyhedral candidate cone $\mK_k$, one can construct the generating vectors of the cones $\mK_{k,i}:=\mK_k\cap\mathcal{C}_i$ for $i=1,2,...,n-1$. This step is straightforward since the cones $\mathcal{C}_i$ are also polyhedral.  
\item The generating vectors of the transformed cones $\mK_{k+0.5,i}:=\vec{U}_0\vec{P}_{-2i\pi/n}(\mK_{k,i})$ are constructed by transformation of each individual generating vector of $\mK_{k,i}$. 
\item If any of these vectors violate \eqref{eq:vz<0}, then the algorithm terminates with the conclusion that an effectively invariant cone satisfying \eqref{eq:vz<0}, with $\mK_0$ in its interior does not exist
\item A next candidate cone is constructed:
$$
\mK_{k+1}=\left(\cup_{i=1}^{n-1}\mK_{k+0.5,i}\right)\cup\mK_k
$$ 
Technically speaking, all generating vectors involved in the union operations are normalized by the transformation $\vec{p}\rightarrow \vec{p}/(\vec{u}_3^T\vec{p})$ and the generating vectors of $\mK_{k+1}$ are obtained by finding the convex hull of the normalized set of vectors.
\item If $\mK_{k+1}\subseteq\mK_{k}$ is satisfied or if the solid angle of the cone 
$\mK_{k+1}\backslash\mK_{k}$ is below a tolerance parameter $\epsilon$, then the algorithm terminates with success and $\mK_{k+1}$ is deemed to be an $\epsilon-$approximation of $\mK$.

\item The algorithm continues with step 2.
\end{enumerate}
The algorithm always terminates in finite number of steps, since the solid angle of the candidate cone always increases by at least $\epsilon$, and it cannot exceed $2\pi$ (i.e. the solid angle associated with a half-space). 
False negatives are avoided, i.e. if there exists an effectively invariant cone and the initial candidate is in its interior, then the algorithm always terminates with a positive answer. False positive results are however likely to occur for relatively large values of  $\epsilon$. We believe that false positive results must disappear for any set of transformations, if $\epsilon$ is sufficiently small, but a formal proof of this statement is beyond the scope of the paper. In what follows, we use $\epsilon=10^{-5}$.

If the main cycle of the algorithm is repeated many times, the number of generating vectors of $\mK_k$ and thus the computational cost of every step may increase rapidly. (In particular, the candidate cones often converge to a cone bounded by a smooth curve). Thus, in practice, we terminate the algorithm after a limited number of iterations (typically around $10^2$), without a conclusive answer. 

\begin{figure*}[t!]
    \centering
    \begin{subfigure}[t]{0.5\textwidth}
    \centering
    \includegraphics[width=60mm]{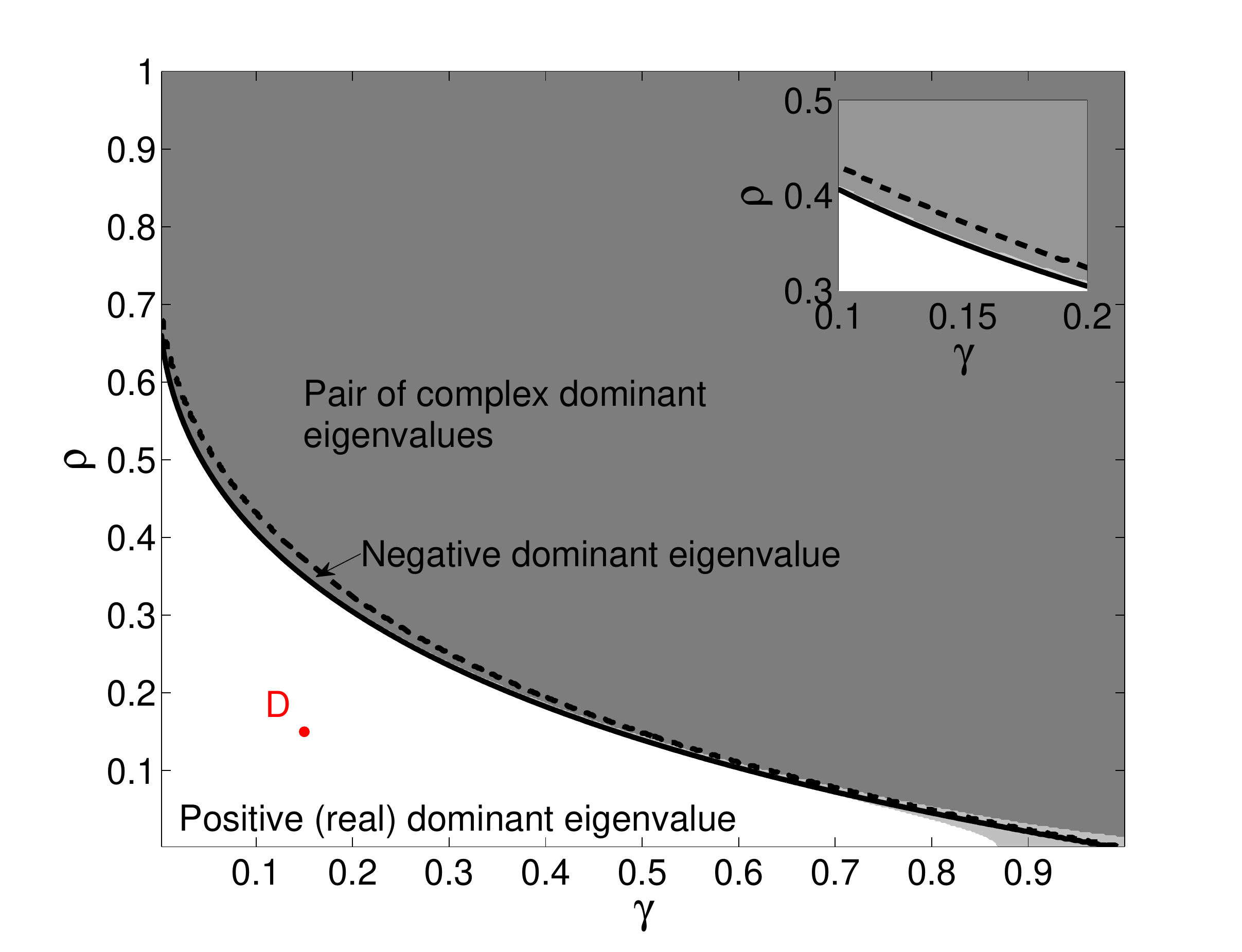}
    \caption{}
    \end{subfigure}%
    ~ 
    \begin{subfigure}[t]{0.5\textwidth}
    \centering
    \includegraphics[width=60mm]{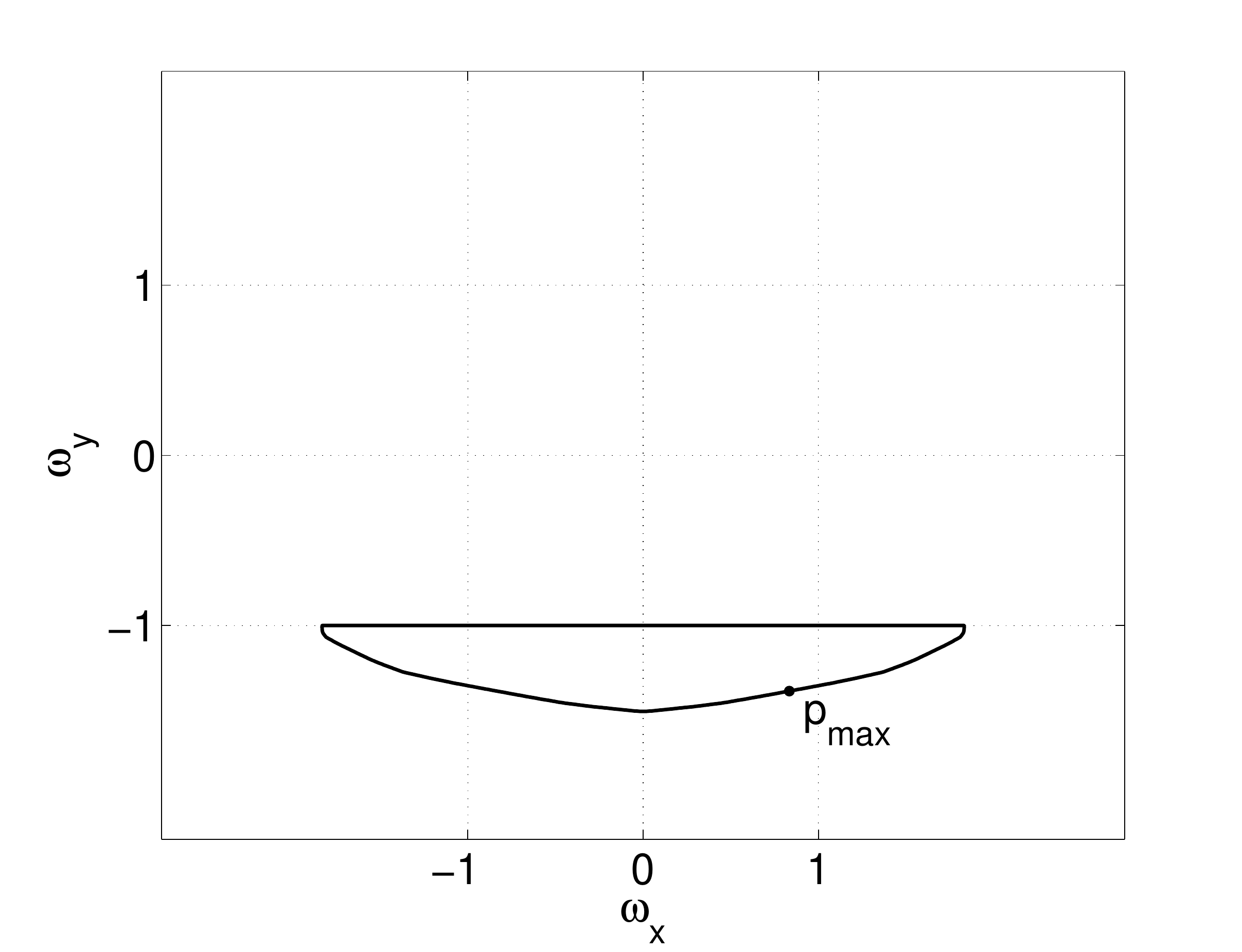}
    \caption{}
    \end{subfigure}%
    \\
    \begin{subfigure}[t]{0.5\textwidth}
    \centering
    \includegraphics[width=60mm]{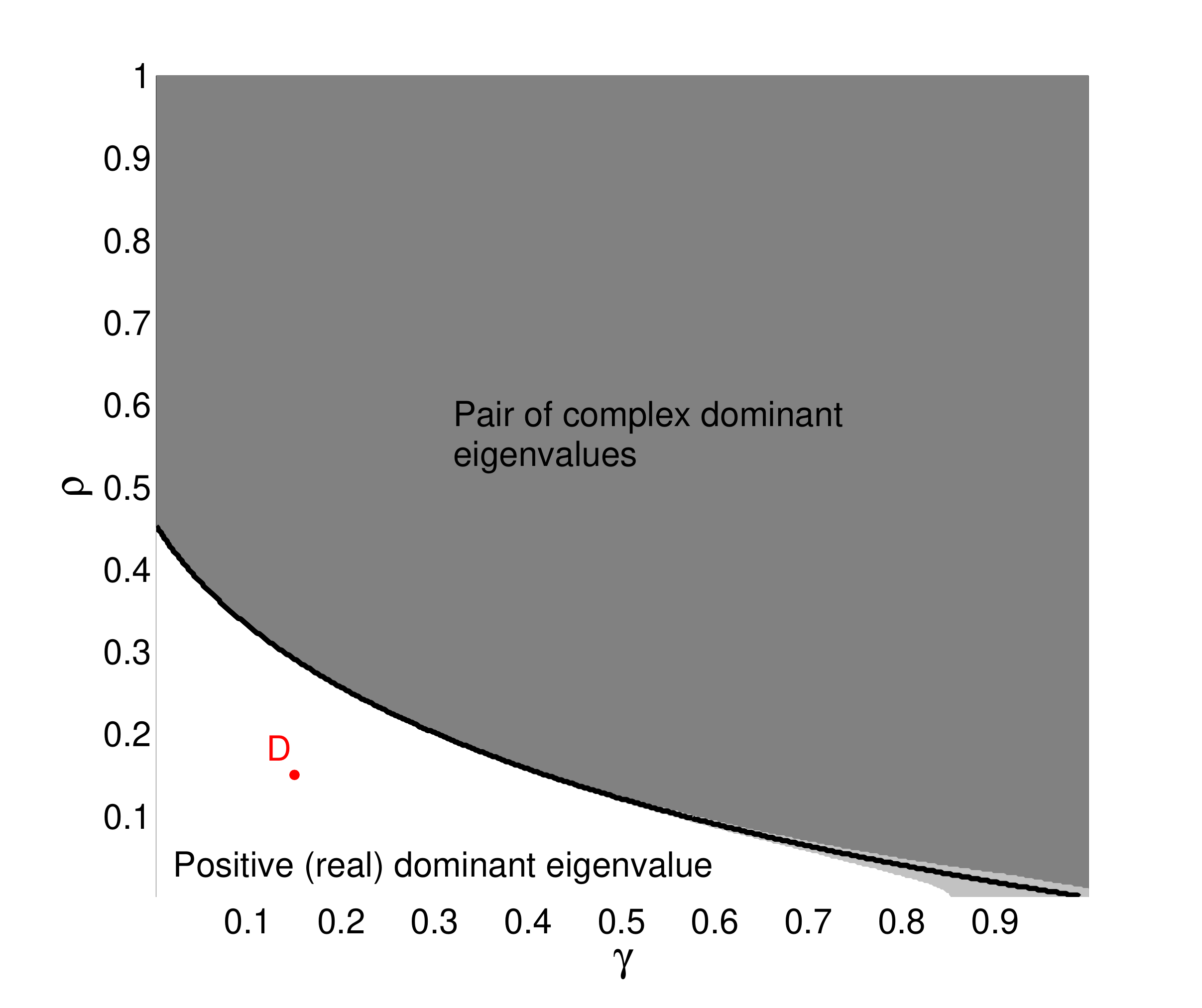}
    \caption{}
    \end{subfigure}%
    ~
    \begin{subfigure}[t]{0.5\textwidth}
    \centering
          \includegraphics[width=60mm]{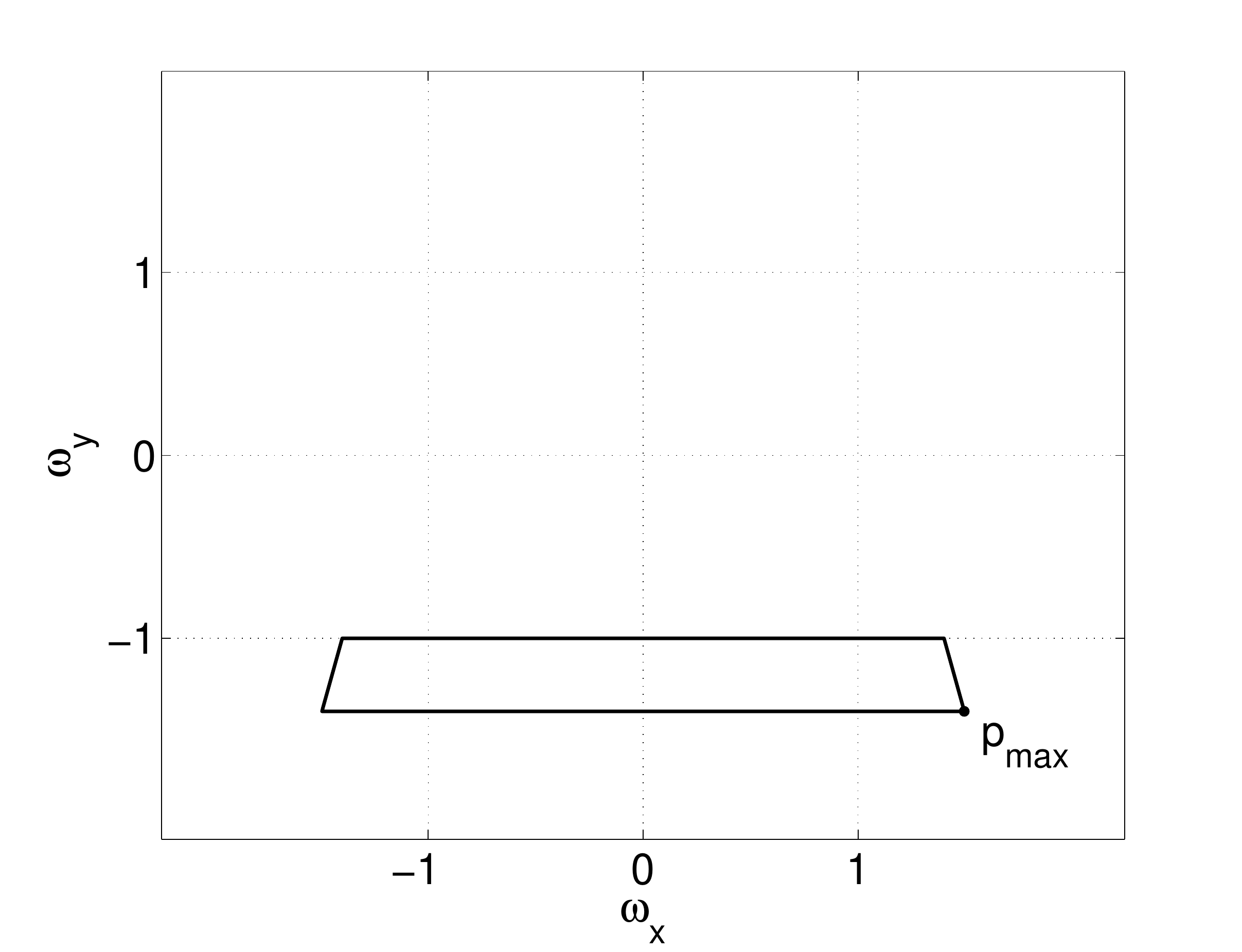}
          \caption{}
   \end{subfigure}%
    \\
    \begin{subfigure}[t]{0.5\textwidth}
    \centering
    \includegraphics[width=60mm]{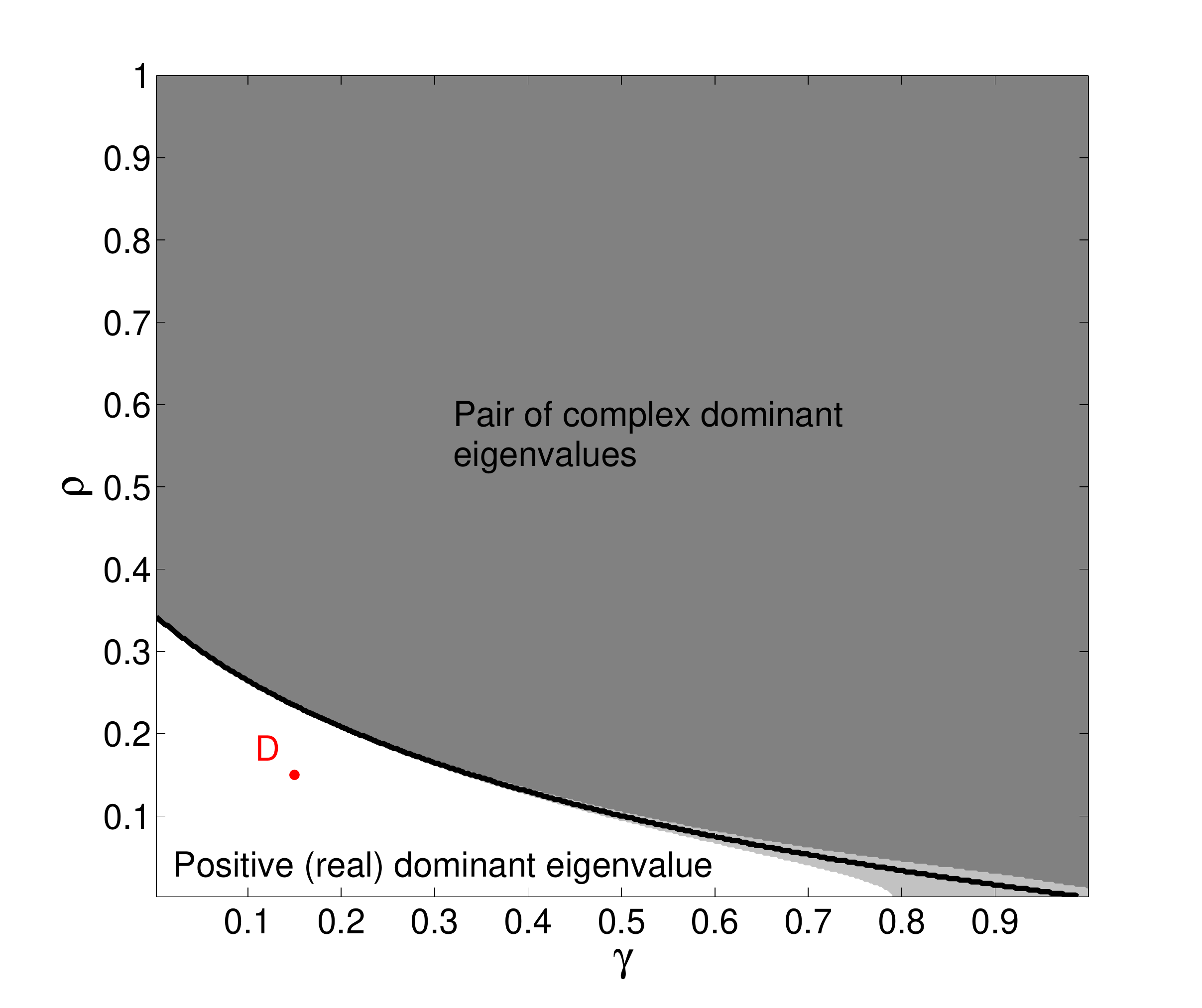}
    \caption{}
    \end{subfigure}%
    ~
    \begin{subfigure}[t]{0.5\textwidth}
    \centering
    \includegraphics[width=60mm]{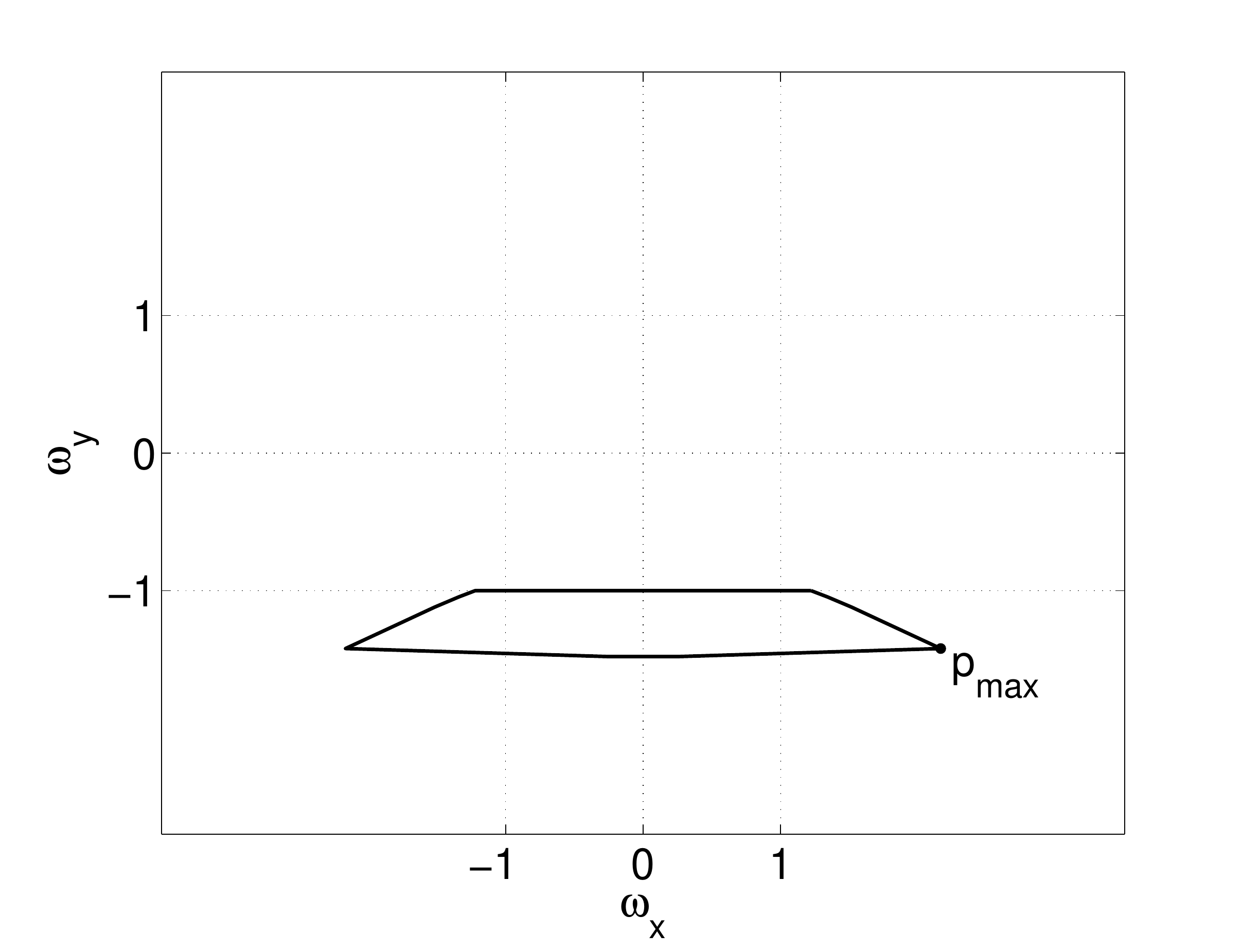}
    \caption{}
    \end{subfigure}%
    
    \caption{Effectively invariant cones for regular polygons. Left: The background color shows the results of the numerical cone-finding algorithm (dark: no effectively invariant cone exists, light: inconclusive, white:  effectively invariant cone has been found). Solid curves indicate those points of the parameter plane where the dominant eigenvalue(s) of $\vec{U}_0\vec{P}_{2\pi/3}$ (a), $\vec{U}_0\vec{P}_{2\pi/4}$ (c) and $\vec{U}_0\vec{P}_{2\pi/5}$ (e) change sign of become complex. Right: numerically found effectively invariant cones for a triangle (c), a square (d), and a pentagon (f). The parameter values corresponding to these cones are depicted by point D on the left side. $\vec{p}_{max}$  denotes the eigenvector corresponding to the dominant eigenvalue. The role of this eigenvector will be discussed in Sec. \ref{sec:square} in detail.}
    \label{fig:sajatertekek}
\end{figure*}

The cone-finding algorithm outlined above has been applied to $n$-gons with $n=3,4$ and $5$. In every case, the (degenerate) cone generated by the single vector $-\vec{U}_0\vec{u}_3$ was used as initial candidate. Three examples of effectively invariant cones recovered by the algorithm are shown in the right panels of Fig.\ref{fig:sajatertekek}, which shows a central projection of $\mathbb{V}$ to a plane $\mathcal{S}$ determined by the relation $\vec{u}_3^T\vec{p}=-1$. The projected image of $\mK$ is a polygon. We then run the algorithm for many values of the parameters $\rho$ and $\gamma$ along a rectangular grid. The results are summarized in the left panels of the figure: white means success, dark grey means that the invariant cone does not exist, and the small light grey regions near the bottom-right corners mean that the algorithm terminated before reaching a conclusion. The solid and dashed curves of the figure will be defined later.

\begin{figure}
    \centering
    \begin{subfigure}{0.5\textwidth}
        \centering
        \includegraphics[width=70mm]{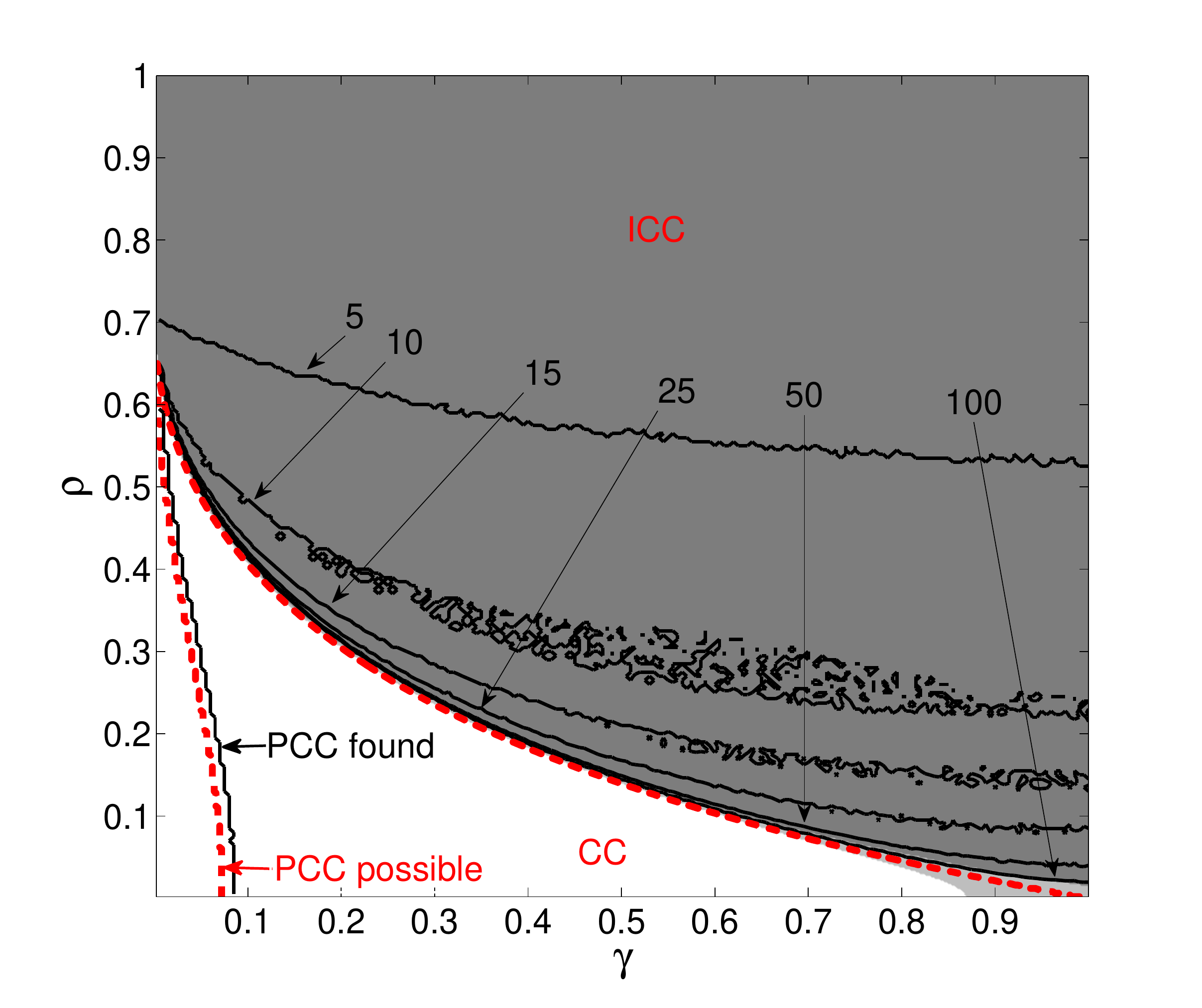}
        \caption{}
    \end{subfigure}%
    \\
        \centering
    \begin{subfigure}{0.5\textwidth}
        \centering
        \includegraphics[width=70mm]{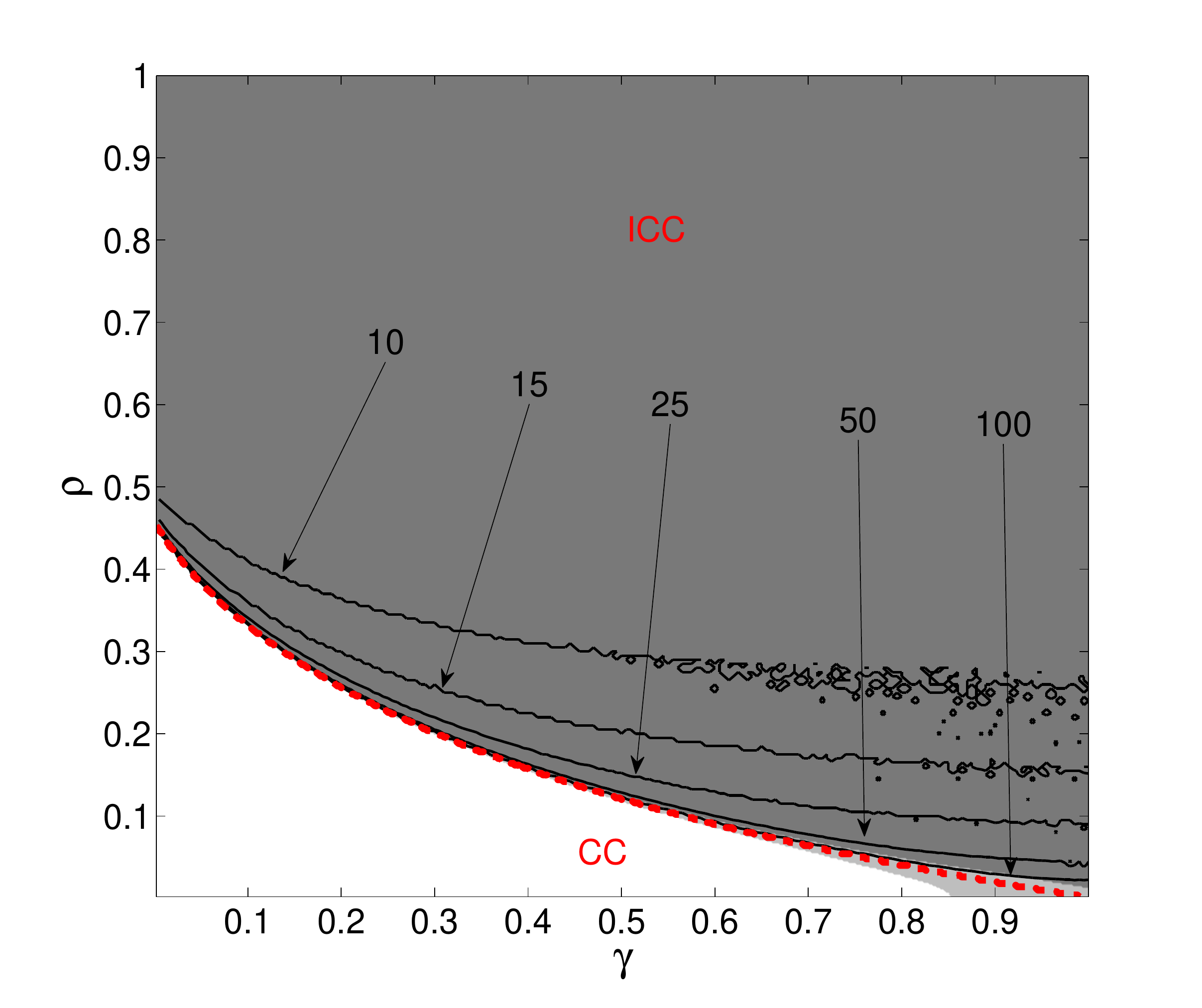}
        \caption{}
    \end{subfigure}%
    \\
    \begin{subfigure}{0.5\textwidth}
        \centering
        \includegraphics[width=70mm]{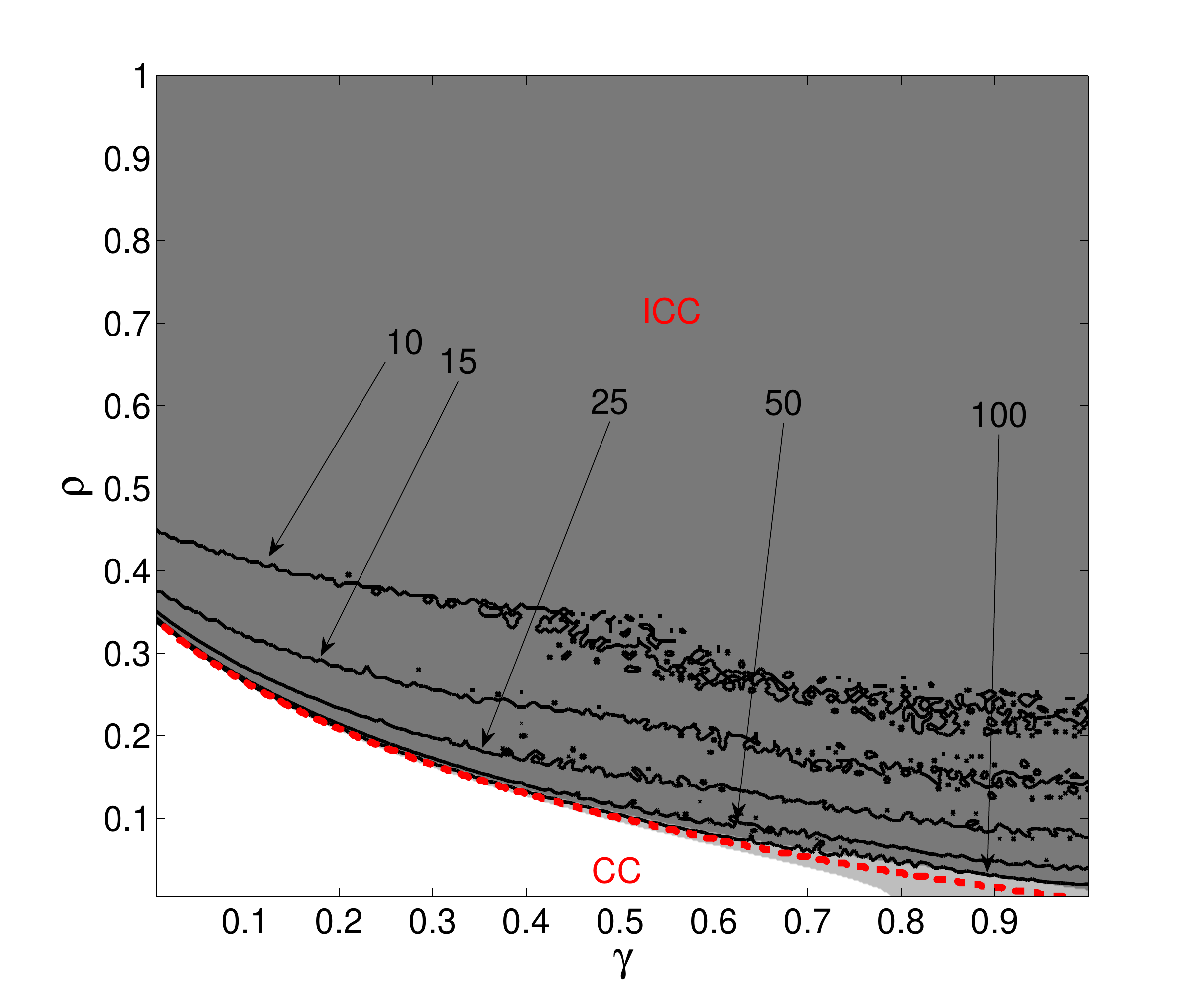}
        \caption{}
    \end{subfigure}
    
    \caption{Results in case of a triangle (a), a square (b), and a pentagon (c). The numbered contour lines show the simulation results. The shaded areas denote where the cone finding algorithm has found such cone (white), proven that no such cone exists (dark gray) or required more iterations (light gray). The red dashed line stands for the change of real and positive dominant eigenvalues to non-real or non-positive, conjectured to divide ICC from CC}
    \label{fig:simulations}
\end{figure}

\subsection{Comparison with direct simulation}

The results of the cone-finding algorithm have been compared with results of direct simulation of the linearized equations of motion. The same objects have been dropped with initial velocity $\vec{p}^{(0)}=[R_1,R_2,-1]^T$ and initial position $\vec{q}^{(0)}=[R_3,R_4,1+R_5]^T$ with $R_i$ being uniform random numbers over the interval $(0,0.1)$. The simulation was ended after 200 impacts or if \eqref{eq:vz<0} was violated. Fig. \ref{fig:simulations} shows several more or less irregular, solid curves, representing level curves of the total number of impacts during simulation. The object undergoes ICC in the region above the level curve of 200 impacts, and the rest of the parameter plane below this curve is our numerical approximation of the region where CC or PCC occurs. 

The level curve separating ICC from [CC or PCC] is fairly smooth. This result suggests that whether or not ICC occurs does not depend sensitively on the small perturbations $R_i$. At the same time, the level curves for lower numbers are quite irregular, which is an indication of sensistivity to our randomized initial conditions. Both findings are analogous to the results of \cite{goyal1,goyal2} for falling rods. 

CC was also separated from PCC in the simulations. We have seen that PCC leads to a state where the heights and the velocities of two adjacent vertices are 0. These states can be expressed as $\vec{q},\vec{p}=constant \cdot \vec{m}_{i,j}$ where $\vec{m}_{i,j}$ has been defined in Sec. \ref{sec:order}; $i,j$ are the indices of a pair of adjacent vertices, and the constant is positive for $\vec{q}$ and negative for $\vec{p}$. Thus, we detected a PCC in the simulation if the following criteria were met:
$$
\left | 1-\frac{\vec{q}^T\vec{m}_{i,j}}{|\vec{q}||\vec{m}_{i,j}|} \right | < \epsilon \ , \ \left |1+\frac{\vec{p}^T\vec{m}_{i,j}}{|\vec{p}||\vec{m}_{i,j}|}\right |<\epsilon
$$

PCC never occured in the simulation with $n=4,5$, which is consistent with Lemma \ref{lem:noPCC}. In the case of the triangle, PCC was found whenever the parameter values were on the left side of the solid curve marked as "PCC found" in Fig. \ref{fig:simulations}(a). This curve fits very well to the dashed curve given by Lemma \ref{lem:noPCC} (marked by the label "PCC possible" in the figure). We can draw the conclusions that the emergence of PCC is not sensitive to the randomized intial conditions, and the necessary condition of Lemma \ref{lem:noPCC} is probably exact. 

To compare the simulation results with the results of the cone-finding algorithms, we have added the background colours of Fig. \ref{fig:sajatertekek} to Fig. \ref{fig:simulations}. The figure strongly suggests that the effectively invariant cone exists whenever direct simulation indicates CC or PCC, i.e. that the conditions of Theorem \ref{thm:polygon2} are sharp. This is surprising, since the invariant cone approach focuses on velocity space and does not take into account how the positions of vertices in physical space evolve during motion. 

Finally we attempted to find a closed formula predicting the existence of CC dynamics and of an effectively invariant cone. Among others, we  examined the eigenvalues of matrices $\vec{U}_0\vec{P}_{2i\pi/n}$ ($i=0,2,...,n-1$) for many values of the parameters $\rho$ and $\gamma$. We found strong evidence that the transition between CC and ICC is linked to a qualitative change of the dominant eigenvalue for $i=1$. This surprising coincidence might be explained by the fact that all numerically simulated trajectories appear to become regular after an initial transient: an impact at a vertex is followed by an impact at its immediate neighbour, which corresponds to repeated application of the transformation \eqref{eq:ptransformation} with $i=1$. Proving that impact sequences converge to these regular patterns would require investigation of the full nonlinear dynamics in six dimensional state space (involving positions and velocities), which is beyond the scope of the present paper.    
 
As illustration, we show in the left panels of Fig.\ref{fig:sajatertekek} the sign of the dominant eigenvalue(s) of $\vec{U}_0\vec{P}_{2\pi/n}$ and whether they are real or complex. The inset of the figure for $n=3$ is a magnified detail. These results suggest that an effectively invariant cone exists if and only if the dominant eigenvalue is real and positive. The boundary of this region has also been added to Fig. \ref{fig:simulations} as a dashed curve. 

These findings enable us to formulate the following conjectures:
\begin{conjecture}
The following three statements are equivalent:\\
(i) The object $\mB$ undergoes CC or PCC for appropriately chosen initial conditions.\\
(ii) The matrices  $\vec{U}_0\vec{P}_{2i\pi/n}$ and cones $\mathcal{C}_i$ have an effectively invariant cone.\\
(iii) $\vec{U}_0\vec{P}_{2\pi/n}$ has a real and positive dominant eigenvalue.
\label{con:1}
\end{conjecture}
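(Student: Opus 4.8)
The plan is to establish the three equivalences through a cycle of implications, $(iii)\Rightarrow(ii)\Rightarrow(i)\Rightarrow(iii)$, chosen so that the purely linear-algebraic content of the statement is isolated from its dynamical content as far as possible. The implication $(ii)\Rightarrow(i)$ is essentially already in hand: if an effectively invariant cone $\mK$ lying in the half-space \eqref{eq:vz<0} exists, one need only exhibit initial data whose first post-impact velocity $\vec{p}^{(1)}$ lands in $\mK$. This is always possible, because the first impact map $\vec{U}_0$ is a bijection of $\mathbb{V}$ carrying the feasible pre-impact half-space onto the post-impact half-space \eqref{eq:felt1} that contains $\mK$; hence all three hypotheses of Theorem \ref{thm:polygon2} are met and CC or PCC follows. (Lemma \ref{lem:noPCC} may then be invoked to upgrade this to CC everywhere except the small exceptional region for $n=3$.)

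For $(iii)\Rightarrow(ii)$ I would proceed constructively. Assuming $\vec{U}_0\vec{P}_{2\pi/n}$ has a real, positive dominant eigenvalue, Theorem \ref{thm:invconetheorem} supplies a corresponding eigenvector $\vec{p}_{\max}$, and I would first verify that $\vec{p}_{\max}$, suitably normalized, lies in the half-space \eqref{eq:vz<0} and in a nearest-neighbour constraint cone $\mathcal{C}_{n-1}$ (equivalently $\mathcal{C}_1$). The candidate cone would then be grown from $\vec{p}_{\max}$ in the spirit of the rod construction in the proof of Theorem \ref{thm:goyal}, by taking the convex hull of $\vec{p}_{\max}$ with its images under the symmetry rotations $\vec{P}_{2j\pi/n}$ and closing it under the maps $\vec{U}_0\vec{P}_{-2i\pi/n}$. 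The decisive point is that \emph{effective}, rather than full, invariance is being checked: for the non-neighbour indices $i$ the matrices need not preserve any cone individually (indeed for even $n$ the map $\vec{U}_0\vec{P}_{\pi}$ has dominant eigenvalue $-1$), but intersecting with $\mathcal{C}_i$ through \eqref{eq:felt1}--\eqref{eq:felt4} restricts their domain to precisely those directions that are mapped back into $\mK$. Verifying this containment facet by facet is the technical heart of the construction, and for general $n$ it appears to resist a closed-form treatment, which is one reason the statement is posed as a conjecture rather than a theorem.

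The remaining and genuinely hard implication is $(i)\Rightarrow(iii)$, which I would attack by contraposition: if the dominant eigenvalue of $\vec{U}_0\vec{P}_{2\pi/n}$ is non-real or non-positive, then no initial condition should produce CC or PCC. The mechanism is exactly that of cases 2 and 3 in the proof of Theorem \ref{thm:goyal}: under repeated application of the single map $\vec{U}_0\vec{P}_{2\pi/n}$ the generalized velocity either alternates sign along the dominant eigendirection (negative real eigenvalue) or spirals (complex pair), so that \eqref{eq:vz<0} is eventually violated and Lemma \ref{lemma:vz<0} forces ICC. The hard part will be that this argument presupposes that an infinite impact sequence is eventually governed by this one map, i.e. that after a transient every impact occurs at an immediate neighbour of the previous one. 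Establishing such \emph{regularization} of impact sequences cannot be done in velocity space alone, since the cones $\mathcal{C}_i$ only record which impacts are geometrically possible, not which one occurs first; it would require controlling the full six-dimensional position-and-velocity dynamics and showing that non-neighbour impact patterns are transient and cannot recur indefinitely. I expect this to be the main obstacle, and it is precisely the step that keeps the characterization at the level of a conjecture.
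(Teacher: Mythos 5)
The statement you are proving is posed in the paper as a \emph{conjecture}, and the paper offers no proof of it: it is supported by the numerical cone-finding algorithm and direct simulations of Sec.~\ref{sec:num}, and only the implication (iii)$\Rightarrow$(ii) is actually established, and only for $n=4$, via the explicit construction of Lemma~\ref{lem:square} (with Lemmas~\ref{l:r2}--\ref{l:r4} and Theorem~\ref{thm:squarefinal}). Your three-step cycle is therefore a roadmap rather than a proof, and to your credit it is an accurate one: your (ii)$\Rightarrow$(i) step is essentially Theorem~\ref{thm:polygon2} plus a Lemma~\ref{l:k0}-type argument for condition~3; your constructive plan for (iii)$\Rightarrow$(ii) is exactly what the paper carries out for the square, growing the cone from the dominant eigenvector $\vec{p}_1$ of $\vec{U}_0\vec{P}_{\pi/2}$ and its mirror image and checking effective invariance facet by facet; and your diagnosis of (i)$\Rightarrow$(iii) --- that it requires showing impact sequences regularize to nearest-neighbour patterns, which cannot be decided in velocity space alone and needs the full six-dimensional position-velocity dynamics --- coincides with the paper's own stated reason for leaving the statement as a conjecture.

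Two caveats. First, your claim that (ii)$\Rightarrow$(i) is ``already in hand'' glosses a small point: invertibility of $\vec{U}_0$ lets you pull back any $\vec{p}^{(1)}\in\mK$ to a pre-impact velocity, but for that pre-impact state to be realizable you need $\vec{f}_0^T\vec{p}^{(0)}<0$, i.e.\ you need $\mK$ to contain a direction with $\vec{f}_0^T\vec{p}>0$ (which the restitution law then converts to the required sign); this holds for the cones the paper constructs but is not guaranteed by the bare definition of an effectively invariant cone. Second, and more importantly, your steps (iii)$\Rightarrow$(ii) for general $n$ and (i)$\Rightarrow$(iii) are exactly the parts that remain unproven, and your proposal does not supply the missing arguments --- it correctly names them as obstacles. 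So there is no error of approach, but the proposal does not advance the statement beyond the conjectural status it already has in the paper.
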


\begin{conjecture}
 The object $\mB$ undergoes PCC for appropriately chosen initial conditions
if and only if all eigenvalues of $\vec{U}_0\vec{P}_{-2\pi/n}\vec{U}_0\vec{P}_{2\pi/n}$are real.
\end{conjecture}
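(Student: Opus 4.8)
The plan is to prove the two implications separately, treating the forward direction (PCC possible $\Rightarrow$ all eigenvalues real) as essentially settled by machinery already in place, and concentrating the real effort on the converse. I would first observe that a genuine PCC event must involve a pair of \emph{adjacent} vertices: the first scenario in the proof of Lemma~\ref{lem:noPCC} shows that an alternating sequence on two non-adjacent vertices drives $\vec{p}$ into the tip of the cone, i.e. into full immobility, so it is really a CC. Hence PCC is governed by the two-impact return map $M:=\vec{U}_0\vec{P}_{-2\pi/n}\vec{U}_0\vec{P}_{2\pi/n}$ of \eqref{eq:PCCmap}. The forward implication is then obtained by lifting the spiraling argument in the second scenario of Lemma~\ref{lem:noPCC}, which nowhere uses the hypotheses of Theorem~\ref{thm:polygon2}: the dominant eigenvalue of $M$ is $+1$, with eigenvector the immobile adjacent-pair state, and if the remaining pair is complex then $\vec{p}^{(t+2k)}$ spirals, sending vertex $1$ upward for infinitely many $k$, so the alternation cannot persist. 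This yields PCC $\Rightarrow$ all eigenvalues of $M$ real, unconditionally. (Since $\vec{U}_0\vec{P}_{-2\pi/n}\vec{U}_0\vec{P}_{2\pi/n}$ and $\vec{U}_0\vec{P}_{2\pi/n}\vec{U}_0\vec{P}_{-2\pi/n}$ share a spectrum, the ordering is immaterial.)

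For the converse I would follow the template of Theorem~\ref{thm:goy1}, using that PCC of a polygon is the exact analogue of CC of a rod, with the single-impact map $\vec{U}_0\vec{P}$ replaced by the two-impact map $M$. Assuming all eigenvalues of $M$ are real, its spectral radius is $+1>0$ (as established in the proof of Lemma~\ref{lem:noPCC}), which is the unique dominant eigenvalue whenever the other two lie strictly inside the unit disc; Theorem~\ref{thm:invconetheorem} then supplies an invariant proper cone $\mathcal{K}_{PCC}\subset\mathbb{V}$ containing the $+1$ eigenvector. I would intersect $\mathcal{K}_{PCC}$ with the constraints of Sec.~\ref{sec:order} — the rebound condition $\vec{f}_0^T\vec{p}\ge 0$ of \eqref{eq:felt1}, the cone $\mathcal{C}_1$ forcing vertex $1$ to strike next, and the descent condition \eqref{eq:vz<0} — and show, exactly as in the rod proof, that this intersection contains an $M$-invariant subcone through which the iterates $M^k\vec{p}^{(t)}$ converge to the immobile eigenvector while the inter-impact times decay geometrically, so the collisions accumulate in finite time. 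Choosing $\vec{p}^{(t)}$ in this subcone gives a velocity history consistent with an endless alternation on vertices $0$ and $1$.

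The hard part, and the reason the statement is only conjectured, is to promote this velocity-space construction to a genuine trajectory in the full six-dimensional state space; the cone argument certifies that vertices $0$ and $1$ \emph{can} chatter but is blind to positions, leaving two things to verify. First, the alternation must be self-sustaining: each impact must actually leave the other participating vertex heading toward $\mathcal{P}$, which requires controlling the possibly sign-alternating subdominant components of $M^k\vec{p}^{(t)}$; I would handle this by restricting $\vec{p}^{(t)}$ to the part of $\mathcal{K}_{PCC}$ spanned by the dominant and any non-negative subdominant eigendirection, so that $\vec{f}_0^T\vec{p}$ and $\vec{f}_1^T\vec{p}$ keep the correct sign along the orbit. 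Second, and most delicately, no spectator vertex $j\ge 2$ may reach $\mathcal{P}$ during the sequence; since the chatter accumulates in a finite time $T$ and all generalized velocities along the orbit are bounded, I would close this by choosing initial positions with vertices $0,1$ on $\mathcal{P}$ and every other vertex at height exceeding the maximal descent $\left(\sup_k\norm{\vec{p}^{(t+k)}}\right)T$, invoking the linear inter-impact kinematics and the penetration-free cone $\mathcal{F}$ to certify that the spectators stay clear throughout. Making this geometric bookkeeping rigorous uniformly in $n$ — rather than verifying it numerically as in Fig.~\ref{fig:simulations} — is the principal obstacle.
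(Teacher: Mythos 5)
The statement you are proving is presented in the paper as a \emph{conjecture}: the authors offer no proof of either direction, only Lemma~\ref{lem:noPCC} (which establishes necessity of real eigenvalues \emph{under the hypotheses of Theorem~\ref{thm:polygon2}}) together with numerical evidence that the condition is exact. Your proposal is therefore not reproducing a proof the paper contains, and as a freestanding argument it has a genuine gap already in the direction you declare ``essentially settled.'' You claim the forward implication holds unconditionally because the non-adjacent scenario ``drives $\vec{p}$ into the tip of the cone.'' But that step in the paper's proof of Lemma~\ref{lem:noPCC} explicitly uses the invariant cone $\mathcal{K}$ of Theorem~\ref{thm:polygon2}: the argument is that the only point \emph{of $\mathcal{K}$} with two non-adjacent vertices at rest is the origin. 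Without that hypothesis the conclusion fails --- for a square, $\vec{f}_0^T\vec{p}=\vec{f}_2^T\vec{p}=0$ forces only $v=\omega_y=0$ and leaves $\omega_x$ free, i.e.\ the body may still rotate about the axis through the two resting non-adjacent vertices. That is a bona fide PCC by the paper's definition, and it is governed by a map other than $M=\vec{U}_0\vec{P}_{-2\pi/n}\vec{U}_0\vec{P}_{2\pi/n}$, so realness of the eigenvalues of $M$ is not obviously necessary for it. You must either rule out non-adjacent PCC by an independent argument or restrict the claim as the paper does.

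For the converse, you have candidly identified the obstruction yourself: the cone construction lives in $\mathbb{V}$ and certifies only that a velocity history consistent with alternation on vertices $0$ and $1$ exists, while an actual PCC trajectory requires control of the configuration variables --- that each rebound sends the partner vertex downward along the whole orbit, that the inter-impact times form a summable sequence, and that no spectator vertex intercepts $\mathcal{P}$ first (the latter is delicate because the cones $\mathcal{C}_j$ of Sec.~\ref{sec:order} show that which vertex strikes next is position-dependent, not a free choice). There is also a smaller unproved step: applying Theorem~\ref{thm:invconetheorem} to $M$ requires that $+1$ be dominant in the sense of condition (ii), which you assume (``whenever the other two lie strictly inside the unit disc'') rather than establish. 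None of this is filled in, so the proposal is a reasonable program mirroring the rod analysis of Theorem~\ref{thm:goy1} but not a proof; this is consistent with the authors leaving the statement as a conjecture.
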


The conjecture predicts that a homogeneous, flat square plate with appropriate initial conditions may undergo CC if  the coefficient of restitution $\gamma$ is below $0.03$ and it always undergoes ICC otherwise. A homogeneous solid cube on the other hand has larger radius of gyration relative to its edge length, and it always undergoes ICC (even for $\gamma$ close to 0). The same conclusion holds for a dodecahedron. At the same time, a flat triangular plate, a regular tetrahedron and an octahedron may undergo PCC if $\gamma$ is below $0.01$, $0.025$ and $0.006$ or CC if $\gamma$ is below $0.04$, $0.116$, and $0.025$, respectively.  

In the following subsection, we will prove the equivalence of points (ii) and (iii) of Conjecture \ref{con:1} in the case of a square.

\subsection{A semi-analytic construction for squares}
\label{sec:square}

We begin the construction of an appropriate cone $\mathcal{K}$ with several steps of preparation. 

We will consider the central projection of velocity space to the plane $\mathcal{S}$ (as in Fig. \ref{fig:sajatertekek}). Several points will be identified in this plane, which are illustrated by Fig. \ref{fig:conedef}.

\begin{figure}
\begin{center}
\includegraphics[width=84mm]{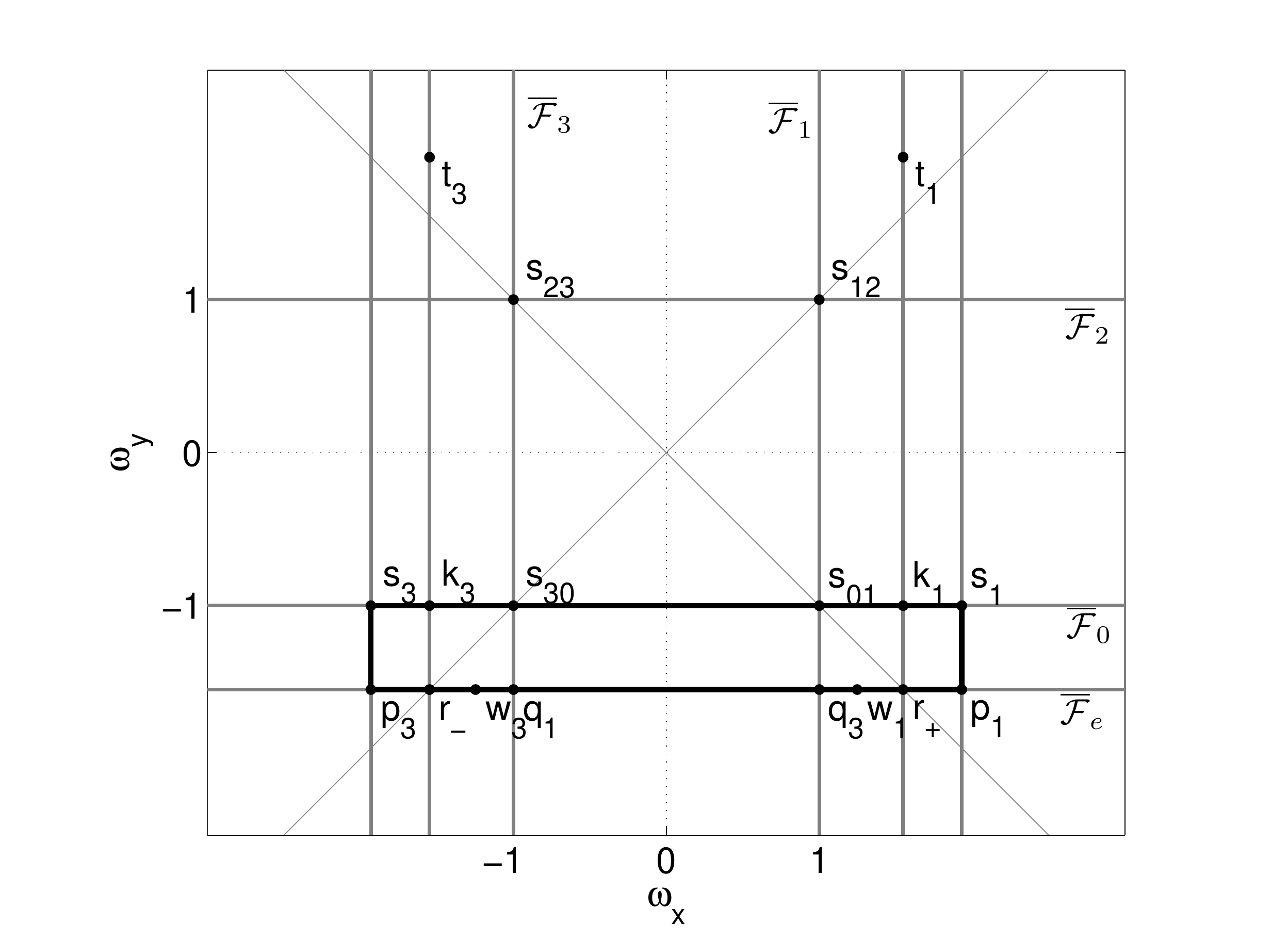}
\caption{Projection of $\mathcal{K}$ to the plane $\mathcal{S}$}
\label{fig:conedef}
\end{center}
\end{figure}

Recall that the planes $\mathcal{F}_i$ contain those points in $\mathbb{C}$ for which the height of one of the vertices is 0. There are four planes in $\mathbb{V}$, which are in the exact same positions as $\mathcal{F}_i$ in $\mathbb{C}$. These planes contain those points, for which the velocity of one of the points is zero. These planes project to $\mathcal{S}$ as four lines denoted by $\overline{\mathcal{F}_i}$ in Figure \ref{fig:conedef}. They enclose a square with vertices  $\vec{s}_{0,1}=[1,-1,-1]$, $\vec{s}_{1,2}=[1,1,-1]$, $\vec{s}_{2,3}=[-1,1,-1]$ and $\vec{s}_{3,0}=[-1,-1,-1]$. 

Assume that $\vec{U}_0\vec{P}_{\pi/2}$ has a unique, real and positive dominant eigenvalue $\lambda_{max}$. Let the corresponding eigenvector be $\vec{p}_{1}=[e_1,-e_2,-1]^T\in\mathcal{S}$ with $e_1,e_2\in\mR$. Our next goal is to find the approximate location of $\vec{p}_{1}$ within $\mathcal{S}$: 
\begin{lemma}\label{l:r2}
The coordinates of eigenvector ${\vec{p}}_{1}$ satisfy
$$e_1>e_2>1$$
\end{lemma}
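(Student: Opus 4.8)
The plan is to collapse all three inequalities into statements about the single scalar $\lambda_{max}$, and then to locate $\lambda_{max}$ through the characteristic polynomial. Writing $c:=(\gamma+1)/(\rho^2+1)$, equations \eqref{eq:U0} and \eqref{eq:Palpha} give
\begin{align*}
\vec{U}_0\vec{P}_{\pi/2}=
\begin{bmatrix}
0 & -1 & 0\\
1-c & 0 & c\\
\rho^2 c & 0 & c-\gamma
\end{bmatrix}.
\end{align*}
Reading the eigenvector equation $\vec{U}_0\vec{P}_{\pi/2}\vec{p}_1=\lambda_{max}\vec{p}_1$ row by row yields $e_2=\lambda_{max}e_1$, $(\lambda_{max}^2+1-c)e_1=c$ and $\rho^2 c\,e_1=c-\gamma-\lambda_{max}$. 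Putting $D:=\lambda_{max}^2+1-c$, the first two relations give $e_1=c/D$ and $e_2=\lambda_{max}c/D$, so the target $e_1>e_2>1$ reduces to three scalar claims: (a) $D>0$ (equivalently $e_1>0$); (b) $\lambda_{max}<1$ (which, given $e_1>0$, is exactly $e_1>e_2$); and (c) $\lambda_{max}c>D$ (exactly $e_2>1$).

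The structural fact that makes the location of $\lambda_{max}$ tractable is the determinant. The matrix $\vec{U}_0$ in \eqref{eq:U0} has eigenvalues $1,1,-\gamma$ and $\det\vec{P}_{\pi/2}=1$, so $\det(\vec{U}_0\vec{P}_{\pi/2})=-\gamma<0$, i.e. $\lambda_{max}$ times the product of the other two eigenvalues is $-\gamma$. Together with the hypothesis that $\lambda_{max}$ is real, positive and dominant this pins down the whole spectrum: a complex-conjugate pair would contribute a positive factor to the product, and three negative roots would contradict a positive dominant root, so the remaining two eigenvalues must be real of opposite sign, say $\lambda_{max}\ge\lambda_+>0>\lambda_-$ with $\lambda_{max}\ge|\lambda_-|$. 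I would then settle (a): for $c\le1$ one has $D=\lambda_{max}^2+1-c\ge\lambda_{max}^2>0$; for $c>1$ the characteristic relation $(c-\gamma-\lambda)(\lambda^2+1-c)=\rho^2c^2>0$ evaluated at $\lambda_-$ (whose first factor $c-\gamma-\lambda_-$ is positive) forces $\lambda_-^2+1-c>0$, hence $|\lambda_-|>\sqrt{c-1}$, and if $D\le0$ we would get $\lambda_{max}\le\sqrt{c-1}<|\lambda_-|$, contradicting dominance.

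For (b) I would use the monic characteristic polynomial $q(\lambda)=\lambda^3-(c-\gamma)\lambda^2+(1-c)\lambda+\gamma$, whose constant term is the negated determinant. Since $q(1)=2(1-c+\gamma)=2\rho^2 c>0$, we have $\lambda_{max}\ne1$ and $(1-\lambda_{max})(1-\lambda_+)(1-\lambda_-)>0$; because $\lambda_-<0$ makes $1-\lambda_->0$, it remains only to exclude $\lambda_{max},\lambda_+>1$ simultaneously. But then $\lambda_{max}+\lambda_+>2$, while the trace gives $\lambda_{max}+\lambda_++\lambda_-=c-\gamma<1$, forcing $\lambda_-<-1$; the product would then exceed $1$ in modulus, contradicting $|\lambda_{max}\lambda_+\lambda_-|=\gamma<1$. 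Hence $\lambda_{max}<1$. Finally (c) falls out of a clean identity: with $g(\lambda):=\lambda^2-c\lambda+1-c$ (so that $\lambda_{max}c>D\iff g(\lambda_{max})<0$), one verifies $q(\lambda)=\lambda\,g(\lambda)+\gamma(\lambda^2+1)$, whence $q(\lambda_{max})=0$ gives $g(\lambda_{max})=-\gamma(\lambda_{max}^2+1)/\lambda_{max}<0$.

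The part I expect to be genuinely delicate is locating $\lambda_{max}$ for claims (a) and (b): neither $e_1>0$ nor $\lambda_{max}<1$ follows from a single eigen-relation, and both require the combined leverage of the determinant $(-\gamma)$, the trace $(c-\gamma<1)$ and the value $q(1)=2\rho^2 c$, together with the dominant-real-positive hypothesis that excludes the complex and all-negative spectra. By contrast, once the determinant is recognized as $-\gamma$, the inequality $e_2>1$ — which at first sight looks like the hard one — drops out immediately from the factorization $q(\lambda)=\lambda g(\lambda)+\gamma(\lambda^2+1)$.
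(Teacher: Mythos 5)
Your proof is correct, and it takes a genuinely different --- and in fact stronger --- route than the paper's. I checked the matrix $\vec{U}_0\vec{P}_{\pi/2}$, the three eigenvector relations, the characteristic polynomial $q(\lambda)=\lambda^3-(c-\gamma)\lambda^2+(1-c)\lambda+\gamma$ (hence $\det(\vec{U}_0\vec{P}_{\pi/2})=-\gamma$ and $q(1)=2(1-c+\gamma)=2\rho^2c$), and the identity $q(\lambda)=\lambda g(\lambda)+\gamma(\lambda^2+1)$; all are right. The paper's own proof obtains $\lambda_{max}\le 1$ from an energy argument (impacts cannot increase kinetic energy), writes $e_1=(\gamma+1)/(\lambda^2\rho^2+\lambda^2+\rho^2-\gamma)$ and $e_2=\lambda_{max}e_1$, declares $e_1>e_2$ to follow immediately, and then states that the remaining inequality $e_2>1$ ``was verified numerically.'' Your argument supplies exactly the pieces the paper leaves out: positivity of $e_1$ (equivalently of the denominator $D$, which is not automatic when $\rho^2<\gamma$ makes $c>1$) via the dominance comparison with the negative eigenvalue $\lambda_-$; strictness of $\lambda_{max}<1$ from $q(1)>0$ combined with the trace and determinant bounds rather than from energy; and, most importantly, an analytic proof of $e_2>1$ from the factorization $q(\lambda)=\lambda g(\lambda)+\gamma(\lambda^2+1)$, replacing the paper's numerical check. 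What the paper's route buys is brevity and a physical interpretation of $\lambda_{max}\le 1$; what yours buys is a complete, self-contained proof of all three inequalities. The only cosmetic point is your sentence ruling out the alternative spectra: the clean statement is simply that $\lambda_{max}>0$ together with $\lambda_{max}\lambda_+\lambda_-=-\gamma<0$ forces the product of the other two eigenvalues to be negative, which simultaneously excludes a complex-conjugate pair and two real roots of equal sign.
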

\begin{proof} [Proof of Lemma \ref{l:r2}]
Impacts do not increase the kinetic energy of the object, from which it is easy to show that the dominant eigenvalue of $\vec{U}_0\vec{P}_{\pi/2}$ satisfies $\lambda_{max} \leq 1$.
From the fact that $\vec{p}_{1}$ is an eigenvector, it is also easy to derive that
\begin{align}
e_1=\frac{(\gamma+1)}{\lambda^2 \rho^2 + \lambda^2 + \rho^2 - \gamma} \quad e_2=\lambda_{max} e_1.
\end{align}

from which the $e_1>e_2$ relation immediately follows. The other relation ($e_2>1$) was verified numerically.

\end{proof}

Lemma \ref{l:r2} means that $\vec{p}_{1}$ is below $\overline{\mathcal{F}}_0$ and on the right side of the $\omega_x=\omega_y$ line in Fig. \ref{fig:conedef}. We now proceed by locating the dominant eigenvector of  $\vec{U}_0\vec{P}_{3\pi/2}$. The impact map $\vec{U}_0$ has a trivial invariance property:
$$
\vec{U}_0\vec{T}\vec{p}=\vec{TU}_0\vec{p}
\quad
\vec{T}=
\left[
\begin{matrix}
-1&0&0\\
0&1&0\\
0&0&1
\end{matrix}
\right]
$$
and it is also trivial that $\vec{P}_{3\pi/2}=\vec{TP}_{\pi/2}\vec{T}$. These two relations imply that the maps $\vec{U}_0\vec{P}_{\pi/2}$ and $\vec{U}_0\vec{P}_{3\pi/2}$ are related as 
\begin{align}
\vec{U}_0\vec{P}_{3\pi/2}=\vec{TU}_0\vec{P}_{\pi/2}\vec{T}
\label{eq:tuk}
\end{align}
Hence, the dominant eigenvalue and eigenvector of  $\vec{U}_0 \vec{P}_{3\pi/2}$ are $\lambda_{max}$ and $\vec{p}_3=\vec{T}\vec{p}_1$ as illustrated in Fig. \ref{fig:conedef}.

We also introduce the following new notations: 
 \begin{align}
 \vec{r}_+&=[e_2,-e_2,-1]^T\\
 \vec{q}_1&=[-1,-e_2,-1]^T\\
 \vec{k}_1&=[e2,-1,-1]^T\\
 \vec{s}_1&=[e_1,-1,-1]^T\\
 \vec{w}_1&=[\lambda_{max}^{-1},-e_2,-1]^T\\
 \vec{t}_1&=[e_2,e_1,-1]^T
 \end{align}
as well as
\begin{align}
\vec{r}_-=\vec{T}\vec{r}_+\\
\vec{q}_3=\vec{T}\vec{q}_1\\
\vec{k}_3=\vec{T}\vec{k}_1\\
\vec{s}_3=\vec{T}\vec{s}_1\\
\vec{w}_3=\vec{T}\vec{w}_1\\
\vec{t}_3=\vec{T}\vec{t}_1
\end{align}
 
According to Lemma \ref{l:r2}, $\vec{s}_3$, $\vec{k}_3$, $\vec{s}_{30}$, $\vec{s}_{01}$, $\vec{k}_1$, $\vec{s}_1$ lie along the line $\overline{\mathcal{F}_0}$ in the order of the list from left to right. At the same time, 
  $\vec{p}_3$, $\vec{r}_-$, $\vec{q}_1$, $\vec{q}_3$, $\vec{r}_+$, $\vec{p}_1$ lie along a line $\overline{\mathcal{F}}_e$ parallel to $\overline{\mathcal{F}_0}$ in the order of the previous list. The points $\vec{w}_3$,  $\vec{w}_1$ are also on $\overline{\mathcal{F}}_e$, and $\vec{w}_3$ is between $\vec{p}_3$ and $\vec{q}_1$ whereas $\vec{w}_1$ is between  $\vec{p}_1$ and $\vec{q}_3$. (See Fig. \ref{fig:conedef}.)

So far, we have defined all special points of $\mathcal{S}$, which will play a role in the upcoming construction. Now we present two lemmas on how the map $\vec{U}_0\vec{P}_{\pi/2}$ transforms the points defined above. First,

\begin{lemma}\label{l:r1}
The images of $\vec{s}_{30}$ and $\vec{q}_3$ under the map $\vec{U}_0\vec{P}_{\pi/2}$ are 
\begin{align} 
\vec{U}_0\vec{P}_{\pi/2}\cdot \vec{s}_{30}=\vec{s}_{01}\\
\vec{U}_0\vec{P}_{\pi/2}\cdot \vec{q}_1=\vec{k}_1
\end{align} 
\end{lemma}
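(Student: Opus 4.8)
The plan is to avoid multiplying out the full operator $\vec{U}_0\vec{P}_{\pi/2}$ and instead exploit one structural fact: the impact map $\vec{U}_0$ fixes every velocity at which vertex $0$ is already instantaneously at rest. This follows directly from \eqref{eq:impactequ}, which exhibits $\vec{U}_0-\vec{I}$ as the rank-one operator $\tfrac{-(1+\gamma)}{\vec{f}_0^T\vec{\Theta}^{-1}\vec{f}_0}\,\vec{\Theta}^{-1}\vec{f}_0\vec{f}_0^T$; hence $(\vec{U}_0-\vec{I})\vec{p}=\vec{0}$, i.e.\ $\vec{U}_0\vec{p}=\vec{p}$, whenever $\vec{f}_0^T\vec{p}=0$. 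For the square, \eqref{eq:fi} gives $\vec{f}_0=[0,-1,1]^T$, so $\vec{U}_0$ acts as the identity on the whole plane $\{\vec{p}:\ \omega_y=v\}$.

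With this in hand the proof splits into two short steps. First I would apply the rotation: using the $\pi/2$ instance of \eqref{eq:Palpha}, which rotates the $(\omega_x,\omega_y)$ pair by $\pi/2$ and leaves $v$ untouched, one reads off from $\vec{s}_{30}=[-1,-1,-1]^T$ and $\vec{q}_1=[-1,-e_2,-1]^T$ that $\vec{P}_{\pi/2}\vec{s}_{30}=[1,-1,-1]^T=\vec{s}_{01}$ and $\vec{P}_{\pi/2}\vec{q}_1=[e_2,-1,-1]^T=\vec{k}_1$. Second I would observe that both images satisfy $\vec{f}_0^T\vec{p}=0$ (each has $\omega_y=v=-1$), so by the fixed-point property they are unchanged by $\vec{U}_0$. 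Composing the two steps yields $\vec{U}_0\vec{P}_{\pi/2}\vec{s}_{30}=\vec{s}_{01}$ and $\vec{U}_0\vec{P}_{\pi/2}\vec{q}_1=\vec{k}_1$, as claimed.

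There is no substantive obstacle; the only subtlety is to confirm that the two equalities hold as exact identities in $\mathbb{V}$ rather than merely up to the projective rescaling used throughout the $\mathcal{S}$-picture. This is automatic, since every vector involved keeps third coordinate $-1$: the rotation preserves $v$, and $\vec{U}_0$ fixes the images outright, so no renormalization onto $\mathcal{S}$ is required. As a cross-check one may instead expand $\vec{U}_0\vec{P}_{\pi/2}$ from \eqref{eq:U0} and multiply directly; the $\tfrac{\gamma+1}{\rho^2+1}$ contributions cancel in the second and third rows, reproducing the same result and confirming the identities are independent of $\rho$ and $\gamma$.
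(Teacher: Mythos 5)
Your proof is correct and follows essentially the same route as the paper: apply the rotation $\vec{P}_{\pi/2}$ to get $\vec{s}_{01}$ and $\vec{k}_1$, note that both satisfy $\vec{f}_0^T\vec{p}=0$ (vertex $0$ instantaneously at rest), and conclude from the rank-one structure of $\vec{U}_0-\vec{I}$ in \eqref{eq:impactequ} that $\vec{U}_0$ fixes them. Your added remark that the identities hold exactly in $\mathbb{V}$, not merely after projective rescaling onto $\mathcal{S}$, is a worthwhile extra precaution but does not change the argument.
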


\begin{proof} $\vec{P}_{\pi/2}$ represents a rotation by angle $\pi/2$, which implies 
$\vec{P}_{\pi/2}\cdot \vec{s}_{30}=\vec{s}_{01}$ and $\vec{P}_{\pi/2}\cdot \vec{q}_1=\vec{k}_1$. Furthermore, $\vec{s}_{01},\vec{k}_1\in \mathcal{F}_0$. For all $\vec{p}\in \mathcal{F}_0$, the velocity of vertex 0 is 0. Hence the impact map $\vec{U}_0$ leaves such values of $\vec{p}$ unchanged, which implies the statement of the lemma.
\end{proof}

\begin{lemma}\label{l:r4}
 The images of $\vec{p}_1$ and $\vec{s}_1$ under $\vec{U}_0\vec{P}_{\pi/2}$ are
\begin{align} 
\vec{U}_0\vec{P}_{\pi/2}\cdot \vec{p}_1=\lambda_{max} \vec{p}_1\\
\vec{U}_0\vec{P}_{\pi/2}\cdot \vec{s}_1=\lambda_{max} \vec{w}_1
\end{align}

\end{lemma}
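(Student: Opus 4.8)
The first identity is immediate and needs no argument: $\vec{p}_1$ was introduced precisely as the dominant eigenvector of $\vec{U}_0\vec{P}_{\pi/2}$ belonging to $\lambda_{max}$, so $\vec{U}_0\vec{P}_{\pi/2}\vec{p}_1=\lambda_{max}\vec{p}_1$ holds by definition. The whole task therefore reduces to the second identity, and the plan is to prove it purely structurally, without ever invoking the closed form of $e_1$.

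First I would apply the rotation $\vec{P}_{\pi/2}$ to both $\vec{s}_1=[e_1,-1,-1]^T$ and $\vec{p}_1=[e_1,-e_2,-1]^T$, obtaining $\vec{P}_{\pi/2}\vec{s}_1=[1,e_1,-1]^T$ and $\vec{P}_{\pi/2}\vec{p}_1=[e_2,e_1,-1]^T$. The observation I want to highlight is that these two rotated vectors share their second and third coordinates (namely $e_1$ and $-1$) and differ only in the first; this is exactly why $\vec{s}_1$ was defined to have the same first coordinate $e_1$ as $\vec{p}_1$.

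Next I would exploit the block form of $\vec{U}_0$ in \eqref{eq:U0}, whose first column is $[1,0,0]^T$. This means the first coordinate of $\vec{U}_0\vec{x}$ equals the first coordinate of $\vec{x}$, while its second and third coordinates are functions of the second and third coordinates of $\vec{x}$ alone. Applying $\vec{U}_0$ to the two rotated vectors, their images must therefore agree in the second and third coordinates. Since $\vec{U}_0\vec{P}_{\pi/2}\vec{p}_1=\lambda_{max}\vec{p}_1=[\lambda_{max}e_1,-\lambda_{max}e_2,-\lambda_{max}]^T$ by the first identity, those shared coordinates equal $-\lambda_{max}e_2$ and $-\lambda_{max}$. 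The first coordinate of $\vec{U}_0\vec{P}_{\pi/2}\vec{s}_1$ equals the first coordinate of $\vec{P}_{\pi/2}\vec{s}_1$, namely $1$. Assembling the three coordinates gives $\vec{U}_0\vec{P}_{\pi/2}\vec{s}_1=[1,-\lambda_{max}e_2,-\lambda_{max}]^T$, and comparing with $\lambda_{max}\vec{w}_1=\lambda_{max}[\lambda_{max}^{-1},-e_2,-1]^T=[1,-\lambda_{max}e_2,-\lambda_{max}]^T$ completes the identity.

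I do not anticipate a genuine obstacle: the argument is a short coordinate computation whose only real content is the structural coincidence that $\vec{U}_0$ fixes the first coordinate and that $\vec{s}_1$ and $\vec{p}_1$ differ solely there. The one point I would verify with care is that the factor $\lambda_{max}^{-1}$ built into $\vec{w}_1$ is consistent with the first-coordinate normalization of $\mathcal{S}$, i.e. that $\lambda_{max}\cdot\lambda_{max}^{-1}=1$ matches the first coordinate of the image, and that only the relation $e_2=\lambda_{max}e_1$ recorded in the proof of Lemma \ref{l:r2}, rather than the explicit value of $e_1$, is actually needed.
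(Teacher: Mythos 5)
Your proof is correct, and it takes a genuinely different route from the paper's. The paper decomposes $\vec{s}_1=\vec{p}_1+\vec{s}_{30}-\vec{q}_1$, applies $\vec{U}_0\vec{P}_{\pi/2}$ by linearity using Lemma \ref{l:r1} (which rests on the fact that $\vec{U}_0$ fixes every velocity in the plane $\mathcal{F}_0$ where vertex $0$ is instantaneously at rest), and then needs the relation $e_2=\lambda_{max}e_1$ from the proof of Lemma \ref{l:r2} to collapse the first coordinate $\lambda_{max}e_1+1-e_2$ to $1$. You instead exploit a different structural feature of $\vec{U}_0$ in \eqref{eq:U0}: its first row and first column are both $[1,0,0]$, so the first coordinate passes through unchanged while the lower $2\times 2$ block acts independently on the remaining coordinates; since $\vec{P}_{\pi/2}\vec{s}_1=[1,e_1,-1]^T$ and $\vec{P}_{\pi/2}\vec{p}_1=[e_2,e_1,-1]^T$ agree in those last two coordinates, the eigenvector identity hands you the second and third coordinates of the image of $\vec{s}_1$ for free, and the first coordinate is $1$ by inspection. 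This buys you two small economies: you do not need Lemma \ref{l:r1} at all, and you do not need to invoke $e_2=\lambda_{max}e_1$ explicitly (it is subsumed in the eigenvector property). The paper's decomposition, on the other hand, reuses machinery ($\mathcal{F}_0$-fixed points, the points $\vec{s}_{30},\vec{q}_1,\vec{s}_{01},\vec{k}_1$) that is needed elsewhere in the construction of the invariant cone, which is presumably why the authors phrased it that way. Both arguments are sound.
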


\begin{proof}

The first statement is the immediate consequence of the fact that $\vec{p}_1$ is an eigenvector of $\vec{U}_0\vec{P}_{\pi/2}$.
In order to prove the second statement, we decompose $s_1$ as
$$
\vec{s}_1=\vec{p}_1+\vec{s}_{30}-\vec{q}_1
$$
from which
\begin{align}
\begin{split}
\vec{U}_0\vec{P}_{\pi/2}\vec{s}_1&=\lambda_{max} \vec{p}_1+\vec{s}_{01}-\vec{k}_1 \\
&=[\lambda_{max} e_1+1-e_2, \ \lambda_{max} e_2, \ -\lambda_{max} ]^T\\
&=[1, \ \lambda_{max} e_2, \ -\lambda_{max} ]^T\\
&=\lambda \vec{w}_1
\end{split}
\end{align}

\end{proof}

Now we are ready to formulate and prove an important result of this section: 

\begin{lemma}
If $n=4$ and the dominant eigenvalue  of $\vec{U}_0\vec{P}_{\pi/2}$ is positive and real, then the cone generated by the points $\vec{p}_1$, $\vec{p}_3$, $\vec{s}_1$, $\vec{s}_3$ satisfies conditions 1 and 2 of Theorem \ref{thm:polygon2}.
\label{lem:square}
\end{lemma}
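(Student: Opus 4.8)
The plan is to carry out the whole argument in the central projection onto the plane $\mathcal{S}=\{\vec{u}_3^T\vec{p}=-1\}$ used in Fig.~\ref{fig:conedef}, where $\mathcal{K}$ is the axis-aligned rectangle with corners $\vec{p}_1,\vec{s}_1,\vec{s}_3,\vec{p}_3$, i.e. $\omega_x\in[-e_1,e_1]$ and $\omega_y\in[-e_2,-1]$. Condition~1 of Theorem~\ref{thm:polygon2} will be immediate: every generator satisfies $\vec{u}_3^T\vec{p}=-1<0$, so every nonnegative combination has $\vec{u}_3^T\vec{p}\le0$, which is \eqref{eq:vz<0}. For Condition~2 I must establish the three inclusions $\vec{U}_0\vec{P}_{-2i\pi/4}(\mathcal{K}\cap\mathcal{C}_i)\subseteq\mathcal{K}$ for $i=1,2,3$. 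Since $\mathcal{K}$ and each $\mathcal{C}_i$ are polyhedral cones, so is each $\mathcal{K}\cap\mathcal{C}_i$, and since $\vec{U}_0\vec{P}_{-2i\pi/4}$ is linear it will suffice to list the finitely many extreme rays of $\mathcal{K}\cap\mathcal{C}_i$, map each one, and check that it lands in $\mathcal{K}$; convexity of $\mathcal{K}$ then yields the full inclusion.

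First I would use the reflection $\vec{T}=\mathrm{diag}(-1,1,1)$ to cut the work. Because $\vec{T}\vec{p}_1=\vec{p}_3$ and $\vec{T}\vec{s}_1=\vec{s}_3$, the cone $\mathcal{K}$ is $\vec{T}$-invariant; using $\vec{U}_0\vec{T}=\vec{T}\vec{U}_0$ together with \eqref{eq:tuk} one verifies $\vec{T}\mathcal{C}_1=\mathcal{C}_3$, $\vec{T}\mathcal{C}_2=\mathcal{C}_2$, that $\vec{T}$ conjugates the $i=1$ map $\vec{U}_0\vec{P}_{-\pi/2}$ into the $i=3$ map $\vec{U}_0\vec{P}_{\pi/2}$, and that it fixes the $i=2$ map $\vec{U}_0\vec{P}_{\pi}$. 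Consequently the $i=1$ inclusion follows formally from the $i=3$ one, and only the cases $i=3$ and $i=2$ need genuine treatment.

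For $i=3$ the relevant map is $\vec{U}_0\vec{P}_{\pi/2}$, and this is exactly the case the preparatory lemmas were built for. Reading \eqref{eq:felt1}--\eqref{eq:felt4} with $j=3$, the trace of $\mathcal{C}_3$ on $\mathcal{S}$ is the region $\{\omega_x\ge-1,\ \omega_y\le-1\}$, so $\mathcal{K}\cap\mathcal{C}_3$ is the sub-rectangle $\omega_x\in[-1,e_1]$, $\omega_y\in[-e_2,-1]$ whose four extreme rays are precisely $\vec{s}_1,\vec{p}_1,\vec{s}_{30},\vec{q}_1$. Lemmas~\ref{l:r1} and~\ref{l:r4} then give their images $\lambda_{max}\vec{w}_1$, $\lambda_{max}\vec{p}_1$, $\vec{s}_{01}$, $\vec{k}_1$; invoking $e_1>e_2>1$ and $0<\lambda_{max}\le1$ from Lemma~\ref{l:r2} (so that $\lambda_{max}^{-1}=e_1/e_2<e_1$ and $e_2<e_1$), each of these projects back inside the rectangle, which closes this case.

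The remaining case $i=2$, with map $\vec{U}_0\vec{P}_{\pi}$, is where I expect the real difficulty. The conditions \eqref{eq:felt1}--\eqref{eq:felt4} with $j=2$ carve $\mathcal{K}$ down to the isosceles trapezoid whose extreme rays are $\vec{s}_{30},\vec{s}_{01},\vec{r}_+,\vec{r}_-$, and by the $\vec{T}$-symmetry above it is enough to track $\vec{s}_{30}$ and $\vec{r}_+$. Unlike $i=3$, no prepared identity is available, so I would compute $\vec{U}_0\vec{P}_{\pi}\vec{s}_{30}$ and $\vec{U}_0\vec{P}_{\pi}\vec{r}_+$ directly from \eqref{eq:U0}, eliminate $e_1,e_2$ through the eigenvalue relations in the proof of Lemma~\ref{l:r2}, and show that the two image rays have nonpositive third component and project into the rectangle $\mathcal{K}$. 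These reduce to polynomial inequalities in $\rho$ and $\gamma$ which, I anticipate, hold exactly on the region where $\vec{U}_0\vec{P}_{\pi/2}$ has a real positive dominant eigenvalue, so the reality/positivity hypothesis must be used in an essential way here (e.g. to control the sign of the discriminant and trace entering $\lambda_{max}$). Turning this final verification into a clean, non–brute-force argument is the crux of the proof.
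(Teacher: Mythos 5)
Your treatment of Condition 1 and of the maps $\vec{U}_0\vec{P}_{\pi/2}$ and $\vec{U}_0\vec{P}_{3\pi/2}$ (the cases $i=3$ and $i=1$) matches the paper's proof: the same generators $\vec{s}_{30},\vec{q}_1,\vec{p}_1,\vec{s}_1$ of $\mathcal{K}\cap\mathcal{C}_3$, the same use of Lemmas \ref{l:r1}, \ref{l:r2} and \ref{l:r4} to place their images back in $\mathcal{K}$, and the same reduction of $i=1$ to $i=3$ via the reflection $\vec{T}$ and \eqref{eq:tuk}.

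The gap is the case $i=2$. You leave the inclusion $\vec{U}_0\vec{P}_{\pi}(\mathcal{K}\cap\mathcal{C}_2)\subseteq\mathcal{K}$ as a brute-force computation whose outcome you only anticipate, and you state explicitly that making it rigorous is the unresolved crux. The ``prepared identity'' you say is unavailable is in fact available: write $\vec{U}_0\vec{P}_{\pi}=\vec{U}_0\vec{P}_{\pi/2}\vec{P}_{\pi/2}$ and observe that the pure rotation $\vec{P}_{\pi/2}$ sends the generators $\vec{s}_{30}$ and $\vec{r}_-$ of $\mathcal{K}\cap\mathcal{C}_2$ to $\vec{s}_{01}$ and $\vec{r}_+$ respectively, both of which lie in $\mathcal{K}\cap\mathcal{C}_3$ (their projections $[1,-1]$ and $[e_2,-e_2]$ sit inside the rectangle $\omega_x\in[-1,e_1]$, $\omega_y\in[-e_2,-1]$ by Lemma \ref{l:r2}). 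Hence their images under $\vec{U}_0\vec{P}_{\pi}$ coincide with images under $\vec{U}_0\vec{P}_{\pi/2}$ of points already shown to be carried into $\mathcal{K}$ in your $i=3$ case; the remaining two generators $\vec{s}_{01}$ and $\vec{r}_+$ follow by the $\vec{T}$-symmetry (or by the analogous factorization through $\vec{P}_{3\pi/2}$ and the $i=1$ case). No new polynomial inequalities in $\rho$ and $\gamma$ are needed, and the reality and positivity of $\lambda_{max}$ enter only through Lemma \ref{l:r2} and the $i=3$ computation, exactly as in the parts you did complete. Until this step is supplied, the proof of Condition 2 is incomplete.
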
 

\begin{proof}[Proof of Lemma \ref{lem:square}:]
Equation \eqref{eq:vz<0} is satisfied by all 4 generating vectors, and thus by every point in the cone. Hence, condition 1 of Theorem \ref{thm:polygon2} is satisfied.

Condition 2 of the theorem requires effective invariance with respect to 3 maps and 3 cones. Below, we discuss each map one by one.

\begin{itemize}
\item \textbf{Map $\vec{U}_0\vec{P}_{\pi/2}$:} the corresponding condition is $\mathcal{C}_3$ (since $\vec{P}_{\pi/2}=\vec{P}_{-3\pi/2}$ ). The projection of the cone $\mathcal{C}_3$ to $\mathcal{S}$ is shown in Fig. \ref{fig:conditions} (f), from which  $\mK\cap\mathcal{C}_3$ is the cone generated by points $\vec{s}_{30}$, $\vec{q}_1$, $\vec{p}_1$ and $\vec{s}_1$ (horizontally hatched rectangle in Fig. \ref{fig:coneim} a)). According to Lemma \ref{l:r1} and Lemma \ref{l:r4}, the images of all these generating vectors (and thus the image of the entire cone $\mK\cap\mathcal{C}_3$)  are inside $\mK$ (vertically hatched rectangle in Fig. \ref{fig:coneim} a)).

\begin{figure}
\centering
    \begin{subfigure}{0.5\textwidth}
        \centering
        \includegraphics[width=84mm]{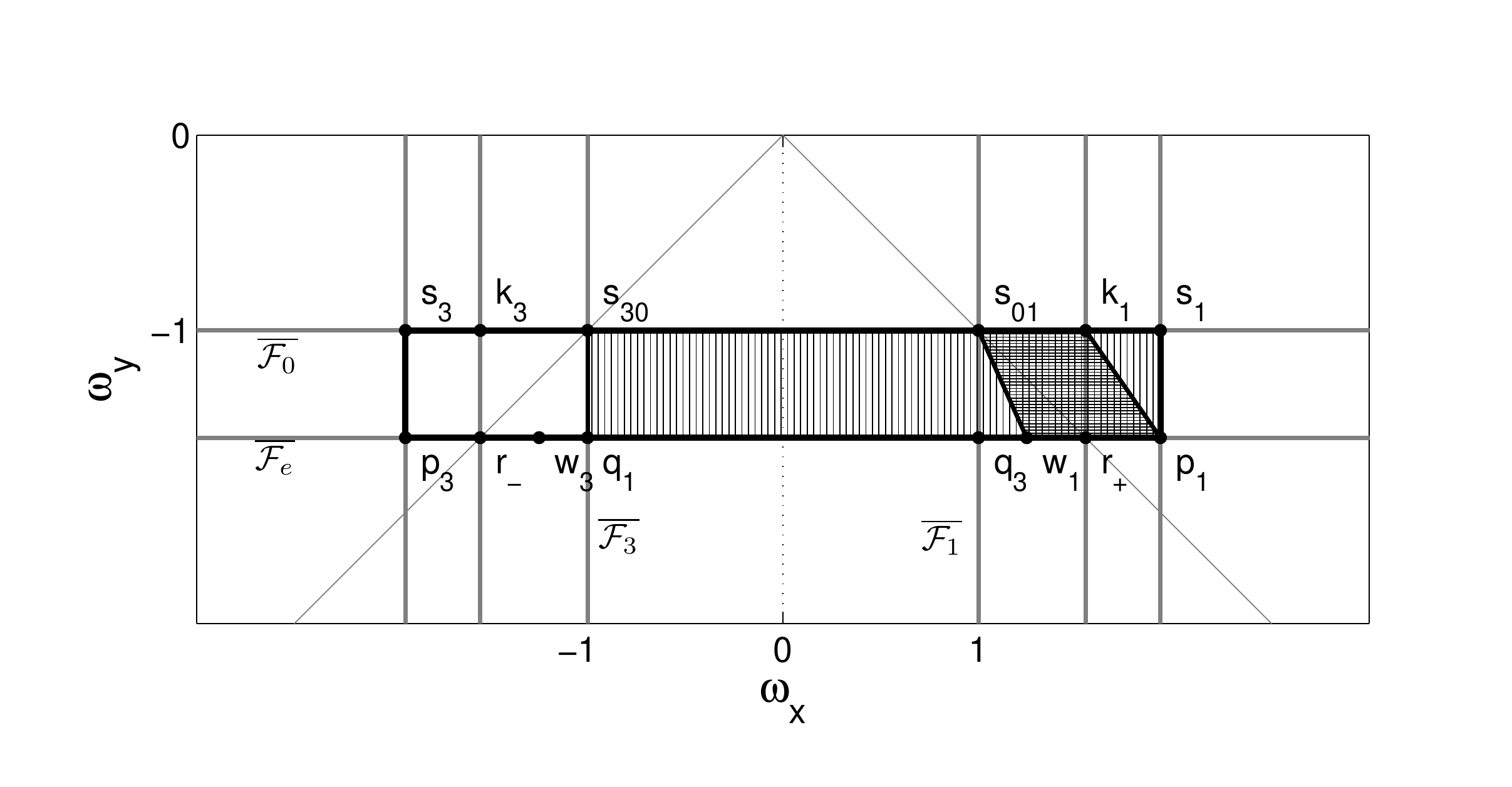}
        \caption{}
    \end{subfigure}%
    \\
        \centering
    \begin{subfigure}{0.5\textwidth}
        \centering
        \includegraphics[width=84mm]{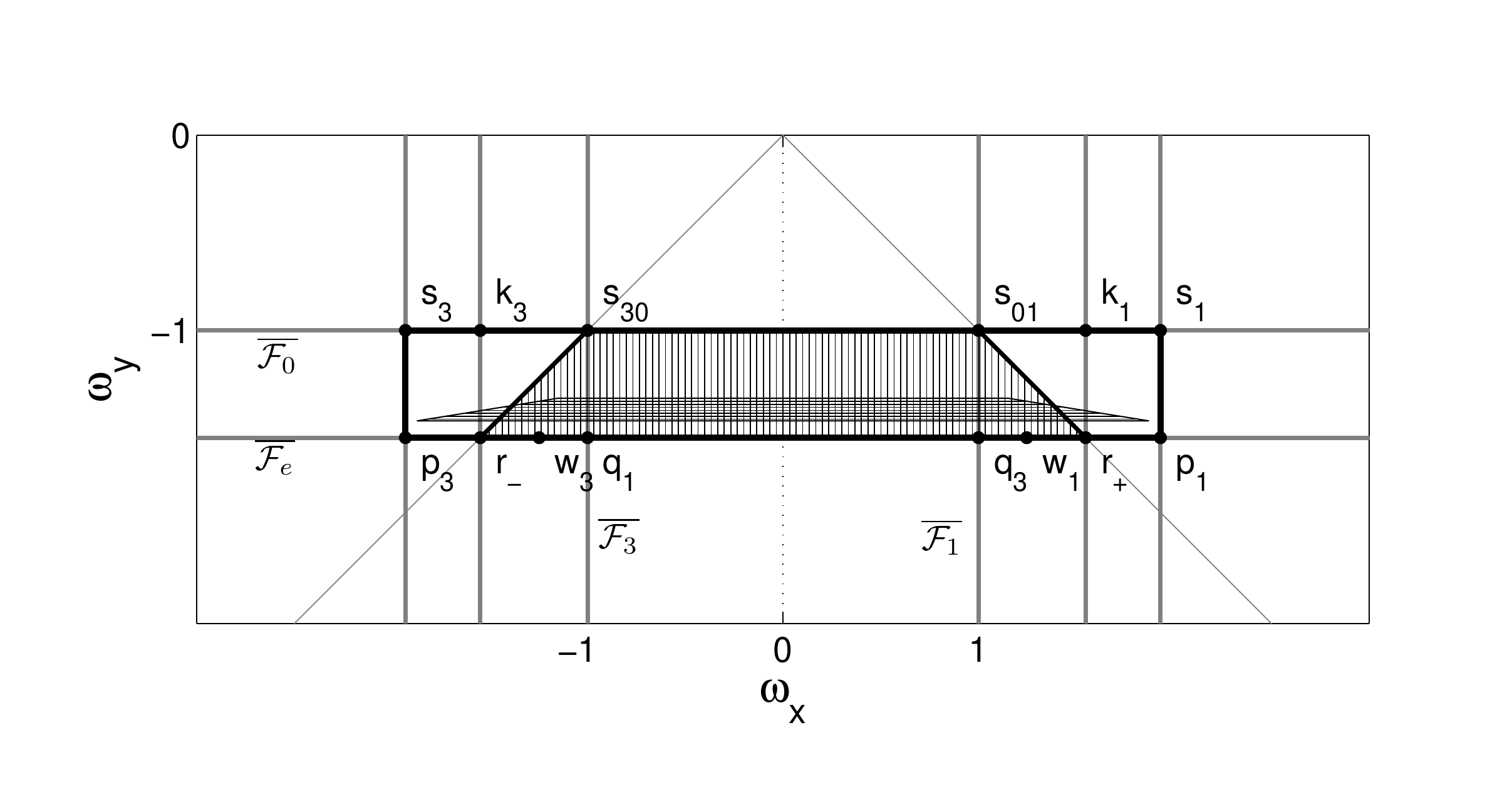}
        \caption{}
    \end{subfigure}%
\caption{a):Central projection of $\mathcal{K}$ (empty rectangle) of $\mathcal{K}\cap \mathcal{C}_3$ (vertical hatching) and of $\vec{U}_0\vec{P}_{\pi/2}(\mathcal{K}\cap \mathcal{C}_3)$ (horizontal hatching) to the plane  $\mathcal{S}$. b): projection of $\mathcal{K}\cap \mathcal{C}_2$ (vertical hatching) and of $\vec{U}_0\vec{P}_{\pi}(\mathcal{K}\cap \mathcal{C}_2)$ (horizontal hatching) to the plane the plane $\mathcal{S}$}
\label{fig:coneim}
\end{figure}%

\item \textbf{Map $\vec{U}_0\vec{P}_{3\pi/2}$:} due to the symmetry relation \eqref{eq:tuk}, and the invariance of cone $\mK$ to the transformation $T$, the proof in this case is the same as in the previous one.

\item \textbf{Map $\vec{U}_0\vec{P}_{\pi}$:} The projection of the cone $\mathcal{C}_2$ to $\mathcal{S}$ is shown in Fig. \ref{fig:conditions} (d), from which $\mK \cap \mathcal{C}_2$ is a cone generated by points $\vec{s}_{30}$, $\vec{r}_-$, $\vec{r}_+$ and $\vec{s}_{01}$. We can write
\begin{align}
\vec{U}_0\vec{P}_{\pi}\vec{s}_{30}=\vec{U}_0\vec{P}_{\pi/2}\vec{P}_{\pi/2}\vec{s}_{30}=\vec{U}_0\vec{P}_{\pi/2}\vec{s}_{01}\\
\vec{U}_0\vec{P}_{\pi}\vec{r}_-=\vec{U}_0\vec{P}_{\pi/2}\vec{P}_{\pi/2}\vec{r}_-=\vec{U}_0\vec{P}_{\pi/2}\vec{r}_+
\end{align}
The points $\vec{s}_{01}$ and $\vec{r}_+$ are in $\mK\cap\mathcal{C}_3$, hence their images under the map $\vec{U}_0\vec{P}_{\pi/2}$ are in $\mK$ (see first part of proof). Hence, $\vec{U}_0\vec{P}_{\pi}\vec{s}_{30},\vec{U}_0\vec{P}_{\pi}\vec{r}_-\in\mK$
It can be proven in an analogous way that $\vec{U}_0\vec{P}_{\pi}\vec{s}_{01},\vec{U}_0\vec{P}_{\pi}\vec{r}_+\in\mK$, and thus the image of $\mK\cap\mathcal{C}_2$ is in $\mK$, completing the proof. The situation described above is illustrated by Fig. \ref{fig:coneim} b).

\end{itemize}
\end{proof}

Whether or not Condition 3 of Theorem \ref{thm:polygon2} is satisfied, depends on the initial velocity of the object. Below we formulate a sufficient condition of this scenario:
\begin{lemma}
If all vertices are moving downwards initially,  then $\vec{p}^{(1)}\in\mK$ and thus condition 3 of Theorem \ref{thm:polygon2} is fulfilled. \label{l:k0}
\end{lemma}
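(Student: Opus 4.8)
The plan is to carry out the whole argument in the central projection to $\mathcal{S}$ (the plane $\vec{u}_3^T\vec{p}=-1$) used throughout Sec.~\ref{sec:square}, where $\mK$ appears as the rectangle $\{\,|x|\le e_1,\ -e_2\le y\le -1\,\}$ with corners $\vec{p}_1,\vec{p}_3,\vec{s}_1,\vec{s}_3$. First I would translate the hypothesis: ``all vertices move downward'' means $\vec{f}_i^T\vec{p}^{(0)}<0$ for every $i$, and projecting these four inequalities to $\mathcal{S}$ shows that the initial velocity lands in the closed square bounded by the lines $\overline{\mathcal{F}_i}$, i.e.\ the square with corners $\vec{s}_{0,1},\vec{s}_{1,2},\vec{s}_{2,3},\vec{s}_{3,0}$. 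Since $\vec{p}^{(1)}=\vec{U}_0\vec{P}_{-i_0\pi/2}\vec{p}^{(0)}$ for some $i_0$, and every rotation $\vec{P}_{-i_0\pi/2}$ maps this square onto itself while preserving the ``all-downward'' property, it suffices to prove that $\vec{U}_0$ maps the cone over this square into $\mK$. Because $\vec{U}_0$ is linear and $\mK$ is a convex cone, this reduces to checking the images of the four corners.

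Two corners are immediate: $\vec{s}_{0,1}$ and $\vec{s}_{3,0}$ both satisfy $\vec{f}_0^T\vec{p}=0$ (vertex $0$ at rest), so $\vec{U}_0$ fixes them; as their projections are $[\pm1,-1,-1]$, they sit on the top edge $y=-1$ of the rectangle precisely because $e_1>1$ (Lemma~\ref{l:r2}). The work is in the remaining pair, and by the $\vec{T}$-symmetry used in Lemma~\ref{lem:square} it is enough to treat $\vec{s}_{1,2}=[1,1,-1]$. A direct application of \eqref{eq:U0} gives $\vec{U}_0\vec{s}_{1,2}=[1,\,1-2c,\,d]$ with $c:=(\gamma+1)/(\rho^2+1)$ and $d:=\rho^2c-c+\gamma$; I would first record that $d<0$, which follows from the third component equation of $\vec{U}_0\vec{P}_{\pi/2}\vec{p}_1=\lambda_{max}\vec{p}_1$, namely $\rho^2 c e_1-c+\gamma=-\lambda_{max}$, rewritten as $-d=c\rho^2(e_1-1)+\lambda_{max}>0$ (using $e_1>1$ and $\lambda_{max}>0$). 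Thus the image genuinely points downward and projects to $[-1/d,\,(2c-1)/d,\,-1]$, so membership in $\mK$ amounts to the three inequalities $|1/d|\le e_1$, $(2c-1)/d\le-1$, and $(2c-1)/d\ge-e_2$.

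The first two collapse cleanly: $e_1(-d)=c\rho^2e_1(e_1-1)+e_2>e_2>1$ settles the $x$-bound, and $(2c-1)/d\le-1$ reduces to $c(1+\rho^2)+\gamma-1=2\gamma\ge0$. The last inequality, $(2c-1)/d\ge -e_2$, is the main obstacle: written out it is an unpromising polynomial relation among $\gamma,\rho,\lambda_{max}$ constrained by the characteristic equation of $\vec{U}_0\vec{P}_{\pi/2}$. The key is to eliminate $\lambda_{max}$ using the eigenvector identities $e_2=\lambda_{max}e_1$ and $(1-c)e_1-c=-\lambda_{max}e_2$: substituting these, the factor $(e_1-1)$ cancels and the whole inequality collapses to the elementary statement $1-c\le c\rho^2 e_2$, which holds because $e_2>1$ together with $(\gamma+1)\rho^2-(\rho^2-\gamma)=\gamma(\rho^2+1)\ge0$. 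Once these three bounds are in hand, all four corner images lie in $\mK$, hence so does $\vec{p}^{(1)}$, establishing condition~3 of Theorem~\ref{thm:polygon2}. The only genuine subtlety is recognizing the substitutions that make the third inequality trivialize; everything else is forced by the bounds $e_1,e_2>1$ of Lemma~\ref{l:r2}.
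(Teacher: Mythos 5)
Your proof is correct, and it takes the dual route to the paper's. The paper works with preimages: it computes $\vec{U}_0^{-1}(\mathcal{K})$ by noting that $\vec{s}_1,\vec{s}_3$ are fixed (vertex $0$ at rest) and that the eigenvector relation gives $\vec{U}_0^{-1}\vec{p}_1=\lambda_{max}^{-1}\vec{P}_{\pi/2}\vec{p}_1$, so that $\vec{U}_0^{-1}(\mathcal{K})$ is the cone over the trapezoid $\vec{s}_1,\vec{s}_3,\vec{t}_1,\vec{t}_3$; it then intersects the four rotated copies $(\vec{U}_0\vec{P}_{i\pi/2})^{-1}(\mathcal{K})$ and observes that this intersection is the all-downward cone generated by the $\vec{s}_{i,j}$. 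You instead push the all-downward cone forward: after using the rotational symmetry to reduce to $\vec{U}_0$ alone, you check the images of the four corners directly, which for $\vec{s}_{1,2}$ (and its $\vec{T}$-image) amounts to three scalar inequalities that you close using the eigenvector identities $e_2=\lambda_{max}e_1$, $(1-c)e_1-c=-\lambda_{max}e_2$ and $\rho^2ce_1-c+\gamma=-\lambda_{max}$ together with $e_1>e_2>1$ from Lemma~\ref{l:r2}; I have verified these computations and they are sound (in particular $2c-1+d=2\gamma\ge 0$ and the reduction of the last inequality to $(e_1-1)(\rho^2ce_2+c-1)\ge 0$). Since $\vec{U}_0$ is invertible ($\det\vec{U}_0=-\gamma$), the two formulations are logically equivalent, but the trade-off is real: the paper's version yields a clean geometric picture (four rotated trapezoids whose intersection is exactly the downward square, cf.\ Fig.~\ref{fig:coneimk0}) at the price of an unproved geometric assertion about that intersection, whereas your version is more computational but every step is an explicit, checkable inequality and you only ever need containment, not equality.
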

\begin{proof}
Depending on the index of the vertex, which hits $\mP$ first, we have 
\begin{align}
(\vec{U}_0\vec{P}_{i\pi/2})^{-1}\vec{p}^{(1)}
\end{align}
with $i\in\{1,2,3,4\}$,
thus it is enough to show that the cone
\begin{align}
\mathcal{K}_0:=\cap_{i=0}^{3}\left(\vec{U}_0 \vec{P}_{i \pi/2}\right)^{-1}(\mathcal{K})
\end{align}
includes all points of $\mathbb{V}$ for which every vertex moves downwards. In the expression above, $\left(\vec{U}_0 \vec{P}_{i \pi/2}\right)^{-1}(\mathcal{K})$ is a shorthand notation for the transformed  image of $\mathcal{K}$ under the map $\left(\vec{U}_0 \vec{P}_{i \pi/2}\right)^{-1}$.

First, let us investigate the image $\vec{U}_0^{-1}(\mathcal{K})$. Since $\overline{\mathcal{F}}_0$ is the line corresponding to zero velocity of vertex $0$, we have $\vec{U}_0^{-1}\vec{s}_1=\vec{s}_1$ and $\vec{U}_0^{-1}\vec{s}_3=\vec{s}_3$. 
The eigenvector property of $\vec{p}_1$ means that $\left(\vec{U}_0 \vec{P}_{pi/2}\right)^{-1}\vec{p}_1=\lambda_{max}^{-1} \vec{p}_1$, which can be rearranged as
$$
\vec{U}_0^{-1}\vec{p}_1=\lambda_{max}^{-1}\vec{P}_{\pi/2}\vec{p}_1
$$
Hence the projection of $\vec{U}_0^{-1}\vec{p}_1$ to $\mathcal{S}$ is $\vec{t}_1$. Similarly, the projection of $\vec{U}_0^{-1}\vec{p}_3$ is $\vec{t}_3$. In sum, $\vec{U}_0^{-1}(\mathcal{K})$ is the cone spanned by $\vec{s}_1$, $\vec{s}_3$, $\vec{t}_1$, and $\vec{t}_3$. The related cones $(\vec{U}_0\vec{P}_{i\pi/2})^{-1}(\mK)$ ($i=1,2,3$) can be obtained simply by rotating the cone $\vec{U}_0^{-1}(\mathcal{K})$ with an angle of $i\pi/2$ (Fig. \ref{fig:coneimk0}). The intersection of the resulting four cones is the cone generated by the points $\vec{s}_{i,j}$. This cone contains  exactly those points for which every vertex of the square moves downwards, which completes our proof. \end{proof}

\begin{figure}[h]
\begin{center}
\includegraphics[width=84mm]{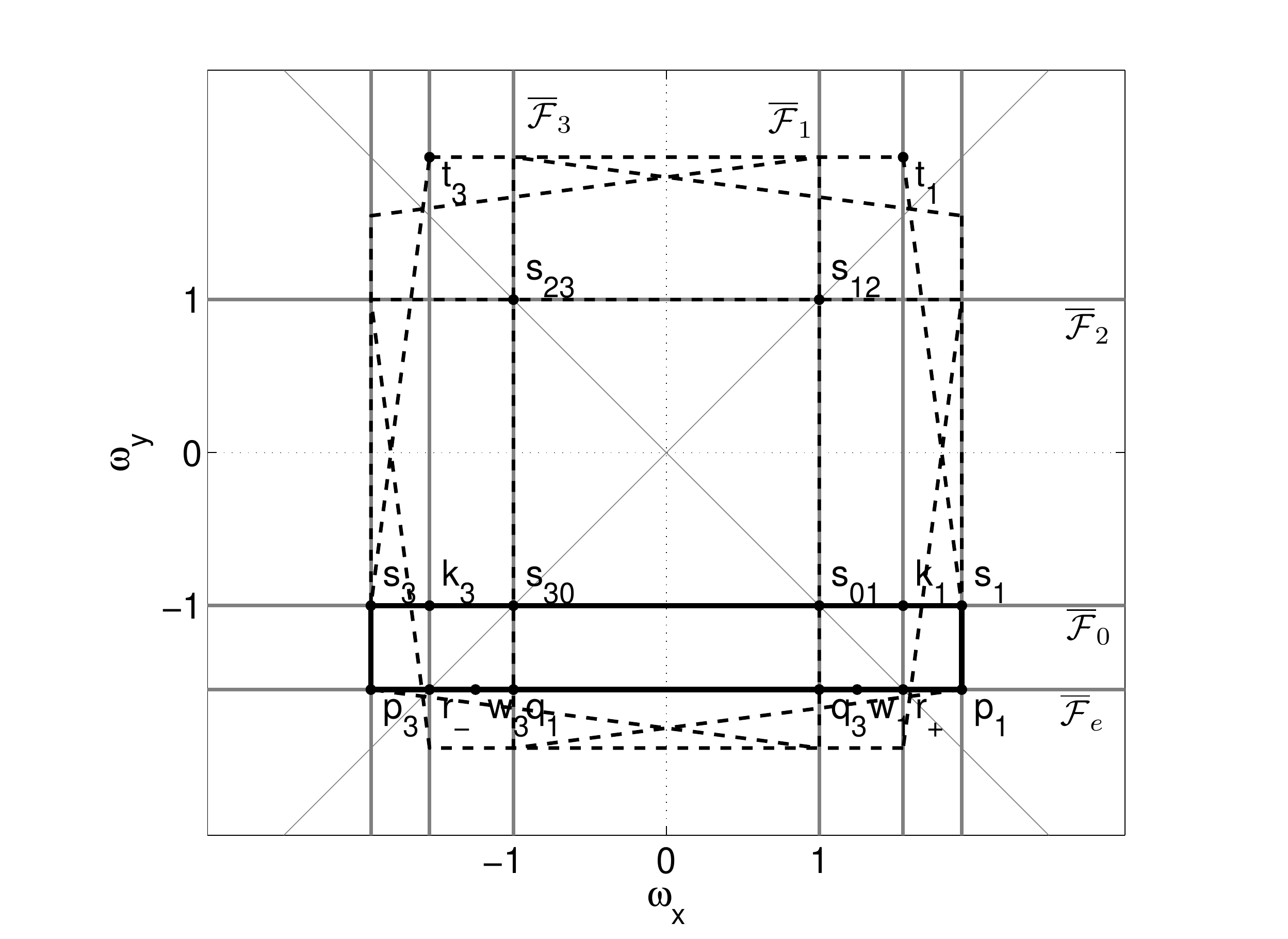}
\caption{Images of $\mathcal{K}$ under the four maps $\left(\vec{U}_0\vec{P}_{i\pi/4}\right)^{-1}$ with $i\in \{0,1,2,3\}$}.
\label{fig:coneimk0}
\end{center}
\end{figure}

Our last task is to summarize the results achieved so far. This is done in
\begin{theorem}
If all the vertices of a square move towards $\mathcal{S}$ initially, and the dominant eigenvalue of the map $\vec{U}_0\vec{P}_{\pi/2}$ is positive and real, then the square undergoes CC.
\label{thm:squarefinal}
\end{theorem}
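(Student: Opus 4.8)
The plan is to recognize this theorem as a synthesis of the preceding results: essentially all of the analytic content has already been deposited into Lemma \ref{lem:square}, Lemma \ref{l:k0}, Theorem \ref{thm:polygon2}, and Lemma \ref{lem:noPCC}. Consequently the proof is an assembly argument that verifies the three hypotheses of Theorem \ref{thm:polygon2} for an explicit cone, and then upgrades the conclusion from ``CC or PCC'' to ``CC''.

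First I would fix the candidate cone $\mathcal{K}\subset\mathbb{V}$ to be the one generated by the four vectors $\vec{p}_1$, $\vec{p}_3$, $\vec{s}_1$, $\vec{s}_3$. The hypothesis that the dominant eigenvalue of $\vec{U}_0\vec{P}_{\pi/2}$ is positive and real is exactly the premise of Lemma \ref{lem:square}, so that lemma immediately supplies conditions 1 and 2 of Theorem \ref{thm:polygon2}: every $\vec{p}\in\mathcal{K}$ satisfies \eqref{eq:vz<0}, and $\mathcal{K}$ is effectively invariant for the maps $\vec{U}_0\vec{P}_{\pi/2}$, $\vec{U}_0\vec{P}_{\pi}$, $\vec{U}_0\vec{P}_{3\pi/2}$ subject to the constraint cones $\mathcal{C}_1,\mathcal{C}_2,\mathcal{C}_3$.

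Next I would dispatch condition 3. The remaining hypothesis of the theorem, that all four vertices of the square move downward at the initial instant, is precisely the premise of Lemma \ref{l:k0}, which guarantees $\vec{p}^{(1)}\in\mathcal{K}$ irrespective of which vertex is struck first. With all three conditions of Theorem \ref{thm:polygon2} now in force, that theorem yields that the square undergoes either CC or PCC.

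The only genuinely delicate step is the last one, namely excluding PCC so as to obtain the stated ``CC'' conclusion rather than the weaker disjunction. Here I would invoke Lemma \ref{lem:noPCC}: since the hypotheses of Theorem \ref{thm:polygon2} hold, PCC cannot occur unless the matrix $\vec{U}_0\vec{P}_{\pi/2}\vec{U}_0\vec{P}_{-\pi/2}$ has real eigenvalues. For $n=4$ this matrix never has real eigenvalues over the admissible parameter range $(\rho,\gamma)$ (the spectral fact asserted in the discussion following Lemma \ref{lem:noPCC}), so PCC is impossible and complete chatter is the only remaining outcome. I expect the main obstacle to lie entirely in this final appeal: the assembly of Lemmas \ref{lem:square} and \ref{l:k0} through Theorem \ref{thm:polygon2} is mechanical, but converting ``CC or PCC'' into ``CC'' hinges on the claim that $\vec{U}_0\vec{P}_{\pi/2}\vec{U}_0\vec{P}_{-\pi/2}$ has no real spectrum for $n=4$, and one must be satisfied that this holds uniformly in $(\rho,\gamma)$ for the argument to be complete.
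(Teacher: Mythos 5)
Your proposal is correct and follows essentially the same route as the paper: the paper's proof likewise assembles Lemma \ref{l:k0} (condition 3), Lemma \ref{lem:square} (conditions 1 and 2), Theorem \ref{thm:polygon2} (yielding CC or PCC), and then Lemma \ref{lem:noPCC} to exclude PCC. Your caveat about the final step is well placed but applies equally to the paper, which explicitly omits the detailed spectral investigation of $\vec{U}_0\vec{P}_{-2\pi/n}\vec{U}_0\vec{P}_{2\pi/n}$ and merely asserts that its conditions are never satisfied for $n\geq 4$.
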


\begin{proof}

According to Lemma \ref{l:k0}, we have $\vec{p}^{(1)}\in\mathcal{K}$, and Lemma \ref{lem:square} implies that conditions 1 and 2 of Theorem \ref{thm:polygon2} are also satisfied. Hence, the object must undergo CC or PCC by Theorem \ref{thm:polygon2}. At the same time, Lemma \ref{lem:noPCC} outrules PCC, which completes the proof. 
\end{proof}

\subsection{Affine images of regular $n$-gons}
\label{sec:affine}

So far, we have been dealing with objects whose contact points form a regular $n$-gon, with emphasis on squares. Nevertheless, in the case of flat objects (i.e. $z_*=0$ in \eqref{eq:koord}), these results can also be applied to contact point arrangements, which are affine images of the $n$-gon. This class of arrangements includes arbitrary rectangles, parallelograms or triangles. 

In order to show this, we will compare the motion of the previously examined object $\mB$ with vertices $\vec{r}_i$ ($i=0,1,...,n$) and the motion of an affine image $\mB^*$ of this object, which is obtained by the following transformation:
$$
\vec{r} \rightarrow \vec{S}\vec{r},\;\vec{r}\in\mR^3
$$
with $\vec{S}$ being an invertible matrix of form:
\begin{align}
\vec{S}=  \begin{bmatrix}
    a & b & 0 \\
    c & d & 0 \\
    0 & 0 & 1
  \end{bmatrix}
  \;\;a,b,c,d\in\mR
\end{align}
The vertices of the new object are
\begin{align}
\vec{r}_i^*=\vec{S}\vec{r}_i 
\label{eq:rip}
\end{align}
We will assume that the mass density distributions $\delta(\vec{r})$ and $\delta^*(\vec{r})$ are also related as
\begin{align}
\delta^*(\vec{S}\vec{r})=\chi\cdot\delta(\vec{r})
\label{eq:deltarel}
\end{align}
for some $\chi\in\mR$. The coefficients of restitution are assumed to be the same for the two objects. In what follows, physical quantities associated with the transformed object will be denoted by an asterisk. 

In this subsection, we focus on flat objects, for which the $z$ coordinate of every point is close to 0. In this case, the generalized mass moment of inertia matrices of $\mB$ and $\mB^*$ are given by the following volume integrals:
\begin{align}
\begin{split}
\vec{\Theta}&=\int \delta(\vec{r})\vec{f}_i \vec{f}_i^T \ dV=\\
&=\int \delta(\vec{r})\vec{P}_{-\pi/2}(\vec{r}^l+\vec{u}_3)(\vec{r}^l+\vec{u}_3)^T\vec{P}_{\pi/2} \ dV
\end{split}
\\
\begin{split}
\vec{\Theta}^*&=\int \delta^*(\vec{r}) \vec{f}_i \vec{f}_i^T \ dV=\\
&=\int \delta^*(\vec{r})\vec{P}_{-\pi/2}(\vec{r}^l+\vec{u}_3)(\vec{r}^l+\vec{u}_3)^T\vec{P}_{\pi/2}\ dV
\end{split}
\end{align}
We can use \eqref{eq:deltarel} and the identity $\vec{S}\vec{u}_3=\vec{u}_3$ to establish the relation 
\begin{align}
\vec{\Theta}^*=\chi \ det(\vec{S})\cdot  \vec{S}_p\Theta \vec{S}_p^T
\label{eq:thetap}
\end{align}
where
\begin{align}
\vec{S}_p=\vec{P}_{-\pi/2}\vec{SP}_{\pi/2}
\end{align}

Let $\vec{q}$ and $\vec{p}$ denote the initial generalized coordinates and velocities of $\mB$, and let
\begin{align}
\vec{q}^*&=\vec{S}_p^{-T}\vec{q}
\label{eq:qrelation}\\
\vec{p}^*&=\vec{S}_p^{-T}\vec{p}
\label{eq:prelation}
\end{align} 
be the initial position and velocity of $\mB^*$. Then, the heights of vertex $i$ of the two objects can be determined with the aid of \eqref{eq:fi}. Since the objects are flat ($z_*=0$ in \eqref{eq:fi}), we have 
$$
h_i=\vec{f}_i^T\vec{q}=\left( \vec{P}_{-\pi/2} (\vec{r}_i+\vec{u}_3) \right)^T \vec{q} 
$$
in the case of $\mB$, and the exact same values for $\mB^*$ because:
\begin{align}
\begin{split}
h_i^*&=\left( \vec{P}_{-\pi/2} (\vec{r}_i^*+\vec{u}_3) \right)^T \cdot \vec{q}^*\\
&=\left( \vec{P}_{-\pi/2}\vec{S}(\vec{r}_i+\vec{u}_3)\right)^T \cdot \left(\vec{P}_{-\pi/2}\vec{SP}_{+\pi/2}\right)^{-T} \vec{q}\\
&=\left( \vec{P}_{-\pi/2} (\vec{r}_i+\vec{u}_3) \right)^T \cdot \vec{q}
\end{split}
\end{align}
Similarly, the normal velocities of vertices $i$ of the two objects are also equal. Hence, they  hit the ground at the same time and with the same vertex.

The post-collision velocities are determined by the impact maps:  $\vec{p}^+=\vec{U}_i\vec{p}$ and $\vec{p}^{*+}=\vec{U}_i^*\vec{p}^*$, respectively. Combining \eqref{eq:impactequ} with the relations \eqref{eq:thetap} and \eqref{eq:rip} yields
\begin{align}
\vec{U}_i^*=\vec{S}_p^{-T}\vec{U}_i\vec{S}_p^{T}.
\end{align}
It follows then that the post-impact velocity of $\mB^*$ becomes
\begin{align}
\begin{split}
\vec{p}^{*+}& = \vec{U}_i^*\vec{p}^* \\ 
 & = \vec{S}_p^{-T}\vec{U}_i \vec{S}_p^{T}\cdot \vec{S}_p^{-T}\vec{p}\\
   & = \vec{S}_p^{-T}\cdot \vec{U}_i\vec{p}\\
   & = \vec{S}_p^{-T}\vec{p}^+
   \end{split}
\end{align}
i.e. the relation \eqref{eq:prelation} is preserved by the impact maps. Hence we conclude that 
  
\begin{theorem}
If a flat object $\mB^*$ is an affine image of another flat object $\mB$ in the sense of \eqref{eq:rip} and \eqref{eq:deltarel}, futhermore the initial conditions of the two objects are related according to \eqref{eq:qrelation} and \eqref{eq:prelation}, then \eqref{eq:qrelation} and \eqref{eq:prelation} are preserved during the entire motion and thus $\mB^*$ undergoes CC if and only if $\mB$ does so.
\label{thm:affine}
\end{theorem}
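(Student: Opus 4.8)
The statement is essentially a corollary of the computations carried out immediately above it, so the plan is to organize those ingredients into a clean induction on the sequence of impacts. The central idea is that the fixed invertible linear map $\vec{S}_p^{-T}$ \emph{conjugates} the entire hybrid dynamics of $\mB$ to that of $\mB^*$: the two motions are identical up to this change of variables, so every qualitative feature is shared.

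First I would record the base case, which holds by hypothesis, since the initial data satisfy $\vec{q}^*=\vec{S}_p^{-T}\vec{q}$ and $\vec{p}^*=\vec{S}_p^{-T}\vec{p}$ through \eqref{eq:qrelation}--\eqref{eq:prelation}. Next I would treat a single free-flight phase. Because external forces are neglected, $\vec{p}$ is constant and the configuration evolves as $\vec{q}(\tau)=\vec{q}_0+\tau\vec{p}$; applying $\vec{S}_p^{-T}$ and using linearity gives $\vec{q}^*(\tau)=\vec{S}_p^{-T}\vec{q}(\tau)$ for all $\tau$, so the position relation is preserved between impacts. Since the vertex heights and the vertex normal velocities of the two objects coincide at every instant (the identity $h_i^*=h_i$ derived before the theorem, together with its velocity analogue $v_i^*=v_i$), the two bodies reach the plane at the same time and at the same vertex. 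This synchronization is the crucial point: it guarantees that at each impact the same index $i$ is active, hence the matching pair of maps $\vec{U}_i$ and $\vec{U}_i^*$ is applied.

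I would then treat a single impact, invoking the already-derived relation $\vec{U}_i^*=\vec{S}_p^{-T}\vec{U}_i\vec{S}_p^{T}$, which yields $\vec{p}^{*+}=\vec{U}_i^*\vec{p}^*=\vec{S}_p^{-T}\vec{U}_i\vec{p}=\vec{S}_p^{-T}\vec{p}^{+}$, so the velocity relation survives the collision. Combining the free-flight and the impact steps, the inductive hypothesis propagates from one impact to the next, and therefore $\vec{q}^*=\vec{S}_p^{-T}\vec{q}$ and $\vec{p}^*=\vec{S}_p^{-T}\vec{p}$ hold throughout the entire motion.

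Finally I would deduce the equivalence of outcomes. As $\vec{S}_p^{-T}$ is a fixed invertible linear map, $\vec{p}\to\vec{0}$ if and only if $\vec{p}^*\to\vec{0}$, and since the impact times agree, the accumulation (if any) occurs simultaneously for both bodies; the same argument applied to individual $v_i$ shows that the same vertices come to rest, so CC, PCC and ICC each occur for one body exactly when they occur for the other. Moreover $\vec{S}_p$ inherits the block structure of $\vec{S}$ and $\vec{P}_{\pi/2}$, so its last row and column are trivial and $\vec{u}_3^T\vec{p}^*=\vec{u}_3^T\vec{p}$ at all times; thus \eqref{eq:vz<0} is violated for $\mB^*$ precisely when it is for $\mB$, and Lemma \ref{lemma:vz<0} confirms the ICC equivalence. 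In particular $\mB^*$ undergoes CC if and only if $\mB$ does. I expect the main obstacle to be not any single computation (all the algebra is already prepared) but the synchronization argument: one must verify carefully that the two impact sequences never diverge, since the whole induction rests on the bodies sharing identical impact times and contact vertices.
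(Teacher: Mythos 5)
Your proposal is correct and follows essentially the same route as the paper: the conjugation relation $\vec{U}_i^*=\vec{S}_p^{-T}\vec{U}_i\vec{S}_p^{T}$ together with the equality of vertex heights and normal velocities (which synchronizes the impact times and active vertices) is exactly the paper's argument, which you merely organize as an explicit induction over the impact sequence. Your added observation that $\vec{S}_p$ inherits the block structure so that $\vec{u}_3^T\vec{p}^*=\vec{u}_3^T\vec{p}$ is a correct and welcome detail that the paper leaves implicit.
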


For example, a flat, homogeneous, rectangular plate is the affine image of a square plate. Theorem \ref{thm:squarefinal} and Theorem \ref{thm:affine} implies that both objects undergo CC if the initial velocities satisfy condition 3 of Theorem \ref{thm:squarefinal}, furthermore the coefficient of restitution $\gamma$ is below $0.03$. In a similar fashion, our numerical results suggest that a flat, homogeneous triangular plate of arbitrary shape undergoes CC if $\gamma<0.04$ and PCC if $\gamma<0.01$.

\section{Summary}\label{sec:sum}

In this paper, we have examined the chattering motion of three-dimensional objects with more than 2 potential contact points and rotational symmetry when hitting an immobile surface. The motion of the system was examined in velocity space (i.e. a 3D projection of the 6D state space), where it was modelled by a non-deterministic, discrete dynamical system. We have applied invariant cone theory, its recent generalization (common invariant cones) as well  as a novel generalization (effectively invariant cones) to find sufficient conditions of complete chatter, i.e. an infinite sequence of impacts driving the system to a complete halt.

We have developed a numerical algorithm to verify the sufficient condition for regular polygon-shaped arrangement of the contact points, as well as a semi-analytical verification in the case of a square. The dynamics of the system has also been examined via direct numerical simulation, which suggests that our sufficient conditions are indeed exact, moreover whether or not CC occurs can be predicted by solving a simple matrix eigenvalue problem as in the case of slender rods.

The paper leaves several open questions, including proofs of the conjectures drawn from numerical results, and the exact conditions of CC with respect to the initial velocity of the object. Our future plans additionally include several broader extensions of these results, including a detailed investigation of the case of inelastic impacts as well as the analysis of objects whose contact points are in irregular positions.

The results of the paper have many potential applications. The investigations of the rod problem by \cite{goyal1,goyal2} were motivated partially by the need to understand the motion of objects dropped to the floor in order to improve the shock protection of electronic devices. Needless to say, since these devices are three-dimensional blocks rather than slender rods, our new results represent a significant improvement in this direction.  

Chattering is tightly related to rocking block problems \cite{Housner}, which have been in the focus of interest for several decades mainly due to their role in earthquake design. The three-dimensional motion of rocking blocks was not investigated until recently \cite{konstantinidis2007dynamics,di2014comparison}. We believe that conditions of CC in three dimensions will help engineers in improving the earthquake-resistance of free-standing block-like structures (such as pillars of bridges).

Another delicate situation where chattering-type behaviour occurs is the docking of a spacecraft at another, or the landing of a spacecraft with multiple legs on a solid surface without active control. If the system is modelled as a rigid body and the landing takes place in a microgravitational environment, CC corresponds to successful landing whereas ICC means that the spacecraft either topples or leaves the landing site. A somewhat similar scenario has been realized during the recent Rosetta mission of the European Space Agency, when the three-legged lander unit Philae failed to anchor itself to the surface of comet C67-G, and tumbled above the comet surface for several hours. Identifying the final location of the lander required two years of active search by the mission team \cite{philaefound}.
 
 \appendix
\section{Appendix: proof of Lemma \ref{lem:cone}}

The four vectors listed  in Lemma \ref{lem:cone} belong to the cone $\mathcal{C}_j$, because they all satisfy \eqref{eq:vz<0}, furthermore it is easy to find a pair of points $\vec{q}_0 \in F_0 \cap \mathcal{F}$ and $\vec{q}_j \in F_j \cap \mathcal{F}$ such that $\vec{q}_j-\vec{q}_0$ is equal to any of these four vectors. Thus it suffices to prove that any vector satisfying the conditions of the lemma is inside $\mathcal{C}_j$.

Because of the conditions $\vec{q}_0 \in F_0 \cap \mathcal{F}$ and $\vec{q}_j \in F_j \cap \mathcal{F}$, we can write

\begin{align}
\vec{q}_0&=\alpha_1 \vec{m}_{0,1}+\alpha_{-1} \vec{m}_{n-1,0} 
\\
\vec{q}_j&=\beta_1 \vec{m}_{j,j+1}+\beta_{-1} \vec{m}_{j-1,j} 
\end{align}
with $\alpha_1,\alpha_{-1},\beta_1,\beta_{-1}\geq 0$, yielding
\begin{align}
\begin{split}
\vec{q}_j-\vec{q}_0&=\beta_1 \vec{m}_{j,j+1}+\beta_{-1} \vec{m}_{j,j-1}-\\
&-\alpha_1 \vec{m}_{0,1} -\alpha_{-1} \vec{m}_{n-1,0}\\
&= \beta_1 (\vec{m}_{j,j+1}-\vec{m}_{0,1})+\\
&+\beta_{-1}(\vec{m}_{j,j-1}-\vec{m}_{n-1,0})-\\
&-(\alpha_1-\beta_1) \vec{m}_{0,1} -(\alpha_{-1}-\beta_{-1}) \vec{m}_{n-1,0}
\end{split}\label{eq:conic}
\end{align}
This is a linear combination of the four vectors, which appear in the statement of the lemma, but the third and fourth coefficients are not necessarily positive. The constraint \eqref{eq:vz<0} can be expressed as
\begin{align}
\alpha_1+\alpha_{-1}\geq \beta_1+\beta_{-1}
\label{eq:alphabeta}
\end{align}
which implies that at most one of the coefficients $(\alpha_1-\beta_1)$ and $(\alpha_{-1}-\beta_{-1})$ in \eqref{eq:conic} is negative. Hence we have 3 possibilities with respect to the signs of these coefficients: $(\alpha_{-1}-\beta_{-1}),(\alpha_1-\beta_1)\geq 0$ or $(\alpha_{-1}-\beta_{-1}) \leq 0\leq (\alpha_1-\beta_1)$ or $(\alpha_{-1}-\beta_{-1})\geq 0\geq (\alpha_1-\beta_1)$.

\textbf{First case:} if $(\alpha_{-1}-\beta_{-1}) $ and $ (\alpha_1-\beta_1) \geq 0$; then (\ref{eq:conic}) implies that $\vec{q}_j-\vec{q}_0 \in\mathcal{C}_{j}$ and thus all vectors $\vec{p}$ satisfying \eqref{eq:i>>j} are also in $\mathcal{C}_{j}$. 

\textbf{Second case:} if $(\alpha_{-1}-\beta_{-1})\leq 0$, then we express $\vec{m}_{n-1,0}$ as
\begin{align}
\vec{m}_{n-1,0}=\gamma_{-1} \vec{m}_{j-1,j} +\gamma_1 \vec{m}_{j,j+1}+ \gamma_0 \vec{m}_{0,1} \label{eq:4thvertex}
\end{align}
where 
\begin{align}
\gamma_{-1}+\gamma_1+\gamma_{0}=1 \quad \vert \ \gamma_{-1}\leq 0;\ \gamma_{0},\gamma_{1}\geq 0
\label{eq:gammak}
\end{align}
We can rearrange \eqref{eq:4thvertex} as
\begin{align}
\begin{split}
\vec{m}_{n-1,0}&= \gamma_{-1} (\vec{m}_{j-1,j}-\vec{m}_{n-1,0})+\gamma_{-1} \vec{m}_{n-1,0} + \\
& +\gamma_1(\vec{m}_{j,j+1}-\vec{m}_{0,1})+(\gamma_1+\gamma_0) \vec{m}_{0,1}\\
&=\frac{\gamma_{-1}}{1-\gamma_{-1}}(\vec{m}_{j-1,j}-\vec{m}_{n-1,0})+\\
&+\frac{\gamma_1+\gamma_0}{1-\gamma_{-1}} \vec{m}_{0,1} +\\
& +\frac{\gamma_1}{1-\gamma_{-1}} (\vec{m}_{j,j+1}-\vec{m}_{0,1})
\end{split} 
\end{align}
which is then plugged into
(\ref{eq:conic}) to obtain $\vec{q}_j-\vec{q}_0$ as a conical combination of three vectors:
\begin{align*}
\vec{q}_j-\vec{q}_0&=\left(\underbrace{\beta_1}_{\geq 0}-
\underbrace{(\alpha_{-1}-\beta_{-1})}_{\leq 0}\frac{\overbrace{\gamma_1}^{\geq 0}}{1-\underbrace{\gamma_{-1}}_{\leq 0}}\right)(\vec{m}_{j,j+1}-\vec{m}_{0,1})+
\\
&+\left(\underbrace{\beta_{-1}}_{\geq 0}\frac{1}{1-\underbrace{\gamma_{-1}}_{\leq 0}}-\underbrace{\alpha_{-1}}_{\geq 0}\frac{\overbrace{\gamma_{-1}}^{\leq 0}}{1-\underbrace{\gamma_{-1}}_{\leq 0}}\right)(\vec{m}_{j,j-1}-\vec{m}_{n-1,0})+
\\
&+\left(\overbrace{(\alpha_1-\beta_1)+(\alpha_{-1}-\beta_{-1})}^{\geq 0 \ by \ \eqref{eq:alphabeta}}\underbrace{\frac{\gamma_1+\gamma_0}{1-\gamma_{-1}}}_{=1  \ by \ \eqref{eq:gammak}} \right) (-\vec{m}_{0,1})
\end{align*}

All three coefficients are positive, so we conclude that $\vec{q}_j-\vec{q}_0\in \mathcal{C}_{j}$ and thus all vectors $\vec{p}$ satisfying \eqref{eq:i>>j} are also in $\mathcal{C}_{j}$.

\textbf{Third case:} the proof is completely analogous to the second case.

\begin{acknowledgements}
This work has been supported by the National Research, Development, and Innovation Office of Hungary under grant 104501.
\end{acknowledgements}

\bibliographystyle{ieeetr}

\bibliography{2016_clattering}

\end{document}